\numberwithin{equation}{section}
\providecommand{\algorithmname}{Algorithm}
\newtheorem{theorem}{Theorem}[section]
\newtheorem{lem}{Lemma}[section]
\newtheorem{rem}{Remark}[section]
\newcounter{hypA}
\newenvironment{hypA}{\refstepcounter{hypA}\begin{itemize}
  \item[({\bf A\arabic{hypA}})]}{\end{itemize}}
\newcounter{hypB}
\newcounter{hypD}
\date{}
\begin{document}

\begin{center}

{\Large \textbf{Unbiased Parameter Estimation of Partially Observed Diffusions using Diffusion Bridges}}

\vspace{0.5cm}

MIGUEL ALVAREZ$^{1}$ \& AJAY JASRA$^{2}$ 

{\footnotesize $^{1}$Applied Mathematics and Computational Science Program, \\ Computer, Electrical and Mathematical Sciences and Engineering Division, \\ King Abdullah University of Science and Technology, Thuwal, 23955-6900, KSA.} \\
{\footnotesize $^{2}$School of Data Science,  The Chinese University of Hong Kong, Shenzhen,  Shenzhen,  CN.}\\
{\footnotesize E-Mail:\,} \texttt{\emph{\footnotesize miguelangel.alvarezballesteros@kaust.edu.sa}}, \\
\texttt{\emph{\footnotesize ajayjasra@cuhk.edu.cn}}

\begin{abstract}
In this article we consider the estimation of static parameters for partially observed diffusion processes with discrete-time observations over a fixed time interval.   In particular,  when one only has access to time-discretized solutions of the diffusions we build upon the works of \cite{ub_par,ub_grad} to devise a method that can estimate the parameters without time-discretization bias.  We leverage an identity associated to the gradient of the log-likelihood associated to diffusion bridges,  which has not been used before.
Contrary to the afore mentioned methods,   the diffusion coefficient can depend on the parameters and our approach facilitates the use of more efficient Markov chain sampling algorithms.  We prove that our estimator is unbiased with finite variance and demonstrate the efficacy of our methodology in several examples.
\\
\bigskip
\noindent \textbf{Keywords}: Markovian stochastic approximation, Parameter estimation, Diffusion processes, Diffusion Bridges. 
\end{abstract}

\end{center}

\section{Introduction}

Partially observed diffusion processes associated to discrete time observation problems can be found in a wide variety of applications such as finance,  econometrics and engineering;  see for example \cite{ub_par,cappe,chada,ub_grad} and the references therein.  The problem that we consider is as follows.  We have access to discrete time data,  which at each observation time are assumed to depend on the position of an unobserved (latent) diffusion process.  Added to this is a collection of unknown parameters which relate to both the latent diffusion and the observation density.  On the basis of a fixed data set,  we seek to estimate the unknown parameters of the statistical model. This problem  has been considered by many articles and a non-exhaustive list is \cite{ub_par,beskos1,chada,ub_grad,bayes_mlmc}. In most of these papers,  the authors consider a time-discretization of the diffusion process,  such as the Euler-Maruyama method and then use advanced computational methods based on Markov chain Monte Carlo (MCMC) and/or in combination with stochastic approximation (SA) methods, to estimate the parameters.  In many cases there is a time-discretization error in parameter estimates,  although \cite{ub_par,beskos1,chada,ub_grad} all show how to remove this time discretization error in some form.

We consider estimating the parameters of the model using likelihood-based methods  focussing on using the gradient of the log-likelihood (or score function).   In connection to this approach,  many approaches often resort to using the Girsanov change of measure approach (e.g.~\cite{ub_par,ub_grad}).  One of the main issues is then the computation of the gradient,  which using the Girsanov formula in \cite{ub_par,ub_grad} does not allow one to have parameter dependence in the diffusion coefficient,  except when the diffusion process is in one dimension and the Lamperti transformation is available.  In \cite{ub_grad} (see also \cite{beskos,non_synch,stanton}) the authors detail how one might use the ideas of diffusion bridges (see e.g.~\cite{schauer,vd_meulen_guided_mcmc}) to introduce a new identity for the score function that does not require that the diffusion coefficient is independent of the parameter value.  The objective of this work is to leverage upon this new identity to create a parameter estimation method for partially observed diffusions that 
is unbiased,  in a sense that the expectation of the estimator is equal to the true maximizer of the log-likelihood where the exact model (no time discretization) has been fitted. 

The line of methodology that we follow is based upon the double randomization method of \cite{ub_par} (based on the ideas of \cite{rhee},  see also \cite{matti}),  which has been empirically seen to be computationally more efficient than \cite{ub_grad}.  The approach of 
\cite{ub_par} is based upon two critical methods:  MCMC and SA.  In particular,  the idea is to use an MCMC method to sample from a particular probability distribution associated to the time-discretized diffusion (the smoother) and couplings there-of (full explanations are given in Sections \ref{sec:problem} and \ref{sec:method}) and to insert this into a SA based approach; this is simply randomized Markovian SA (MSA),  see \cite{andr} for example.  The contributions of this article then can be summarized as follows:
\begin{itemize}
\item{We develop a new MCMC method for the smoother in continuous time.}
\item{We develop a new MCMC method for the smoother when time is discretized and show how this algorithm can be coupled.}
\item{We show how these MCMC methods can be used to obtain provably unbiased and finite variance estimators of the unknown parameters.}
\item{We implement our methodology for several examples.}
\end{itemize}
The MCMC approaches that were used in \cite{ub_par} are based on the conditional particle filter (CPF) associated to \cite{mlpf} and coupled conditional particle filter (CCPF); see \cite{andrieu,ub_grad} and the references therein.
The bridge representation detailed in \cite{ub_grad} allows one to define a CPF in continuous time and employing backward sampling \cite{lind,whiteley} which is well known to perform better than the CPF \cite{singh}.  This continuous time representation is important as it facilitates a simple time-discretized version,  which as the time-discretization becomes finer converges to the original continuous time CPF with backward sampling.  In the approaches of \cite{ub_par,ub_grad} such a backward sampling CPF is possible,  but the variance of the weights employed in such an algorithm will explode exponentially as the 
time-discretization becomes finer; the algorithm becomes effectively useless as the time-discretization becomes finer.  This point has been realized in \cite{non_synch,stanton},  but to the best of our knowledge has never been implemented in practice.  The time discretized CPF is then extended to a coupled CCPF and these ingredients suffice to adopt the framework of \cite{ub_par}.  Under assumptions we then prove that our estimator is unbiased and of finite variance and then implement our methodology in several examples.  We remark that \cite{ub_par} did not prove that their estimators are of finite variance.

This article is structured as follows.  In Section \ref{sec:problem} the exact problem of interest is detailed.
Section \ref{sec:method} details the methodology that we use to estimate the parameters unbiasedly.  
Section \ref{sec:theory} presents our theoretical results and their implications.
Section \ref{sec:numerics} provides numerical simulations to demonstrate the methodology that we develop.
Appendix \ref{app:math} gives our mathematical proofs relating to the result in Section \ref{sec:theory}.

\section{Problem}\label{sec:problem}

\subsection{Model}

As the problem we consider is very similar to that in \cite{ub_par,ub_grad},  there is naturally some overlap in the presentation here and those afore-mentioned articles.
We consider the following diffusion process on the filtered probability space $(\Omega,\mathscr{F},\{\mathscr{F}_t\}_{t\geq 0},\mathbb{P}_{\theta})$, $\theta\in\Theta\subseteq\mathbb{R}^{d_{\theta}}$ with $\Theta$ being an open set.
\begin{equation}
dX_t = \mu_{\theta}(X_t)dt + \sigma_{\theta}(X_t)dW_t,\quad\quad X_0\sim\nu_{\theta},\label{eq:diff}
\end{equation}
where $X_t\in\mathbb{R}^d$, $\nu_{\theta}$ a probability measure on $(\mathbb{R}^d,\mathcal{B}(\mathbb{R}^d))$, $\mathcal{B}(\mathbb{R}^d$ the Borel sets on $\mathbb{R}^d$,  
with Lebesgue density denoted $\nu_{\theta}$ also,  for each $\theta\in\Theta$
$\mu_{\theta}:\mathbb{R}^d \rightarrow\mathbb{R}^d$, 
and $\sigma_{\theta}:\mathbb{R}^d \rightarrow\mathbb{R}^{d\times d}$
and $\{W_t\}_{t\geq 0}$ is a standard $d-$dimensional Brownian motion.
Denote $\mu_{\theta}^j$ as the $j^{th}-$component of $\mu_{\theta}$, $j\in\{1,\dots,d\}$
and $\sigma_{\theta}^{j,k}$ as the $(j,k)^{th}-$component of $\sigma_{\theta}$, $(j,k)\in\{1,\dots,d\}^2$.
As already stated,  the diffusion model is different to that in \cite{ub_par,ub_grad},  in that it allows dependence of
the diffusion coefficient on the unknown $\theta$.
We assume the following, referred to as (D1) from herein:
\begin{quote} 
The coefficients, for any fixed $\theta\in\Theta$, $\mu_{\theta}^j\in \mathcal{C}^2(\mathbb{R}^d)$ and $\sigma_{\theta}^{j,k} \in \mathcal{C}^2(\mathbb{R}^d)$, for $(j,k)\in\{1,\ldots, d\}^2$. 
Also for any fixed $x\in\mathbb{R}^d$, $\mu_{\theta}^j,\sigma_{\theta}^{j,k}(x)\in\mathcal{C}^1(\Theta)$ , $(j,k)\in\{1,\dots,d\}^2$.
In addition, $\mu_{\theta}$ and $\sigma_{\theta}$ satisfy 
\begin{itemize}
\item[(i)] {\bf uniform ellipticity}: $\Sigma_{\theta}(x):=\sigma_{\theta}(x)\sigma_{\theta}(x)^{\top}$ is uniformly positive definite 
over $x\in \mathbb{R}^d$;
\item[(ii)] {\bf globally Lipschitz}:
There exists a constant $C<\infty$ such that for every $\theta\in\Theta$ we have
$|\mu_{\theta}^j(x)-\mu_{\theta}^j(x')|+|\sigma_{\theta}^{j,k}(x)-\sigma_{\theta}^{j,k}(x')| \leq C \|x-x'\|$ 
for all $(x,x') \in \mathbb{R}^d\times\mathbb{R}^d$, $(j,k)\in\{1,\dots,d\}^2$. 
\end{itemize}
\end{quote}
The assumption (D1) is not referred to again.  We remark that the uniform ellipticity condition could be relaxed, by considering the methodology of \cite{bierkens} but this is a topic for future work.

We will now consider a sequence of random variables $(Y_{1},\dots,Y_{T})$, where $Y_{p}\in\mathsf{Y}$, that are assumed to have joint conditional density ($T\in\mathbb{N}$)
$$
p_{\theta}(y_{1},\dots,y_{T}|\{x_s\}_{0\leq s\leq T}) = \prod_{k=1}^T g_{\theta}(y_k|x_{k})
$$
where,  for every $\theta\in\Theta$, $g_{\theta}:\mathbb{R}^d\times\mathsf{Y}\rightarrow\mathbb{R}^+$, and for any $(\theta,x)\in\Theta\times\mathbb{R}^d$, $\int_{\mathsf{Y}}g_\theta(y|x)dy=1$ and $dy$ is the dominating measure.  $(Y_1,\cdots,Y_T)$ the observations at unit times $\{1,\cdots,T\}$ where the joint conditional density $p_{\theta}$ is a function of the values of $(Y_1,...,Y_T)$ and $(X_1,...,X_T)$. The general observations at a finite number of arbitrary times $(Y_{t_1},...,Y_{t_k})$ with $t_1<...<t_k$,  can easily be considered,  but we keep to the unit observation times for notational simplicity only.
If one considers realizations $(y_1,...,y_T)\in\mathsf{Y}^T$ of the random variables $(Y_{1},\dots,Y_{T})$, then we have the state-space model with marginal likelihood:
$$
p_{\theta}(y_{1},\dots,y_{T}) = \mathbb{E}_{\theta}\left[\prod_{k=1}^T g_{\theta}(y_{k}|X_{k})\right].
$$
Our main objective,  which is assumed to be well-posed, is to find $\theta\in\Theta$ so as to maximize 
$p_{\theta}(y_{1},\dots,y_{T})$. The approach we will take is to obtain an expression for the score function,
$S(\theta):=\nabla_{\theta}\log\left(p_{\theta}(y_{1},\dots,y_{T})\right)$ on which to base an optimization procedure. 
We shall denote the assumed unique maximizer of $p_{\theta}(y_{1},\dots,y_{T})$ as $\theta_{\star}$.

\subsection{Bridging Approach}\label{sec:bridge_rev}

We review the method in \cite{schauer,vd_meulen_guided_mcmc} as has been used in \cite{beskos}.
Consider the case $t\in[s_1,s_2]$, $0\leq s_1,s_2\leq T$ and let
$\mathbf{X}_{[s_1,s_2]}:=\{X_{t}\}_{t\in[s_1,s_2]}$, and $\mathbf{W}_{[s_1,s_2]}:=\{W_{t}\}_{t\in[s_1,s_2]}$.  Set $f_{\theta,t,s_{2}}(x'|x)$ denote the unknown transition density from time $t$ to $s_2$ associated to \eqref{eq:diff}.
If one could sample from $f_{\theta,s_1,s_2}$ to obtain $(x, x')\in \mathbb{R}^{2d}$, we will explain that 
we can interpolate these points by using a bridge process.
This process will have
a drift given by $\mu_{\theta}(x)+\Sigma_{\theta}(x)\nabla_x\log{f}_{\theta,t,s_2}(x'|x)$.
Let $\mathbb{P}_{\theta,x,x'}$ denote the law of the solution of the SDE \eqref{eq:diff}, on $[s_1,s_2]$, started at $x$ and conditioned to hit $x'$ at time $s_2$.

We introduce a user-specified auxiliary process $\{\tilde{X}_t\}_{t\in[s_1,s_2]}$ following:
\begin{align}
\label{eq:aux_SDE_tilde}
d \tilde X_{t} = \tilde \mu_{\theta}(t,\tilde X_{t})dt + \tilde \sigma_{\theta}(t,\tilde X_{t})dW_{t}, \quad t\in[s_1,s_2],\quad~\tilde{X}_{s_1} =x, 
\end{align}
where for each $\theta\in\Theta$, $\tilde \mu_{\theta}:[s_1,s_2]\times\mathbb{R}^{d}\rightarrow\mathbb{R}^{d}$ and $\tilde \sigma_{\theta}:\mathbb{R}^{d}\rightarrow\mathbb{R}^{d\times d}$ is
such that for each $\theta\in\Theta$
$$
\tilde \Sigma_{\theta} (s_2,x'):= \tilde \sigma_{\theta}(s_2,x') \tilde \sigma_{\theta}(s_2,x')^{\top} \equiv \Sigma_{\theta}(x').
$$ 
\eqref{eq:aux_SDE_tilde} is specified so that its transition density $\tilde{f}_\theta$ is available; see \cite[Section 2.2]{schauer} for the technical conditions on $\tilde \mu_{\theta}, \tilde \Sigma_{\theta}, \tilde f_\theta$. 
The main purpose of $\{\tilde X_t\}_{t\in[s_1,s_2]}$ is to sample $x'$ and use its transition density to construct another process $\{X_t^\circ\}_{t\in[s_1,s_2]}$ conditioned to hit $x'$ at $t=s_2$; which in turn will be an importance proposal for $\{X_t\}_{t\in[s_1,s_2]}$.  Set:
\begin{align}
\label{eq:aux_SDE}
d X^\circ_{t} = \mu_{\theta,s_2}^{\circ}(t,X^{\circ}_{t}; x')dt + \sigma_{\theta}(X^{\circ}_{t})dW_{t}, \quad t\in[s_1,s_2],\quad~X^{\circ}_{s_1} =x, 
\end{align}
where:
$$
\mu_{\theta,s_2}^{\circ}(t,x;x')=\mu_{\theta}(x)+\Sigma_{\theta}(x)\nabla_x\log\tilde{f}_{\theta,t,s_2}(x'|x),
$$ 
and denote by
 $\mathbb{P}^\circ_{\theta,x,x'}$ the probability law of the solution of (\ref{eq:aux_SDE}). 
The SDE in \eqref{eq:aux_SDE} yields: 
\begin{equation}
\mathbf{W}\rightarrow C_{\theta,s_1,s_2}(x,\mathbf{W}_{[s_1,s_2]},x'),
\label{eq:map}
\end{equation} 
mapping the driving Wiener noise $\mathbf{W}$ to the solution of \eqref{eq:aux_SDE}, reparametering the problem from $\mathbf{X}$ to $(\mathbf{W},x')$.

\cite{schauer} prove that $\mathbb{P}_{\theta,x,x'}$ and $\mathbb{P}^\circ_{\theta,x,x'}$ are absolutely continuous w.r.t.~each other, with Radon-Nikodym derivative: 
\begin{equation}
\frac{d\mathbb{P}_{\theta,x,x'} }{d \mathbb{P}^\circ_{\theta,x,x'} }(\mathbf{X}_{[s_1,s_2]})=
\exp\Big\{  \int_{s_1}^{s_2} L_{\theta,s_2}(t,X_t)dt\Big\} \times \frac{ \tilde{f}_{\theta,s_1,s_2}(x'|x)}{f_{\theta,s_1,s_2}(x'|x)},
\label{eq:density}
\end{equation}
where: 
\begin{align*}
L_{\theta,s_2}&(t,x):=\left(\mu_\theta(x)- \tilde{\mu}_\theta(t,x)\right)^{\top}\, \nabla_x\log\tilde{f}_{\theta,t,s_2}(x'|x)\\
&-\frac{1}{2}\textrm{Tr}\,\Big\{\,\big[ \Sigma_{\theta}(x)-\tilde{\Sigma}_{\theta}(t,x)\big] \big[ -\nabla_x^2\log\tilde{f}_{\theta,t,s_2}(x'|x)-\nabla_x\log\tilde{f}_{\theta,t,s_2}(x'|x)\nabla_x\log\tilde{f}_{\theta,t,s_2}(x'|x)^{\top} \big]\,\Big\} ,         
\end{align*}
with $\textrm{Tr}(\cdot)$ denoting the trace of a squared matrix.

Now let $\overline{f}_{\theta,t,s_2}(x'|x)$ be any positive probability density on $\mathbb{R}^d$ and write expectations w.r.t.:
$$
\overline{\mathbb{P}}_{\theta}(d(x_0,\dots,x_T), d(\mathbf{w}_{[0,1]},\dots,\mathbf{w}_{[T-1,T]})) = \nu_{\theta}(x_0)dx_0\prod_{k=1}^T \mathbb{W}(d\mathbf{w}_{[k-1,k]})\overline{f}_{\theta,k-1,k}(x_k|x_{k-1})dx_k
$$
as $\overline{\mathbb{E}}_{\theta}[\cdot]$,  where $\mathbb{W}$ is the law of Brownian motion and for each $k\in\{0,\dots,T\}$, $dx_k$ is  $d-$dimensional Lebesgue measure.
Set
$$
R_{\theta,s_1,s_2}(\mathbf{X}_{[s_1,s_2]}) := \frac{d\mathbb{P}_{\theta,x,x'} }{d \mathbb{P}^\circ_{\theta,x,x'} }(\mathbf{X}_{[s_1,s_2]})\frac{f_{\theta,s_1,s_2}(x'|x)}{\overline{f}_{\theta,s_1,s_2}(x'|x)}.
$$
Then as a result of the above construction we have that 
$$
p_{\theta}(y_{1},\dots,y_{T}) = \overline{\mathbb{E}}_{\theta}\left[
\prod_{k=1}^T g_{\theta}(y_{k}|X_{k})R_{\theta,k-1,k}\left(
C_{\theta,k-1,k}(X_{k-1},\mathbf{W}_{[k-1,k]},X_k)\right)
\right].
$$
The score function $S(\theta)$ is equal to (and assumed to be finite for each $\theta\in\Theta$)
$$
S(\theta) =  
$$
\begin{equation}\label{eq:score}
\widehat{\mathbb{E}}_{\theta}\left[
\sum_{k=1}^T \nabla_{\theta}\log\left\{g_{\theta}(y_{k}|X_{k})R_{\theta,k-1,k}\left(
C_{\theta,k-1,k}(X_{k-1},\mathbf{W}_{[k-1,k]},X_k)\right)
\overline{f}_{\theta,k-1,k}(X_k|X_{k-1})
\right\}+\nabla_{\theta}\log(\nu_{\theta}(X_0))
\right]
\end{equation}
where we assume all derivatives are well-defined and $\widehat{\mathbb{E}}_{\theta}[\cdot]$ is an expectation w.r.t.:
$$
\widehat{\mathbb{P}}_{\theta}(d(x_0,\dots,x_T), d(\mathbf{w}_{[0,1]},\dots,\mathbf{w}_{[T-1,T]})) =
$$
$$ 
\frac{1}{p_{\theta}(y_{1},\dots,y_{T})}\left\{\prod_{k=1}^T g_{\theta}(y_{k}|x_{k})R_{\theta,k-1,k}\left(
C_{\theta,k-1,k}(x_{k-1},\mathbf{w}_{[k-1,k]},x_k)\right)
\right\}\overline{\mathbb{P}}_{\theta}(d(x_0,\dots,x_T), d(\mathbf{w}_{[0,1]},\dots,\mathbf{w}_{[T-1,T]})).
$$
See \cite{beskos,ub_grad} for similar identities.
\begin{rem}
In practice when trying to approximate $S(\theta)$ one will need to compute the $\theta$ derivative of $C_{\theta}$ which is often unavailable.  We will address this point when considering time discretization below.
\end{rem}

\subsection{Time Discretization}

In most cases working with the formulation \eqref{eq:aux_SDE}-\eqref{eq:density} to compute \eqref{eq:score}
is not possible.  As a result,  we now consider time discretization which will facilitate the application of practical
algorithms to compute the optimizer of the likelihood function $p_{\theta}(y_{1},\dots,y_{T})$.

We first give the standard Euler-Maruyama time discretization of the solution to \eqref{eq:aux_SDE} (associated to a time interval $[k-1,k]$, $k\in\{1,\dots,T\}$) on a regular grid of spacing $\Delta_{l}=2^{-l}$, with starting point $x_{k-1}^{\circ}$ and ending point $x_{k}^{\circ}$. That is for $j\in\{0,1\dots,\Delta_{l}^{-1}-2\}$:
\begin{align}\label{eq:disc_circ}
X_{k-1+(j+1)\Delta_{l}}^{\circ} & = X_{k-1+j\Delta_{l}}^{\circ} + 
\mu_{\theta,k}^{\circ}(k-1+j\Delta_{l},X^{\circ}_{k-1+j\Delta_{l}}; x_{k}^\circ)\Delta_{l} + 
\nonumber \\ &
\sigma_{\theta}(X^{\circ}_{k-1+j\Delta_{l}})\left[W_{k-1+(j+1)\Delta_{l}}-W_{k-1+j\Delta_{l}}\right].
\end{align}
Given $(x_{k-1}^{\circ},x_{k}^{\circ})$ and $\mathbf{W}_{[k-1,k]}^l=(W_{k-1+\Delta_{l}}-W_{k-1},\dots,W_{k-\Delta_l}-W_{k-2\Delta_{l}})$,  the recursion \eqref{eq:disc_circ} induces a discretized path 
$X_{k-1+\Delta_{l}}^{\circ},\dots,X_{k-\Delta_{l}}^{\circ}$ and we write such a path, including the starting and ending points with the notation
$$
C_{\theta,k-1,k}^l(x_{k-1}^{\circ},\mathbf{W}_{[k-1,k]}^l,x_{k}^{\circ}).
$$
We also need a discretization of the Radon-Nikodym derivative.  Consider 
$$
\mathbf{X}_{[k-1,k]}^l=(X_{k-1},X_{k-1+\Delta_{l}},\dots,X_k)
$$ 
then we set
$$
R_{\theta,k-1,k}^l(\mathbf{X}_{[k-1,k]}^l) := 
\exp\Big\{ \sum_{j=0}^{\Delta_{l}^{-1}-1} 
L_{\theta,k}(t,X_{k-1+j\Delta_{l}})\Delta_{l}\Big\} \times \frac{\tilde{f}_{\theta,k-1,k}(X_{k}|X_{k-1})}{\overline{f}_{\theta,k-1,k}(X_{k}|X_{k-1})}.
$$
Given this construction,  one induces the time-discretized score function:
$$
S^l(\theta)   
$$
\begin{equation}\label{eq:score_disc}
\widehat{\mathbb{E}}_{\theta}^l\left[
\sum_{k=1}^T \nabla_{\theta}\log\left\{g_{\theta}(y_{k}|X_{k})R_{\theta,k-1,k}^l\left(
C_{\theta,k-1,k}^l(X_{k-1},\mathbf{W}_{[k-1,k]}^l,X_k)\right)
\overline{f}_{\theta,k-1,k}(X_k|X_{k-1})
\right\}+\nabla_{\theta}\log(\nu_{\theta}(X_0))
\right]
\end{equation}
where $\widehat{\mathbb{E}}_{\theta}^l[\cdot]$ is an expectation w.r.t.:
$$
\widehat{\mathbb{P}}_{\theta}^l(d(x_0,\dots,x_T), d(\mathbf{w}_{[0,1]}^l,\dots,\mathbf{w}_{[T-1,T]}^l)) =
$$
$$
\frac{1}{p_{\theta}^l(y_{1},\dots,y_{T})}\left\{\prod_{k=1}^T g_{\theta}(y_{k}|x_{k})R_{\theta,k-1,k}^l\left(
C_{\theta,k-1,k}^l(x_{k-1},\mathbf{w}_{[k-1,k]}^l,x_k)\right)
\right\}\overline{\mathbb{P}}_{\theta}^l(d(x_0,\dots,x_T), d(\mathbf{w}_{[0,1]}^l,\dots,\mathbf{w}_{[T-1,T]}^l))
$$
with 
\begin{eqnarray*}
p_{\theta}^l(y_{1},\dots,y_{T}) & = & \int_{\mathbb{R}^{d(1+T\Delta_l^{-1})}} \left\{\prod_{k=1}^T g_{\theta}(y_{k}|X_{k})R_{\theta,k-1,k}^l\left(
C_{\theta,k-1,k}^l(x_{k-1},\mathbf{w}_{[k-1,k]}^l,x_k)\right)
\right\}
\times \\ & &
\overline{\mathbb{P}}_{\theta}^l(d(x_0,\dots,x_T), d(\mathbf{w}_{[0,1]}^l,\dots,\mathbf{w}_{[T-1,T]}^l)) \\
\overline{\mathbb{P}}_{\theta}^l(d(x_0,\dots,x_T), d(\mathbf{w}_{[0,1]}^l,\dots,\mathbf{w}_{[T-1,T]}^l))
& = & \nu_{\theta}(x_0)dx_0\prod_{k=1}^T \mathbb{W}(d\mathbf{w}_{[k-1,k]}^l)\overline{f}_{\theta,k-1,k}(x_k|x_{k-1})dx_k.
\end{eqnarray*}
\begin{rem}
To approximate $S^l(\theta)$ we will replace the $\theta$ derivative of $C_{\theta}^l$ with a finite difference
approximation,  whose convergence (step-size) is exactly $\Delta_l$.
\end{rem}
We shall denote the assumed unique maximizer of $p_{\theta}^l(y_{1},\dots,y_{T})$ as $\theta_{\star}^l$.
We assume explicitly that $\lim_{l\rightarrow\infty}\theta_{\star}^l=\theta_{\star}$, where we recall that the latter is the unique maximizer of $p_{\theta}(y_{1},\dots,y_{T})$.

\section{Methodology}\label{sec:method}

In order to describe our methodology we will begin in Section \ref{sec:ctmc} by constructing a Markov kernel with invariant measure
$\widehat{\mathbb{P}}_{\theta}$. Although it is not practically implementable,  it demonstrates that an associated Markov kernel with $\widehat{\mathbb{P}}_{\theta}^l$ as its invariant measure, is well-defined as the time discretization goes to zero. This and an associated coupled kernel are also described in Section \ref{sec:ctmc}.
Finally in Section \ref{sec:approach} we show how these kernels are used to obtain unbiased parameter estimates.
We emphasize that by unbiased we simply mean a stochastic estimator $\widehat{\theta}_{\star}$ such that 
$\mathbb{E}[\widehat{\theta}_{\star}]=\theta_{\star}$, where the expectation is w.r.t.~the law associated to the randomness used to construct the estimator and is conditioning on the fixed data set $(y_1,\dots,y_T)$.

\subsection{Markov Kernels}\label{sec:ctmc}

We begin by giving the first Markov kernel that has invariant measure $\widehat{\mathbb{P}}_{\theta}$. This is the CPF \cite{andrieu} with backward sampling \cite{lind,whiteley} and is presented in Algorithm \ref{alg:ctcpbs}.   In Algorithm \ref{alg:ctcpbs} the notation $\mathcal{C}\textrm{at}(\cdot)$ is used,  which denotes the categorical distribution on the indices $\{1,\dots,M\}$,  with input a probability vector of length $M$.  In Algorithm \ref{alg:ctcpbs} $M=N$ in each instance it is mentioned.

\subsubsection{Conditional Particle Filter with Backward Sampling}\label{sec:cpf}

For the next kernel we introduce some notations. $\mathbf{z}^l=(z_{1}^l,\dots,z_T^l)$
where $z_1^l=(x_0,x_1,\mathbf{w}_{[0,1]}^l)$ and for $k\in\{2,\dots,T\}$,  $z_k^l=(x_k,\mathbf{w}_{[k-1,k]}^l)$
The superscript indicates the level of time discretization and note that in the forthcoming exposition we may add a second superscript $\mathbf{z}^{n,l}$, $n$ will indicate the time of a sequence of random variables, i.e.~$\mathbf{z}^{0,l},\mathbf{z}^{1,l},\dots$.
A Markov kernel that has invariant measure $\widehat{\mathbb{P}}_{\theta}^l$ is presented in Algorithm \ref{alg:dtcpbs}.  It can be observed that the differences with Algorithm \ref{alg:ctcpbs} are minimal,  with the exception that Algorithm \ref{alg:dtcpbs} can be run in practice whereas Algorithm \ref{alg:ctcpbs} is impossible to run on a computer. In Algorithm \ref{alg:dtcpbs} we start with a given $\mathbf{z}^l$
and, for a given $\theta$,  produce a new  $\mathbf{z}^l=(z_{1}^l,\dots,z_T^l)$ and this Markov kernel has exactly 
$\widehat{\mathbb{P}}_{\theta}^l$ as its invariant measure.  We call this Markov kernel $K_{\theta,l}(\mathbf{z}^{n-1,l},d\mathbf{z}^{n,l})$, $n\in\mathbb{N}$.  

We remark that the time discretization of Algorithm \ref{alg:ctcpbs} into Algorithm \ref{alg:dtcpbs} is rather important as it shows there is a well-defined algorithm (Algorithm \ref{alg:ctcpbs}) as $l$ the level of time discretization grows in the case of Algorithm \ref{alg:dtcpbs}.
In the approaches of \cite{ub_par,ub_grad} which rely on the Girsanov theorem with simple Euler-Maruyama time discretization,  the option of backward sampling is feasible, but destined to fail. The reason being that the backward weights that are used in those cases would have an exponential variance in $l$ rendering the methodology worthless in the cases where one would want to use them, i.e.~for the most precise time discretization. As a result,  the potential gains in convergence speed for backward sampling are lost for those afore-mentioned approaches wheras this is not the case here.  

We can also contrast Algorithm \ref{alg:dtcpbs} with the case of removing the backward sampling.  This would replace Step 5.~of Algorithm \ref{alg:dtcpbs} with:
\begin{itemize}
\item{Step \textbf{(F)}: Set $j_k^l = a_k^{j_{k+1}^l,l}$ for $k\in\{1,\dots,T-1\}$.
}
\end{itemize}
Using Step \textbf{(F)} instead of Step 5.~of Algorithm \ref{alg:dtcpbs} is expected to yield a more poorly performing algorithm, but provides a benchmark for the contribution in this article.

\subsubsection{Coupled Conditional Particle Filter with Backward Sampling}\label{sec:ccpf}

To define the last kernel that we will use for our method we need to recall the simulation of a maximal coupling of two probability mass functions with the same support. This is given in Algorithm \ref{alg:max_coup}.
For two positive probability mass functions on $\{1,\dots,N\}$ call them $(r_1^1,\dots,r_1^N)$ and $(r_2^1,\dots,r_2^N)$ as in Algorithm \ref{alg:max_coup} we denote the probability distribution on $\{1,\dots,N\}^2$ which corresponds to the maximal coupling as 
$$
\mathcal{M}\textrm{ax}\left(\left\{r_1^1,\dots,r_1^N\right\},\left\{r_2^1,\dots,r_2^N\right\}\right).
$$

To complete the specification for our last Markov kernel, we need to introduce several couplings as we now describe.  First, for each $\phi=(\theta,\theta')\in\Theta^2$, let $\check{\nu}_{\phi}$ be a coupling of $(\nu_{\theta},\nu_{\theta'})$, that is $\check{\nu}_{\phi}$ is a probability on $\mathbb{R}^{2d}$ such that for any Borel set $A\subseteq\mathbb{R}^d$, $\int_{A}\nu_{\theta}(x_0)dx_0 = \int_{A\times\mathbb{R}^d}\check{\nu}_{\phi}(d(x_0,x_0'))$ and $\int_{A}\nu_{\theta'}(x_0')dx_0' = \int_{\mathbb{R}^d\times A}\check{\nu}_{\phi}(d(x_0,x_0'))$.  Second,  for each $\phi=(\theta,\theta')\in\Theta^2$, $k\in\{1,\dots,T\}$ and any $(x,x')\in\mathbb{R}^{2d}$ let $\check{f}_{\phi,k-1,k}(d(v,v')|x,x')$ be a coupling of $(\overline{f}_{\theta,k-1,k}(v|x)dv,\overline{f}_{\theta',k-1,k}(v'|x')dv')$,  that is for any  Borel set $A\subseteq\mathbb{R}^d$,
$$
\int_{A\times\mathbb{R}^d}\check{f}_{\phi,k-1,k}(d(v,v')|x,x') = \int_A \overline{f}_{\theta,k-1,k}(v|x)dv
\quad\textrm{and}\quad \int_{\mathbb{R}^d\times A}\check{f}_{\phi,k-1,k}(d(v,v')|x,x') = \int_A \overline{f}_{\theta',k-1,k}(v'|x')dv'.
$$
Finally for $k\in\{1,\dots,T\}$, $l\in\mathbb{N}$ and $\mathbf{w}_{[k-1,k]}^l\in\mathbb{R}^{d\Delta_l^{-1}}$
define the function $\mathsf{C}^l:\mathbb{R}^{d\Delta_l^{-1}}\rightarrow\mathbb{R}^{d\Delta_{l-1}^{-1}}$
as
$$
\mathsf{C}^l(\mathbf{w}_{[k-1,k]}^l) = \left(w_{k-1+\Delta_l}-w_{k-1}+w_{k-1+2\Delta_l}-w_{k-1+\Delta_l},\dots,
w_{k-\Delta_l}-w_{k-2\Delta_l}+w_{k}-w_{k-\Delta_l}
\right).
$$
Then we define the coupling of Brownian motions $\check{\mathbb{W}}^l$ as the probability on 
$\mathbb{R}^{d(\Delta_l^{-1}+\Delta_{l-1}^{-1})}$ as
$$
\check{\mathbb{W}}^l(d(\mathbf{w}_{[k-1,k]}^l,\mathbf{w}_{[k-1,k]}^{l-1})) = \mathbb{W}(d\mathbf{w}_{[k-1,k]}^l)
\delta_{\{\mathsf{C}^l(\mathbf{w}_{[k-1,k]}^l)\}}(d\mathbf{w}_{[k-1,k]}^{l-1})
$$
where $\delta_{\{\mathsf{C}^l(\mathbf{w}_{[k-1,k]}^l)\}}$ is the Dirac measure on the set $\{\mathsf{C}^l(\mathbf{w}_{[k-1,k]}^l)\}$. 

Given the above notations we present Algorithm \ref{alg:dtccpbs}.  This algorithm will take in as its input
$\mathbf{z}^{n-1,l}=(z_{1}^{n-1,l},\dots,z_T^{n-1,})$ and $\bar{\mathbf{z}}^{n-1,l-1}=(\bar{z}_{1}^{n-1,l-1},\dots,\bar{z}_T^{n-1,l-1})$ where $n\in\mathbb{N}$,  
$\bar{z}_1^{n-1,l-1}=(\bar{x}_0^{n-1},\bar{x}_1^{n-1},\mathbf{w}_{[0,1]}^{n-1,l-1})$ and for $k\in\{2,\dots,T\}$,  $\bar{z}_k^{n-1,l-1}=(\bar{x}_k^{n-1},\mathbf{w}_{[k-1,k]}^{n-1,l-1})$ and for the given $\phi=(\theta,\theta')$ produce $(\mathbf{z}^{n,l},\bar{\mathbf{z}}^{n,l-1})$ .  Moreover,  if one considers only the transition
of $\mathbf{z}^{n-1,l}$ to $\mathbf{z}^{n,l}$ it is as if it has been sampled by 
$K_{\theta,l}(\mathbf{z}^{n-1,l},d\mathbf{z}^{n,l})$. Similarly the 
transition from $\bar{\mathbf{z}}^{n-1,l-1}$ 
to $\bar{\mathbf{z}}^{n,l-1}$
is as if it has been sampled from
$K_{\theta',l-1}(\bar{\mathbf{z}}^{n-1,l-1},d\bar{\mathbf{z}}^{n,l-1})$.  The key point is that these Markov kernels have been coupled
and this is important if one seeks to show that our final algorithm will produce unbiased estimates.
We call this Markov kernel $\check{K}_{\phi,l}(\mathbf{v}^{n-1,l},d\mathbf{v}^{n,l})$ where $\mathbf{v}^{n,l}=(\mathbf{z}^{n,l},\bar{\mathbf{z}}^{n,l-1})$.

As for the case of the CPF,  we can compare Algorithm \ref{alg:dtccpbs} with the scenario where one removes backward sampling.  This would replace Step 5.~of Algorithm \ref{alg:dtccpbs} with:
\begin{itemize}
\item{Step \textbf{(FC)}: Set $j_k^l = a_k^{j_{k+1}^l,l}$,   $\bar{j}_k^{l-1} = a_k^{\bar{j}_{k+1}^{l-1},l-1}$ for $k\in\{1,\dots,T-1\}$.
}
\end{itemize}
Again,  we expect this approach to be less efficient than using Algorithm \ref{alg:dtccpbs},  but provides a reference for comparison for our methodology.

\begin{algorithm}[h!]
\caption{Continuous Time Conditional Backward Sampling Particle Filter}
\label{alg:ctcpbs}
\begin{enumerate}
\item{Input: $(x_0,\dots,x_T)$, $(\mathbf{w}_{[0,1]},\dots,\mathbf{w}_{[T-1,T]})$, $\theta\in\Theta$ and $N\in\mathbb{N}$. Go to step 2..}
\item{Forward Initialize:   For $i\in\{1,\dots,N-1\}$
independently sample $\xi_1^i=(x_0^i,x_1^i,\mathbf{w}_{[0,1]}^i)$ from $\nu_{\theta}(x_0)dx_0
\overline{f}_{\theta}(x_1|x_0)dx_1\mathbb{W}(d\mathbf{w}_{[0,1]})$. 
Set $\xi_1^N=(x_0,x_1,\mathbf{w}_{[0,1]})$, 
$k=1$ and go to step 3..}
\item{Iterate Forwards: For $i\in\{1,\dots,N\}$ compute
$$
\alpha_k^i = g_{\theta}(y_k|x_k^i)R_{\theta,k-1,k}(C_{\theta,k-1,k}(\xi_{k}^i)).
$$
For $i\in\{1,\dots,N-1\}$ independently sample $a_k^i\in\{1,\dots,N\}$ from 
$$
\mathcal{C}\textrm{at}\left(\frac{\alpha_k^1}{\sum_{s=1}^N\alpha_k^s},\dots,\frac{\alpha_k^N}{\sum_{s=1}^N\alpha_k^s}\right)
$$
and $\check{x}_{k}^i= x_k^{a_k^i}$; set $a_k^N=N$.
Then, for $i\in\{1,\dots,N-1\}$, sample $x_{k+1}^i,\mathbf{w}_{[k,k+1]}^i$ from $\overline{f}_{\theta}(x_{k+1}^i|\check{x}_k^i)dx_{k+1}^i\mathbb{W}(d\mathbf{w}_{[k,k+1]}^i)$,  with $\xi_{k+1}^i = (\check{x}_k^i,\mathbf{w}_{[k,k+1]}^i,x_{k+1}^i)$.
Set $\xi_{k+1}^N=(x_{k},\mathbf{w}_{[k,k+1]},x_{k+1})$ and $k=k+1$. If $k=T$ 
go to step 4.~otherwise return to the start of step 3..}
\item{Backwards Initialize: For $i\in\{1,\dots,N\}$ compute
$$
\alpha_T^i = g_{\theta}(y_T|x_T^i)R_{\theta,T-1,T}(C_{\theta,T-1,T}(\xi_{T}^i)).
$$
Sample $j_T\in\{1,\dots,N\}$ via $\mathcal{C}\textrm{at}\left(\tfrac{\alpha_T^1}{\sum_{s=1}^N\alpha_T^s},\dots,\tfrac{\alpha_T^N}{\sum_{s=1}^N\alpha_T^s}\right).$  Set $k=T-1$ and go to step 5..
}
\item{Iterate Backwards: For $i\in\{1,\dots,N\}$ compute 
$$
\beta_k^i = \alpha_k^i \overline{f}_{\theta}(x_{k+1}^{j_{k+1}}|x_k^{i})
R_{\theta,k,k+1}(C_{\theta,k-1,k}(x_k^i,\mathbf{w}_{[k,k+1]}^{j_{k+1}},x_{k+1}^{j_{k+1}})).
$$
Sample $j_k\in\{1,\dots,N\}$ via $\mathcal{C}\textrm{at}\left(\tfrac{\beta_k^1}{\sum_{s=1}^N\beta_k^s},\dots,\tfrac{\beta_k^N}{\sum_{s=1}^N\beta_k^s}\right).$  Set $k=k-1$ and if $k=-1$ go to step 6.~otherwise go to the start of step 5.
}
\item{Output: $(x_0^{j_1},x_1^{j_1},\dots,x_T^{j_T})$, $(\mathbf{w}_{[0,1]}^{j_1},\dots,\mathbf{w}_{[T-1,T]}^{j_T})$.}
\end{enumerate}
\end{algorithm}

\begin{algorithm}[h!]
\caption{Time Discretized Conditional Backward Sampling Particle Filter}
\label{alg:dtcpbs}
\begin{enumerate}
\item{Input: 
$l\in\mathbb{N}\cup\{0\}$,  
$(x_0,\dots,x_T)$, $(\mathbf{w}_{[0,1]}^l,\dots,\mathbf{w}_{[T-1,T]}^l)$, $\theta\in\Theta$  and $N\in\mathbb{N}$. Go to step 2.}
\item{Forward Initialize:  For $i\in\{1,\dots,N-1\}$
independently sample $\xi_1^{i,l}=(x_0^{i,l},x_1^{i,l},\mathbf{w}_{[0,1]}^{i,l})$ from $\nu_{\theta}(x_0)dx_0
\overline{f}_{\theta}(x_1|x_0)dx_1\mathbb{W}(d\mathbf{w}_{[0,1]}^l)$. 
Set $\xi_1^{N,l}=(x_0,x_1,\mathbf{w}_{[0,1]}^l)$, $k=1$ and go to step 3.}
\item{Iterate Forwards: For $i\in\{1,\dots,N\}$ compute
$$
\alpha_k^{i,l} = g_{\theta}(y_k|x_k^{i,l})R_{\theta,k-1,k}^l(C_{\theta,k-1,k}^l(\xi_{k}^{i,l})).
$$
For $i\in\{1,\dots,N-1\}$ independently sample $a_k^{i,l}\in\{1,\dots,N\}$ from
$$
\mathcal{C}\textrm{at}\left(\frac{\alpha_k^{1,l}}{\sum_{s=1}^N\alpha_k^{s,l}},\dots,\frac{\alpha_k^{N,l}}{\sum_{s=1}^N\alpha_k^{s,l}}\right)
$$
and $\check{x}_{k}^{i,l}= x_k^{a_k^{i,l},l}$; set $a_k^{N,l}=N$.
Then, for $i\in\{1,\dots,N-1\}$, sample $x_{k+1}^{i,l},\mathbf{w}_{[k,k+1]}^{i,l}$ from $\overline{f}_{\theta}(x_{k+1}^{i,l}|\check{x}_k^{i,l})dx_{k+1}^{i,l}\mathbb{W}(d\mathbf{w}_{[k,k+1]}^{i,l})$,  with $\xi_{k+1}^{i,l} = (\check{x}_k^{i,l},\mathbf{w}_{[k,k+1]}^{i,l},x_{k+1}^{i,l})$.
Set $\xi_{k+1}^{1,l}=(x_{k},\mathbf{w}_{[k,k+1]},x_{k+1})$ and $k=k+1$.  If $k=T$ 
go to step 4.~otherwise return to the start of step 3.}
\item{Backwards Initialize: For $i\in\{1,\dots,N\}$ compute
$$
\alpha_T^{i,l} = g_{\theta}(y_T|x_T^{i,l})R_{\theta,k-1,k}^l(C_{\theta,k-1,k}^l(\xi_{T}^{i,l})).
$$
Sample $j_T^l\in\{1,\dots,N\}$ via $\mathcal{C}\textrm{at}\left(\tfrac{\alpha_T^{1,l}}{\sum_{s=1}^N\alpha_T^{s,l}},\dots,\tfrac{\alpha_T^{N,l}}{\sum_{s=1}^N\alpha_T^{s,l}}\right).$  Set $k=T-1$ and go to step 5.
}
\item{Iterate Backwards: For $i\in\{1,\dots,N\}$ compute 
$$
\beta_k^{i,l} = \alpha_k^{i,l} \overline{f}_{\theta}(x_{k+1}^{j_{k+1}^l,l}|x_k^{i,l})
R_{\theta,k,k+1}^l(C_{\theta,k-1,k}^l(x_k^{i,l},\mathbf{w}_{[k,k+1]}^{j_{k+1}^l,l},x_{k+1}^{j_{k+1}^l,l})).
$$
Sample $j_k^l\in\{1,\dots,N\}$ via $\mathcal{C}\textrm{at}\left(\tfrac{\beta_k^{1,l}}{\sum_{s=1}^N\beta_k^{s,l}},\dots,\tfrac{\beta_k^{N,l}}{\sum_{s=1}^N\beta_k^{s,l}}\right).$  Set $k=k-1$ and if $k=-1$ go to step 6.~otherwise go to the start of step 5.
}
\item{Output: $(x_0^{j_1^l,l},x_1^{j_1^l,l},\dots,x_T^{j_T^l,l})$, $(\mathbf{w}_{[0,1]}^{j_1^l,l},\dots,\mathbf{w}_{[T-1,T]}^{j_T^l,l})$.}
\end{enumerate}
\end{algorithm}

\begin{algorithm}[h!]
\begin{enumerate}
\item{Input: Two probability mass functions (PMFs) $(r_1^1,\dots,r_1^N)$ and $(r_2^1,\dots,r_2^N)$ on $\{1,\dots,N\}$.}
\item{Generate $U\sim\mathcal{U}_{[0,1]}$ (uniform distribution on $[0,1]$).}
\item{If $U<\sum_{i=1}^N \min\{r_1^i,r_2^i\}=:\bar{r}$ then generate $i\in\{1,\dots,N\}$ according to the probability mass function:
$$
r_3^i = \frac{1}{\bar{r}} \min\{r_1^i,r_2^i\}
$$
and set $j=i$.}
\item{Otherwise generate $i\in\{1,\dots,N\}$ and $j\in\{1,\dots,N\}$ independently according to the probability mass functions 
$$
r_4^i = \frac{1}{1-\bar{r}} (r_1^i - \min\{r_1^i,r_2^i\})
$$
and
$$
r_5^j = \frac{1}{1-\bar{r}} (r_2^j - \min\{r_1^j,r_2^j\})
$$
respectively.
}
\item{Output: $(i,j)\in\{1,\dots,N\}^2$. $i$, marginally has PMF $r_1^i$ and $j$, marginally has PMF $r_2^j$.}
\end{enumerate}
\caption{Simulating a Maximal Coupling of Two Probability Mass Functions on $\{1,\dots,N\}$.}
\label{alg:max_coup}
\end{algorithm}

\begin{algorithm}[h]
\caption{Time Discretized Coupled Conditional Backward Sampling Particle Filter}
\label{alg:dtccpbs}
\begin{enumerate}
\item{Input: 
$l\in\mathbb{N}$, 
$(x_0^l,\dots,x_T^l)$, $(\mathbf{w}_{[0,1]}^l,\dots,\mathbf{w}_{[T-1,T]}^l)$, 
$(x_0^{l-1},\dots,x_T^{l-1})$, $(\mathbf{w}_{[0,1]}^{l-1},\dots,\mathbf{w}_{[T-1,T]}^{l-1})$,
$\phi=(\theta^l,\bar{\theta}^{l-1})\in\Theta^2$  and $N\in\mathbb{N}$. Go to step 2.}
\item{Forward Initialize:  For $i\in\{1,\dots,N-1\}$
independently sample $(\xi_1^{i,l},\bar{\xi}_1^{i,l-1})=
\left((x_0^{i,l},x_1^{i,l},\mathbf{w}_{[0,1]}^{i,l}),(\bar{x}_0^{i,l-1},\bar{x}_1^{i,l-1},\bar{\mathbf{w}}_{[0,1]}^{i,l-1})\right)$ from 
$$
\check{\nu}_{\phi}(d(x_0,x_0'))
\check{f}_{\phi}(d(x_1,x_1')|x_0,x_0')\check{\mathbb{W}}^l(d(\mathbf{w}_{[0,1]}^l,\mathbf{w}_{[0,1]}^{l-1})).
$$
Set
$\xi_1^{N,l}=(x_0^l,x_1^l,\mathbf{w}_{[0,1]}^l)$, $\bar{\xi}_1^{N,l-1}=(x_0^{l-1},x_1^{l-1},\mathbf{w}_{[0,1]}^{l-1})$,  $k=1$ and go to step 3.}
\item{Iterate Forwards: For $i\in\{1,\dots,N\}$ compute
$$
\alpha_k^{i,l} = g_{\theta^l}(y_k|x_k^{i,l})R_{\theta^l,k-1,k}^l(C_{\theta^l,k-1,k}^l(\xi_{k}^{i,l})) \quad
\bar{\alpha}_k^{i,l-1} = g_{\bar{\theta}^{l-1}}(y_k|\bar{x}_k^{i,l-1})R_{\bar{\theta}^{l-1},k-1,k}^{l-1}(C_{\bar{\theta}^{l-1},k-1,k}^{l-1}(\bar{\xi}_{k}^{i,l-1})).
$$
For $i\in\{1,\dots,N-1\}$ independently sample $(a_k^{i,l},\bar{a}_k^{i,l-1})\in\{1,\dots,N\}^2$ from 
$$
\mathcal{M}\textrm{ax}\left(\left\{\frac{\alpha_k^{1,l}}{\sum_{s=1}^N\alpha_k^{s,l}},\dots,\frac{\alpha_k^{N,l}}{\sum_{s=1}^N\alpha_k^{s,l}}\right\}, \left\{
\frac{\bar{\alpha}_k^{1,l-1}}{\sum_{s=1}^N\bar{\alpha}_k^{s,l-1}},\dots,\frac{\bar{\alpha}_k^{N,l-1}}{\sum_{s=1}^N\bar{\alpha}_k^{s,l-1}}
\right\}
\right)
$$
and $(\check{x}_{k}^{i,l},\check{\bar{x}}_{k}^{i,l-1})=(x_k^{a_k^{i,l},l},\bar{x}_k^{\bar{a}_k^{i,l-1},l-1})$;
set $(a_k^{N,l},\bar{a}_k^{N,l-1})=(N,N)$.
For $i\in\{1,\dots,N-1\}$, sample $\left((x_{k+1}^{i,l},\mathbf{w}_{[k,k+1]}^{i,l}),(\bar{x}_{k+1}^{i,l-1},\bar{\mathbf{w}}_{[k,k+1]}^{i,l})\right)$ from 
$$
\check{f}_{\phi}(d(x_{k+1}^{i,l},\bar{x}_{k+1}^{i,l-1})|(\check{x}_{k}^{i,l},\check{\bar{x}}_{k}^{i,l-1}))
\check{\mathbb{W}}^l(d(\mathbf{w}_{[k,k+1]}^{i,l},\mathbf{w}_{[k,k+1]}^{i,l-1}))
$$
with $\xi_{k+1}^{i,l} = (\check{x}_k^{i,l},\mathbf{w}_{[k,k+1]}^{i,l},x_{k+1}^{i,l})$,  $\bar{\xi}_{k+1}^{i,l-1} = (\check{\bar{x}}_k^{i,l-1},\bar{\mathbf{w}}_{[k,k+1]}^{i,l-1},\bar{x}_{k+1}^{i,l-1})$.
Set $\xi_{k+1}^{N,l}=(x_{k},\mathbf{w}_{[k,k+1]}^l,x_{k+1})$,  $\bar{\xi}_{k+1}^{N,l-1}=(x_{k}^{l-1},\mathbf{w}_{[k,k+1]}^{l-1},x_{k+1}^{l-1})$ and $k=k+1$. 
If $k=T$ 
go to step 4.~otherwise return to the start of step 3.}
\item{Backwards Initialize: For $i\in\{1,\dots,N\}$ compute
$$
\alpha_T^{i,l} = g_{\theta^l}(y_T|x_T^{i,l})R_{\theta^l,T-1,T}^l(C_{\theta^l,T-1,T}^l(\xi_{T}^{i,l})) \quad
\bar{\alpha}_T^{i,l-1} = g_{\bar{\theta}^{l-1}}(y_T|\bar{x}_T^{i,l-1})R_{\bar{\theta}^{l-1},T-1,T}^{l-1}(C_{\bar{\theta}^{l-1},T-1,T}^{l-1}(\bar{\xi}_{T}^{i,l-1})).
$$
Sample $(j_T^l,\bar{j}_T^{l-1})\in\{1,\dots,N\}^2$ via
$$
\mathcal{M}\textrm{ax}\left(\left\{\frac{\alpha_T^{1,l}}{\sum_{s=1}^N\alpha_T^{s,l}},\dots,\frac{\alpha_T^{N,l}}{\sum_{s=1}^N\alpha_T^{s,l}}\right\}, \left\{
\frac{\bar{\alpha}_T^{1,l-1}}{\sum_{s=1}^N\bar{\alpha}_T^{s,l-1}},\dots,\frac{\bar{\alpha}_T^{N,l-1}}{\sum_{s=1}^N\bar{\alpha}_T^{s,l-1}}
\right\}
\right).
$$
Set $k=T-1$ and go to step 5.
}
\item{Iterate Backwards: For $i\in\{1,\dots,N\}$ compute 
\begin{eqnarray*}
\beta_k^{i,l} & = & \alpha_k^{i,l} \overline{f}_{\theta^l}(x_{k+1}^{j_{k+1}^l,l}|x_k^{i,l})
R_{\theta^l,k,k+1}^l(C_{\theta^l,k-1,k}(x_k^{i,l},\mathbf{w}_{[k,k+1]}^{j_{k+1}^l,l},x_{k+1}^{j_{k+1}^l,l}))
\\
\bar{\beta}_k^{i,l-1} & = & \bar{\alpha}_k^{i,l-1} \overline{f}_{\bar{\theta}^{l-1}}(x_{k+1}^{\bar{j}_{k+1}^{l-1},l-1}|\bar{x}_k^{i,l-1})R_{\bar{\theta}^{l-1},k,k+1}^{l-1}(C_{\bar{\theta}^{l-1},k-1,k}^{l-1}(\bar{x}_k^{i,l-1},\bar{\mathbf{w}}_{[k,k+1]}^{\bar{j}_{k+1}^{l-1},l-1},\bar{x}_{k+1}^{\bar{j}_{k+1}^{l-1},l-1})).
\end{eqnarray*}
Sample $(j_k^l,\bar{j}_k^{l-1})\in\{1,\dots,N\}^2$ via
$$
\mathcal{M}\textrm{ax}\left(
\left\{\frac{\beta_k^{1,l}}{\sum_{s=1}^N\beta_k^{s,l}},\dots,\frac{\beta_k^{N,l}}{\sum_{s=1}^N\beta_k^{s,l}}\right\},
\left\{
\frac{\bar{\beta}_k^{1,l-1}}{\sum_{s=1}^N\bar{\beta}_k^{s,l-1}},\dots,\frac{\bar{\beta}_k^{N,l-1}}{\sum_{s=1}^N\bar{\beta}_k^{s,l-1}}
\right\}
\right).
$$
Set $k=k-1$ and if $k=-1$ go to step 6.~otherwise go to the start of step 5.
}
\item{Output: $(x_0^{j_1^l,l},x_1^{j_2^l,l},\dots,x_T^{j_T^l,l})$, $(\mathbf{w}_{[0,1]}^{j_1^l,l},\dots,\mathbf{w}_{[T-1,T]}^{j_T^l,l})$,  $(\bar{x}_0^{\bar{j}_1^{l-1},l-1},\bar{x}_1^{\bar{j}_2^{l-1},l-1},\dots,\bar{x}_T^{\bar{j}_T^{l-1},l-1})$, $(\bar{\mathbf{w}}_{[0,1]}^{\bar{j}_1^{l-1},l-1},\dots,\bar{\mathbf{w}}_{[T-1,T]}^{\bar{j}_T^{l-1},l-1})$.}
\end{enumerate}
\end{algorithm}

\subsection{Final Approach}\label{sec:approach}

The approach that we will use to obtain parameter estimates is that in \cite{ub_par},  except adapted to the more general case of this article and with the Markov kernels that we have developed in the previous section. 
To make the notations more succinct we set for $(\theta,l,\mathbf{z}^l)$ given
$$
H_l(\theta,\mathbf{z}^l) := \sum_{k=1}^T \nabla_{\theta}\log\left\{g_{\theta}(y_{k}|x_{k})R_{\theta,k-1,k}^l\left(
C_{\theta,k-1,k}^l(x_{k-1},\mathbf{w}_{[k-1,k]}^l,x_k)\right)\right\}+\nabla_{\theta}\log(\nu_{\theta}(x_0)).
$$
Also set 
\begin{eqnarray}
\check{\mathbb{P}}_{\theta}^l(d\mathbf{v}^l) & = & \check{\nu}_\theta(d(x_0^l,\bar{x}_0^{l-1}))
\check{f}_{\theta,0,1}(d(x_1^l,\bar{x}_1^{l-1})|x_0^l,\bar{x}_0^{l-1})\check{\mathbb{W}}^l(d(\mathbf{w}_{[0,1]}^l,\mathbf{w}_{[0,1]}^{l-1})) \times \nonumber\\ & &
\prod_{k=2}^T 
\check{f}_{\theta,k-1,k}(d(x_k^l,\bar{x}_k^{l-1})|x_{k-1}^l,\bar{x}_{k-1}^{l-1})\check{\mathbb{W}}^l(d(\mathbf{w}_{[k-1,k]}^l,\mathbf{w}_{[k-1,k]}^{l-1})).\label{eq:p_initial_coup}
\end{eqnarray}

The approach we use is to run a Markovian stochastic approximation \cite{andr} for a random time iteration and
for a random time-discretization level.  The approach is based upon the randomization schemes; see for instance
\cite{rhee,matti}.
To that end let $\mathbb{P}_{\mathtt{L}}(l)$ be a positive probability on $\mathbb{N}_0=\mathbb{N}\cup\{0\}$ which will be used to simulate a level of time discretization.   Let $(N_p)_{p\in\mathbb{N}_0}$ be
any sequence of increasing natural numbers, converging to infinity. Let $\mathbb{P}_{\mathtt{P}}$ be any positive probability on $\mathbb{N}_0$.  The sequence $(N_p)_{p\in\mathbb{N}_0}$ is used as the number of iterations and 
$\mathbb{P}_{\mathtt{P}}$ is used as probability distribution so as to select that iteration number.
Our Unbiased Markovian Stochastic Approximation (UBMSA) method is presented in Algorithm \ref{alg:USMA}.
Note that typically this algorithm is run $M\in\mathbb{N}$ times in parallel and the results averaged to obtain a final estimator.

UBMSA as given in Algorithm \ref{alg:USMA} has a large number of simulation parameters and settings to make.
This has been explored and discussed extensively in \cite{ub_par} and we give recommendations in Section \ref{sec:theory}.

\begin{algorithm}[h]
\caption{Unbiased Markovian Stochastic Approximation (UMSA)}
\label{alg:USMA}
\begin{algorithmic}[1]
\item{Sample $l$ from $\mathbb{P}_{\mathtt{L}}$ and $p$ from $\mathbb{P}_{\mathtt{P}}$.}
\item If $l=0$ perform the following:
\begin{itemize}
\item{Set $\theta_0^l\in\Theta$, $n=1$ and generate $\mathbf{z}^{n-1,l}\sim\overline{\mathbb{P}}_{\theta_0}^l$.}
\item{Using Algorithm \ref{alg:dtcpbs} sample $\mathbf{z}^{n,l}|(\theta_0^l,\mathbf{z}^{0,l}),\dots,(\theta_{n-1}^l,\mathbf{z}^{n-1,l})$ from $K_{\theta_{n-1}^l,l}(\mathbf{z}^{n-1,l},\cdot)$.}
\item{Update:

\begin{align}
\theta_n^l = \theta_{n-1}^l + \gamma_n H_l(\theta_{n-1}^l,\mathbf{z}^{n,l}).
\label{eq:sing_SA}
\end{align}

Set $n=n+1$. If $n=N_p+1$ go to the next bullet point, otherwise go back to the second bullet point.}
\item{If $p=0$ return
$$
\widehat{\theta}_{\star} = \frac{\theta_{N_p}^l}{\mathbb{P}_{\mathtt{P}}(p)\mathbb{P}_{\mathtt{L}}(l)},
$$
otherwise  return
$$
\widehat{\theta}_{\star} = \frac{\theta_{N_p}^l-\theta_{N_{p-1}}^l}{\mathbb{P}_{\mathtt{P}}(p)\mathbb{P}_{\mathtt{L}}(l)}.
$$
}
\end{itemize}
\item Otherwise perform the following:
\begin{itemize}
\item Set $\theta_0^l=\theta_0^{l-1}\in\Theta$, $n=1$ and generate $\mathbf{v}^{0,l}\sim\check{\mathbb{P}}_{\theta_0^l}^l$.
\item Using Algorithm \ref{alg:dtccpbs} sample $\mathbf{v}^{n,l}\Big|(\theta_0^l,\theta_0^{l-1},\mathbf{v}^{0,l}),\dots,(\theta_{n-1}^l,\theta_{n-1}^{l-1},\mathbf{v}^{n-1,l})$ from
$\check{K}_{\phi_{n-1}^l,l}\left(\mathbf{v}^{n-1,l},\cdot\right)$, where $\phi_{n-1}^l=(\theta_{n-1}^l,\theta_{n-1}^{l-1})$.

\item Update:
\begin{eqnarray*}
\theta_n^l & = & \theta_{n-1}^l + \gamma_n H_l(\theta_{n-1}^l,\mathbf{z}^{n,l}), \\
\theta_n^{l-1} & = & \theta_{n-1}^{l-1} + \gamma_n H_{l-1}(\theta_{n-1}^{l-1},\bar{\mathbf{z}}^{n,l-1}).
\end{eqnarray*}
Set $n=n+1$ and if $n=N_p+1$ go to the next bullet point, otherwise go back to the second bullet point.
\item If $p=0$ return
$$
\widehat{\theta}_{\star} = \frac{\theta_{N_p}^l-\theta_{N_p}^{l-1}}{\mathbb{P}_{\mathtt{P}}(p)\mathbb{P}_{\mathtt{L}}(l)},
$$
otherwise  return
$$
\widehat{\theta}_{\star} = \frac{\theta_{N_p}^l-\theta_{N_{p}}^{l-1}-\{\theta_{N_{p-1}}^l-\theta_{N_{p-1}}^{l-1}\}}{\mathbb{P}_{\mathtt{P}}(p)\mathbb{P}_{\mathtt{L}}(l)}.
$$
\end{itemize}
\end{algorithmic}
\end{algorithm}

\section{Theory}\label{sec:theory}

Our analysis will extend to the case where we replace Step 5.~of Algorithm \ref{alg:dtcpbs} with Step \textbf{(F)}
(given in Section \ref{sec:cpf}) and we replace Step 5.~of Algorithm \ref{alg:dtccpbs} with Step \textbf{(FC)} (given in Section \ref{sec:ccpf}).

\subsection{Assumptions} 

In our analysis we will make the following assumptions.  Below  we write expectations associated to the law
described in Algorithm \ref{alg:USMA}
as $\mathbb{E}[\cdot]$.

\begin{hypA}\label{ass:coup_bridge}
(i) For any $r\in[1,\infty)$ there exist a $C<\infty$ such that for any $\theta\in\Theta$
$$
\int_{\mathbb{R}^d}\|x\|^r\nu_{\theta}(dx) \leq C.
$$
(ii) For any $(r,k)\in[1,\infty)\times\{1,\dots,T\}$ there exist a $C<\infty$ such that for any $(\theta,x)\in\Theta\times\mathbb{R}^d$
$$
\int_{\mathbb{R}^d}\|x'\|^r \bar{f}_{\theta,k-1,k}(dx'|x) \leq C\|x\|^r.
$$
(iii) For any $r\in[1,\infty)$ there exist a $C<\infty$ such that for any $\phi=(\theta,\bar{\theta})\in\Theta^2$ we have
$$
\max\left\{\left(\int_{\mathbb{R}^{2d}}\|x-x'\|^r \check{\nu}_{\phi}(d(x,x'))\right)^{1/r},
\left(\int_{\mathbb{R}^{2d}}\|\nabla_{\theta}\log(\nu_{\theta}(x))-
\nabla_{\bar{\theta}}\log(\nu_{\bar{\theta}}(x')
\|^r \check{\nu}_{\phi}(d(x,x'))\right)^{1/r} 
\right\}
\leq C\|\theta-\bar{\theta}\|.
$$
(iv) For any $(r,k)\in[1,\infty)\times\{1,\dots,T\}$ there exist a $C<\infty$ such that for any $\phi=(\theta,\bar{\theta})\in\Theta^2$ and $(x,\bar{x})\in\mathbb{R}^{2d}$ we have
$$
\left(\int_{\mathbb{R}^{2d}}\|x'-\bar{x}'\|^r \check{f}_{\phi,k-1,k}(d(x',\bar{x})|x,\bar{x})\right)^{1/r} \leq C\left(\|\theta-\bar{\theta}\| +
\|x-\bar{x}\|
\right).
$$
\end{hypA}

\begin{hypA}\label{ass:weight}
For each $k\in\{1,\dots,T\}$ there exist a $0<\underline{C}<\overline{C}<\infty$ such that for any $(\theta,x,x',l,\mathbf{w}_{[k-1,k]}^l)
\in\Theta\times\mathbb{R}^{2d}\times\mathbb{N}_0\times\mathbb{R}^{d\Delta_l^{-1}}$
$$
\underline{C} \leq g_{\theta}(y_k|x')R_{\theta,k-1,k}^l(C_{\theta,k-1,k}^l(x,\mathbf{w}_{[k-1,k]}^l,x')) \leq \overline{C}.
$$
\end{hypA}

\begin{hypA}\label{ass:weight_disc}
(i) For each $(r,k)\in[1,\infty)\times \{1,\dots,T\}$ there exist a $C<\infty$ such that for any $\phi=(\theta,\bar{\theta})
\in\Theta^2$
and $(x,\bar{x},l)
\in\mathbb{R}^{2d}\times\mathbb{N}$
$$
\Bigg(\int_{\mathbb{R}^{d+d\Delta_l^{-1}+d\Delta_{l-1}^{-1}}}
|g_{\theta}(y_k|x')R_{\theta,k-1,k}^l(C_{\theta,k-1,k}^l(x,\mathbf{w}_{[k-1,k]}^l,x'))-
g_{\bar{\theta}}(y_k|\bar{x}')R_{\bar{\theta},k-1,k}^{l-1}(C_{\bar{\theta},k-1,k}^{l-1}(\bar{x},\mathbf{w}_{[k-1,k]}^{l-1},\bar{x}'))|^r\times
$$
$$
\check{f}_{\phi,k-1,k}(d(x',\bar{x}')|x,\bar{x})\check{\mathbb{W}}^l(d(\mathbf{w}_{[k-1,k]}^l,\mathbf{w}_{[k-1,k]}^{l-1}))\bigg)^{1/r} \leq C\left(\Delta_l^{1/2} +
\|\theta-\bar{\theta}\| +
\|x-\bar{x}\| 
\right).
$$
(ii) For each $(r,k)\in[1,\infty)\times \{1,\dots,T\}$ there exist a $C<\infty$ such that for any $\phi=(\theta,\bar{\theta})
\in\Theta^2$
and $(x,\bar{x},l)
\in\mathbb{R}^{2d}\times\mathbb{N}$
$$
\Bigg(\int_{\mathbb{R}^{d+d\Delta_l^{-1}+d\Delta_{l-1}^{-1}}}
\Big\|\nabla_{\theta}\log\left\{g_{\theta}(y_{k}|x')R_{\theta,k-1,k}^{l}\left(
C_{\theta,k-1,k}^l(x,\mathbf{w}_{[k-1,k]}^l,x')\right)\right\} - 
$$
$$
\nabla_{\bar{\theta}}\log\left\{g_{\bar{\theta}}(y_{k}|\bar{x}')R_{\bar{\theta},k-1,k}^{l-1}\left(
C_{\bar{\theta},k-1,k}^{l-1}(\bar{x},\mathbf{w}_{[k-1,k]}^{l-1},\bar{x}')\right)\right\}
\Big\|^r
\check{f}_{\phi,k-1,k}(d(x',\bar{x}')|x,\bar{x})\check{\mathbb{W}}^l(d(\mathbf{w}_{[k-1,k]}^l,\mathbf{w}_{[k-1,k]}^{l-1}))\bigg)^{1/r} \leq 
$$
$$
C\left(\Delta_l^{1/2}+
\|\theta-\bar{\theta}\| +
\|x-\bar{x}\| 
\right).
$$
\end{hypA}

\begin{hypA}\label{ass:grad_bound}
(i)  There exist a $C<\infty$ such that for any $\theta
\in\Theta$ and $x\in\mathbb{R}^{d}$
$$
\|\nabla_{\theta}\log(\nu_{\theta}(x))\|\leq C.
$$
(ii) For each $k\in\{1,\dots,T\}$ there exist a $C<\infty$ such that for any $(\theta, x,\bar{x},l,\mathbf{w}_{[k-1,k]}^l)\in\Theta\times\mathbb{R}^{2d}\times\mathbb{N}_0\times\mathbb{R}^{d(\Delta_l^{-1}-1)}$
$$
\left\|\nabla_{\theta}\log\left\{g_{\theta}(y_{k}|x')R_{\theta,k-1,k}^l\left(
C_{\theta,k-1,k}^l(x,\mathbf{w}_{[k-1,k]}^l,x')\right)\right\}\right\| \leq C.
$$
\end{hypA}

\begin{hypA}\label{ass:mom_cond}
For any $T\in\mathbb{N}$ there exists a $C<\infty$ such that
for any $(l,n)\in\mathbb{N}_0^2$ we have
$$
\mathbb{E}\left[\sum_{k=0}^T\{\|X_k^{n,l}\|+\|\bar{X}_k^{n,l}\|\}\right] \leq C.
$$
\end{hypA}

In the below assumption, for $l\in\mathbb{N}_0$, we denote by $(\theta_n^l)_{n\in\mathbb{N}_0}$ a sequence
of random variables (initial point $\theta^l_0$ as in Algorithm \ref{alg:USMA}) that have been produced by using MSA with transition kernel $K_{\theta,l}$ and initialization $\overline{\mathbb{P}}^l_{\theta^l_0}$.

\begin{hypA}\label{ass:conv_sa}
(i) $\theta_{\star}$ is the unique maximizer of $p_{\theta}(y_1,\dots,y_T)$ with $\nabla_{\theta_{\star}}\log(p_{\theta_{\star}}(y_1,\dots,y_T)) = 0$.\\
(ii) For $l\in\mathbb{N}_0$,  $\theta_{\star}^l$ is the unique maximizer of $p_{\theta}^l(y_1,\dots,y_T)$ with $\nabla_{\theta_{\star}^l}\log(p_{\theta_{\star}^l}^l(y_1,\dots,y_T)) = 0$.  In addition $\lim_{l\rightarrow\infty}\theta_{\star}^l=\theta_{\star}$.\\
(iii) For $l\in\mathbb{N}_0$,  almost surely $\lim_{n\rightarrow\infty}\theta_n^l=\theta_{\star}^l$. 
\end{hypA}

\subsubsection{Discussion of Assumptions}

We discuss each of these assumptions in turn.  (A\ref{ass:coup_bridge}) relates to the coupling of the simulation of the bridge and its continuity in $\theta$ and $x$.  In general this seems a reasonable assumption which says if one generated a path of the bridge with the same $\theta$ at a pair of levels, the same bridge is created. 
(A\ref{ass:weight}) relates to a control of the (forward in time) resampling weight that is used in 
Algorithms \ref{alg:ctcpbs} and \ref{alg:dtccpbs}.  The upper-bound is quite reasonable,  but the lower bound would imply that the diffusion is restricted to a compact space as is the parameter and the observation.  At this time removing the assumption seems somewhat troublesome and we would expect an entirely new approach would be needed. (A\ref{ass:weight_disc}) is associated to the convergence of the time-discretization scheme and continuity. 
The main issue is to discover the time discretization properties of the Euler-Maruyama maps $C_{\theta,k-1,k}^l$.
Although there are some remarks in \cite{papas} in general it seems to be an open problem and a proof of these properties is required.  None-the-less we believe the assumption holds true for regular enough diffusion processes (such as those satisying (D1)). (A\ref{ass:grad_bound}) is strong,  but in fact,  could be substituted by a Markov chain moment condition such as in (A\ref{ass:mom_cond}) and was not to reduce the length of the proofs in the appendix.  (A\ref{ass:mom_cond}) could be established by considering the stability of the Markov chain associated to the MSA method and could follow the lines of the ideas in \cite[Lemma 32]{ub_grad}.  Finally 
(A\ref{ass:conv_sa}) is associated to the convergence of MSA and the uniqueness of solutions; this topic has been
extensively studied in \cite{andr_moulines1,andr,fort}. Indeed \cite{ub_par} consider this question for similar types of CPF and CCPF.

\subsection{Main Result}

We have the following result whose proof is in Appendix \ref{app:main}.

\begin{theorem}\label{theo:main}
Assume (A\ref{ass:coup_bridge}-\ref{ass:conv_sa}).  Then there exist choices of $\mathbb{P}_{\mathtt{L}}$,
$\mathbb{P}_{\mathtt{P}}$,  $(N_p)_{p\in\mathbb{N}_0}$ and $(\gamma_n)_{n\in\mathbb{N}}$ such that
for $j\in\{1,\dots,d_{\theta}\}$:
\begin{eqnarray*}
\mathbb{E}[\widehat{\theta}_{\star}^j] & = & \theta_{\star}^j \\
\mathbb{E}[\{\widehat{\theta}_{\star}^j\}^2] & < & \infty.
\end{eqnarray*}
\end{theorem}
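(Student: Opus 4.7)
The estimator $\widehat{\theta}_\star$ produced by Algorithm \ref{alg:USMA} is a doubly-randomized single-term estimator of the Rhee--Glynn/Vihola type, randomizing both the discretization level $l$ and the MSA iteration index $p$. Writing $\theta^{-1}_{\star}:= 0$ and $\theta^l_{N_{-1}}:= 0$, and letting $D(l,p):= (\theta^l_{N_p}-\theta^{l-1}_{N_p})-(\theta^l_{N_{p-1}}-\theta^{l-1}_{N_{p-1}})$ (with the obvious coupled/uncoupled convention for $l=0$), my plan is to verify two properties of the scheme: the telescoping identity $\sum_{l,p}\mathbb{E}[D(l,p)] = \theta_\star$ and the summability $\sum_{l,p}\mathbb{E}[\|D(l,p)\|^2]/(\mathbb{P}_{\mathtt{L}}(l)\mathbb{P}_{\mathtt{P}}(p))<\infty$. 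Together with Fubini these deliver $\mathbb{E}[\widehat{\theta}_\star^j]=\theta_\star^j$ and $\mathbb{E}[(\widehat{\theta}_\star^j)^2]<\infty$, which is the content of Theorem \ref{theo:main}.

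The work therefore reduces to four a-priori estimates. First, an $L^2$ convergence rate for MSA at each fixed level: under (A\ref{ass:grad_bound})--(A\ref{ass:conv_sa}), together with the geometric ergodicity of $K_{\theta,l}$ in $\theta$ (which follows from the uniform two-sided bound (A\ref{ass:weight}) on the incremental weights, exactly as for the CPF/CCPF analysis in \cite{ub_par,andrieu}), standard Markovian SA arguments (\cite{andr,andr_moulines1,fort}) give $\mathbb{E}[\|\theta^l_{N_p}-\theta_\star^l\|^2]\leq C/N_p$ for a suitable decreasing step-size $\gamma_n=n^{-a}$, $a\in(1/2,1]$. Second, a discretization bound $\|\theta_\star^l-\theta_\star^{l-1}\|\leq C\Delta_l^{1/2}$, which follows from (A\ref{ass:weight_disc}) applied to the score identity \eqref{eq:score_disc} and the uniqueness in (A\ref{ass:conv_sa}). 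Third, a coupled MSA $L^2$ bound of the form $\mathbb{E}[\|\theta^l_{N_p}-\theta^{l-1}_{N_p}\|^2]\leq C\Delta_l$, obtained by induction along the coupled recursion driven by $\check{K}_{\phi,l}$: the increment
\begin{equation*}
(\theta^l_n-\theta^{l-1}_n) - (\theta^l_{n-1}-\theta^{l-1}_{n-1}) = \gamma_n\{H_l(\theta^l_{n-1},\mathbf z^{n,l})-H_{l-1}(\theta^{l-1}_{n-1},\bar{\mathbf z}^{n,l-1})\}
\end{equation*}
is controlled via (A\ref{ass:coup_bridge})(iii)--(iv) and (A\ref{ass:weight_disc})(ii), giving a contribution scaling like $\|\theta^l_{n-1}-\theta^{l-1}_{n-1}\|+\Delta_l^{1/2}+\|\mathbf z^{n,l}-\bar{\mathbf z}^{n,l-1}\|_{\mathrm{coup}}$, while the moment bound (A\ref{ass:mom_cond}) together with the CCPF coupling properties propagates the state discrepancy at rate $\Delta_l^{1/2}$. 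Combining these with the first estimate yields an $L^2$ bound on $D(l,p)$ of the form $C\Delta_l\cdot(\gamma_{N_{p-1}}+\gamma_{N_p})$ when $l\geq 1$, and $C\gamma_{N_p}$ on the diagonal increments when $l=0$.

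Fourth and finally, one has to choose $\mathbb{P}_{\mathtt{L}}$, $\mathbb{P}_{\mathtt{P}}$ and $(N_p)_{p\geq 0}$ so that the product sum converges. Following the template of \cite{ub_par,rhee}, I would take $N_p = N_0\cdot 2^p$ (so $\gamma_{N_p}\asymp 2^{-ap}$) and distributions $\mathbb{P}_{\mathtt{L}}(l)\propto 2^{-l(1+\delta)}$, $\mathbb{P}_{\mathtt{P}}(p)\propto 2^{-p(1+\delta)}$ for some small $\delta>0$; the bound on $\mathbb{E}[\|D(l,p)\|^2]$ combined with these choices makes each double sum finite, and by the same estimates one may interchange expectation and summation to obtain the telescoping identity $\sum_{l,p}\mathbb{E}[D(l,p)]=\lim_l \theta_\star^l =\theta_\star$ using (A\ref{ass:conv_sa})(ii)--(iii). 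The main obstacle is the third estimate above, namely establishing the coupled $L^2$ contraction of the MSA iterates driven by $\check{K}_{\phi,l}$ at rate $\Delta_l$: this requires pushing the coupling properties of the CCPF with backward sampling (meeting times, variance of coupled weights) through the drift-plus-noise recursion of Markovian SA, and is where the bulk of the appendix effort will lie.
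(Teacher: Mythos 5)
Your overall architecture is the same as the paper's: a single-term Rhee--Glynn/Vihola estimator with double randomization over $(l,p)$, reduced to verifying a telescoping identity plus the weighted square-summability $\sum_{(l,p)}\mathbb{E}[\|D(l,p)\|^2]/(\mathbb{P}_{\mathtt{P}}(p)\mathbb{P}_{\mathtt{L}}(l))<\infty$ and invoking \cite[Theorem 3]{matti}. The difference --- and the genuine gap --- lies in how you propose to bound $\mathbb{E}[\|D(l,p)\|^2]$. Your first and third estimates assert an $L^2$ convergence rate $\mathbb{E}[\|\theta^l_{N_p}-\theta_\star^l\|^2]\le C/N_p$ for the Markovian SA iterates and a coupled rate $\mathbb{E}[\|\theta^l_{N_p}-\theta^{l-1}_{N_p}\|^2]\le C\Delta_l$, claiming these follow from ``standard Markovian SA arguments'' plus geometric ergodicity of the CPF kernel. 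Under the stated assumptions this is not available: (A\ref{ass:conv_sa})(iii) gives only almost sure convergence with no rate, and turning uniform ergodicity of $K_{\theta,l}$ into a quantitative $L^2$ rate for iterates driven by an adapted, parameter-dependent kernel requires the martingale-plus-remainder decomposition and solutions of the Poisson equation for the CPF/CCPF chains --- precisely the step the paper identifies (Section \ref{sec:theory}, ``Implication of Main Result'') as very technically challenging and left open. Your proposed settings $\gamma_n=n^{-a}$ with $a\in(1/2,1]$, $N_p=N_0 2^p$, $\mathbb{P}_{\mathtt{L}}(l)\propto 2^{-l(1+\delta)}$ are exactly the ``sensible'' choices the paper says it \emph{cannot} justify with current techniques. (Your second estimate, $\|\theta_\star^l-\theta_\star^{l-1}\|\le C\Delta_l^{1/2}$, also does not follow from uniqueness of the maximizers alone; one would need uniform nondegeneracy of the Hessian at $\theta_\star^l$.)

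The paper sidesteps all of this by never appealing to a convergence rate of the SA at all. It writes $\overline{\theta}_{\star}(p,l)=\sum_{n=N_{p-1}-1}^{N_p-1}\gamma_n\{H_l(\theta^l_{n-1},\mathbf{z}^{n,l})-H_{l-1}(\theta^{l-1}_{n-1},\bar{\mathbf{z}}^{n,l-1})\}$ and bounds each one-step increment by Lemma \ref{lem:lem15}, which yields $\mathbb{E}[\|H_l-H_{l-1}\|^2]^{1/2}\le C(n+1)\Delta_l^{\tilde\vartheta_2}$ --- a bound that \emph{grows} linearly in $n$ but decays in $l$. Minkowski then gives $\mathbb{E}[\|\overline{\theta}_\star(p,l)\|^2]\le C\Delta_l^{2\tilde\vartheta_2}\{\sum_n\gamma_n(n+1)\}^2$, and the growth in $n$ is absorbed by choosing a \emph{summable} step size $\gamma_n=(n+1)^{-(2+\kappa)}$ together with the slow schedule $N_p=p+1$ and $\mathbb{P}_{\mathtt{P}}(p)\propto(p+1)^{-(1+\kappa)}$, $\mathbb{P}_{\mathtt{L}}(l)\propto\Delta_l^{\tilde\vartheta_2}$. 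This is cruder than your plan (and the paper admits it is not optimal), but it only requires the one-step coupling lemmata that the assumptions actually deliver. If you want to salvage your route, you must either add an assumption giving the $L^2$ rate of the MSA, or carry out the Poisson-equation analysis; otherwise you should fall back on the paper's naive-recursion argument with summable $\gamma_n$.
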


\subsection{Implication of Main Result}

If one looks at the proof of Theorem \ref{theo:main} the suggestion is that $N_p=p+1$,  $\gamma_n=(n+1)^{-(2+\kappa)}$,  $\kappa>0$, $\mathbb{P}_{\mathtt{P}}(p)\propto (p+1)^{-(1+\kappa)}$ and $\mathbb{P}_{\mathtt{L}}(l)\propto\Delta_l^{\tilde{\vartheta}_2}$,  $\tilde{\vartheta}_2>0$.  In general this suggestion is not optimal and relates to our proof strategy as we now detail.  The approach that we use relies extensively on the ideas in \cite{ub_grad} which are associated to one-step properties of  $\check{K}_{\phi,l}$.  This in itself is reasonable, but the approach in \cite{ub_grad} is not sharp enough to maintain the correct coupling rate in $l$; see the discussion of \cite[Section 4]{ub_grad}.  Then, in addition,  we do use the convergence of the MSA method, instead relying naively on the SA recursion formula.  To improve our analysis in the latter direction,  one could use the Martingale plus remainder 
decomposition used (for example) in \cite{fort}.  The complication that this would yield is that the afore-mentioned one-step properties associated to $\check{K}_{\phi,l}$ would have to relate to solutions of the Poisson equation for Markov chains \cite{glynn} which,  given the CPF and CCPF that we use, would be very technically challenging; none-the-less we see this as an interesting avenue for future work.

Now,  considering a sensible setting of the parameters,  we believe that a sharp bound on $\mathbb{E}[\{\widehat{\theta}_{\star}^j\}^2]$ for any $j\in\{1,\dots,d_{\theta}\}$ would be an expression of the type
$$
\sum_{(l,p)\in\mathbb{N}_0}\frac{\Delta_l^{\tfrac{1}{2}}\gamma_{N_p}}{\mathbb{P}_{\mathtt{P}}(p)\mathbb{P}_{\mathtt{L}}(l)}.
$$
Then one could set $N_p=2^p$,  $\gamma_n=(n+1)^{-1}$,  $\mathbb{P}_{\mathtt{P}}(p)\propto 2^{-p}(p+1)\log_2(p+2)^2$,  $\mathbb{P}_{\mathtt{L}}(l)\propto 2^{-l/2}(l+1)\log_2(l+2)^2$ and this would give a finite variance and unbiased estimator.  \cite[Section 3]{ub_filt} provide a discussion of the associated computational complexity and we refer the reader there.

\section{Numerical Results}\label{sec:numerics}

In this section, we consider two implementations of the methodology developed. The first is based on an 
Ornstein-Uhlenbeck (OU) diffusion with Gaussian observations, while the second uses a logistic diffusion with negative binomial observations. The OU model serves as a convenient test case since it allows us to compute various quantities, such as the filtering distribution, the smoothing distribution, the likelihood of the observations, and its gradient, making possible to perform comparisons at every step of the algorithm. The logistic diffusion, in contrast, is applied to real data with irregular observation times (as mentioned earlier, the extension to irregularly spaced observations is theoretically straightforward).

The choices required for the algorithm (aside from hyperparameters such as the number of particles, the step size of the SA, and others) are the auxiliary process $\tilde{X}_t$ and the proposal transition $\bar{f}_{\theta,k-1,k}( x_k|x_{k-1})$. For both implementations, we are able to identify natural choices for these components that closely approximate the original process. The code is written in Python and it is available in \url{https://github.com/maabs/Bridge}.

\subsection{OU Example}

The OU process is defined as  
\begin{align*}
    dX_t = \theta_1 X_t \, dt + \theta_2 \, dW_t, 
    \quad X_0 = x_0, 
    \quad t \in \mathbb{R}^+,
\end{align*}
where $X_t \in \mathbb{R}$, $\theta_1 < 0$, and $\theta_2 > 0$. Observations are made at unit times and follow  
\[
    Y_t \sim \mathcal{N}(X_t, R), \qquad R = \theta_3^2, \ \theta_3 > 0.
\]
In this problem we know not only the exact solution of the OU SDE but we can obtain the likelihood and its gradient using Kalman smoothing.  The gradient field of the log-likelihood function is shown in Figure~\ref{fig:GF}. 
The score function is expressed in terms of synthetic observations from a realization of the system 
with parameters $T=10$, $(\theta_1,\theta_2,\theta_3)=(-0.3,0.8,0.55)$. 
Figure~\ref{fig:GF} is plotted using $\theta_3=0.55$.  
\begin{figure}[h!]
  \centering
  \includegraphics[width=0.8\textwidth,height=6cm]{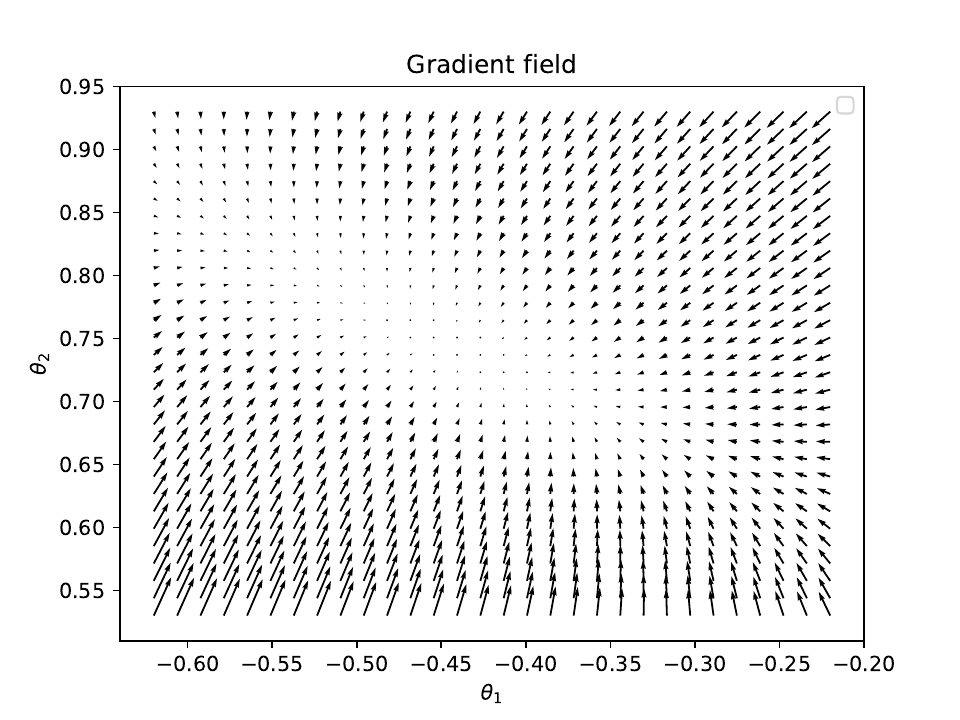}
  \caption{Gradient field of the log-likelihood. The observation of the score function comes from a realization of the observation process with time $T=10$, conditioned on a realization of the diffusion. The parameters of the system are $(\theta_1,\theta_2,\theta_3)=(-0.3,0.8,0.55)$.}
  \label{fig:GF}
\end{figure}

The auxiliary process is taken to be a Gaussian process with different parameters, i.e.,
\[
    d\tilde{X}_t = \tilde{\theta}_1 \tilde{X}_t \, dt + \tilde{\theta}_2 \, dW_t, 
    \quad \tilde{X}_0 = x_0.
\]
Absolute continuity requires $\theta_2 = \tilde{\theta}_2$. The transition probability of this process is 
\[
    \tilde{f}_{\theta,k-1,k}(x_k|x_{k-1}) 
    = \varphi\!\left(x_k;\tilde{F}x_{k-1},\tilde{V}\right),
\]
with $\tilde{F} = e^{\tilde{\theta}_1}$ and 
\[
    \tilde{V} = \frac{\tilde{\theta}_2^{2}}{2\tilde{\theta}_1}\left(e^{2\tilde{\theta}_1} - 1\right),
\]
where $\varphi(x;\mu,\sigma^2)$ denotes the normal density with mean $\mu$ and variance $\sigma^2$.

The proposal bridge from $X_{s_1}^\circ = x$ to $X_{s_2}^\circ = x'$, $0 < s_1 < t < s_2$, is defined by  
\[
    dX_t^\circ 
    = \left[\theta_1 X_t^\circ + \theta_2^2 \nabla_x \log \tilde{f}_{\theta,t,s_2}( x'|X_t^\circ) \right] dt 
      + \theta_2 \, dW_t, 
    \quad X_{s_1}^\circ = x.
\]
The difference between the drifts of the original process and the auxiliary process is 
\[
    \mu_\theta(t,x) - \tilde{\mu}_\theta(t,x) = x(\theta_1 - \tilde{\theta}_1),
\]
and the difference between the squared diffusion coefficients is 
\[
    \Sigma(t,x) - \tilde{\Sigma}(t,x) = 0,
\]
where the latter result spares us from computing $\nabla_x^2 \log \tilde{f}_{\theta,t,s}(x'|x)$. As the proposal transition, we choose the distribution of the auxiliary process.  

The parameters chosen for the original model are $(\theta_1,\theta_2,\theta_3) = (-0.3,0.8,0.8)$, and the auxiliary process has parameter $\tilde{\theta}_1 = -0.1$.
In the following, we discuss the parameters for the implementation of the unbiased estimator and its estimates of error and cost. The choice of the number of particles, $N$, was made by examining the mixing of the MCMC, as illustrated in Figure~\ref{fig:mcmc_steps}. Mixing improves as $N$ increases: for $N=10$, there are several instances in which a sample variable is rejected (at some times, though not necessarily all); this effect is less pronounced for $N=30$, and for $N=50$ it is almost negligible. For our simulations we therefore set $N=50$. We note that backward sampling allows \emph{partial rejection}, meaning that in some paths, rejections occur only at certain times rather than across the entire path.
 
The stochastic approximation step we use is 
\[
    \gamma_n = \gamma_0 \frac{1}{n^{\alpha+0.5}}, 
    \qquad (\gamma_0,\alpha) = (0.2,0.5).
\]
The value we choose through the whole project is $\alpha=0.5$.
We say a time discretization $l\in \mathbb{Z}^+$ has discretization step size $\Delta_l = 2^{-l}$. Similarly, a discretization in the number of steps is parametrized by $p \in \mathbb{Z}^+$, with the sequence of step counts $n_p = n_0 2^p$, where $n_0 \in \mathbb{Z}^+$.  

Given the practical nature of computing the unbiased estimator, it is necessary to truncate the probability distributions of the levels. This truncation prevents unbounded computation times and excessive memory usage. The truncation is specified by the maximum levels of discretization, denoted by $L_{\text{Max}}$ and $P_{\text{Max}}$.  
The probability densities of the levels for the unbiased estimator are defined as
\begin{align}
      \mathbb{P}_L(l) \propto (q+l-l_0)\log^2(q+l-l_0)\,\frac{1}{2^{l}}, 
    \qquad l \in \{l_0+1,\ldots,L_{\text{Max}}\},
\label{eq:P_l}
\end{align}  
with $q=4$, $l_0=4$, and $L_{\text{Max}}=11$. The parameter $q$ is introduced to ensure that the probability function is decreasing in $l$ over its domain. Likewise,
\begin{align}
    \mathbb{P}_P(p) \propto (q+p)\log^2(q+p)\,\frac{1}{2^{p}}, 
    \qquad p \in \{1,\ldots,P_{\text{Max}}\},
    \label{eq:P_p}
\end{align}
 with $P_{\text{Max}}=14$.   The parameters $P_{\text{Max}}$ and $L_{\text{Max}}$ are chosen so that the resulting truncation bias is negligible compared to the target mean square error (MSE). The initial levels $l_0$ and $n_0$ are selected so that the variance of the base levels is comparable to the second moment of the difference in subsequent level couplings.  

The parallel computation of the unbiased estimators was implemented in two layers. The first layer used the \href{https://docs.python.org/3/library/multiprocessing.html}{\texttt{multiprocessing}} library in Python, with a single node of 40 cores and one core allocated per independent computation. The second layer leveraged the high-performance computing cluster \href{https://docs.hpc.kaust.edu.sa/systems/ibex/}{Ibex}, using embarrassingly parallel computations across 32 different nodes.  

Using the analytical score function in conjunction with the gradient descent method, we approximate the analytical solution of the system (the maximum likelihood parameters). This solution $\theta^\star_{\text{app}}$ is then taken as the true parameter values in the approximation of the MSE.  
We approximate the MSE by drawing from a pool of $519 \times 10^3$ independent samples of single unbiased estimators. 
From this pool, we construct $B$ unbiased estimators, each defined as the average of $M$ single estimators $\Xi_{M}^i$, $i\in \{1,\cdots,B\}$, where  $\frac{1}{M}\sum_{j=1}^M \Xi^j $. The MSE is then approximated by applying
\begin{align}
    \varepsilon^2 = \frac{1}{B}\sum_{i=1}^B \left(\Xi_{M}^i - \theta^{\star}_{\text{app}}\right)^2.
    \label{eq:mse_app}
\end{align}

Figure \ref{fig:unb_ou} has been constructed by increasing amount of samples $M$ for the average, increasing the cost and decreasing the MSE this way. We can observe that the error-to-cost rates of our algorithm are slightly smaller than the canonical Monte Carlo error-to-cost rate, as expected from unbiased estimators.

\begin{figure}[htbp]
    \centering
        \includegraphics[width=0.7\linewidth,height=5cm]{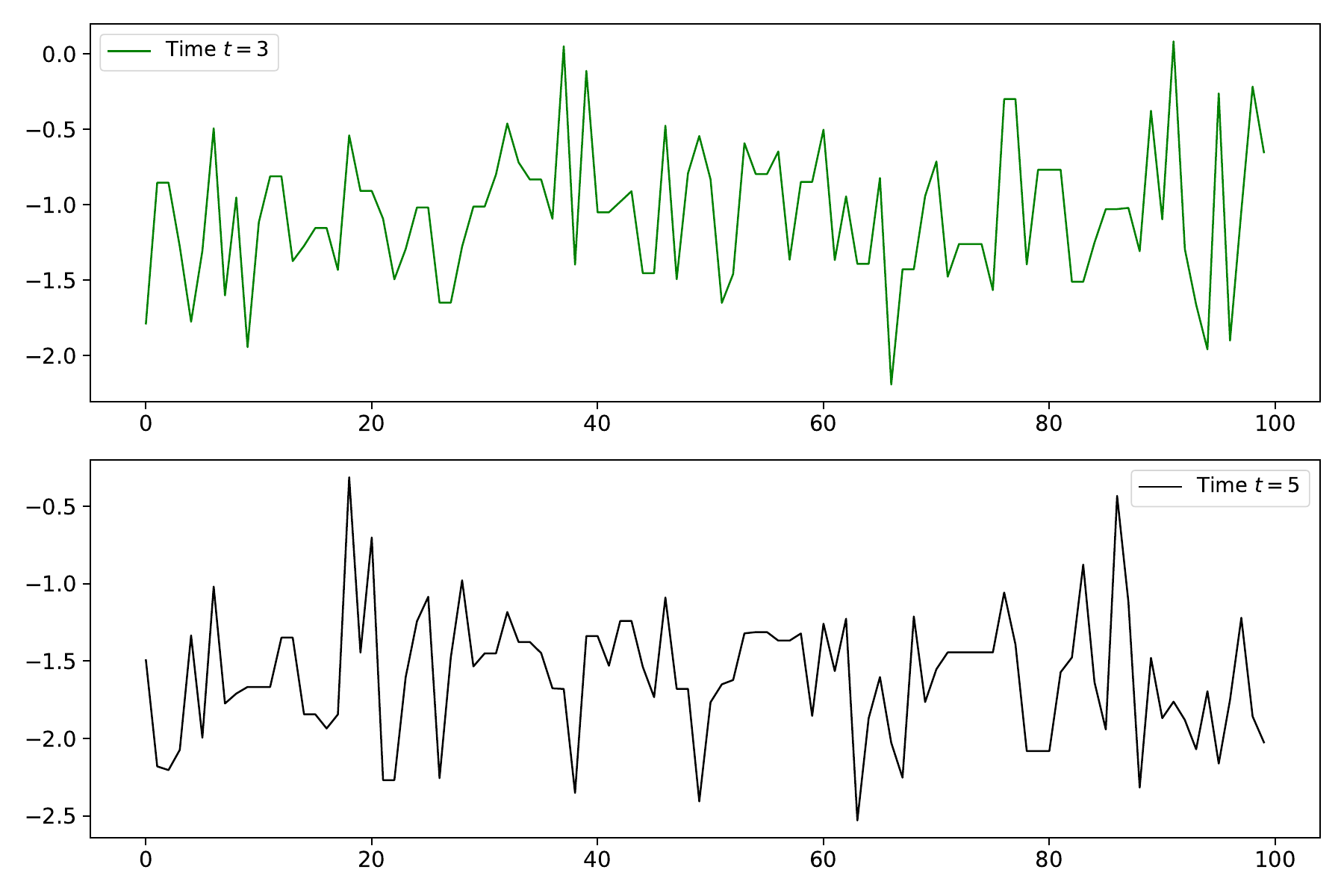}
        \includegraphics[width=0.7\linewidth,height=5cm]{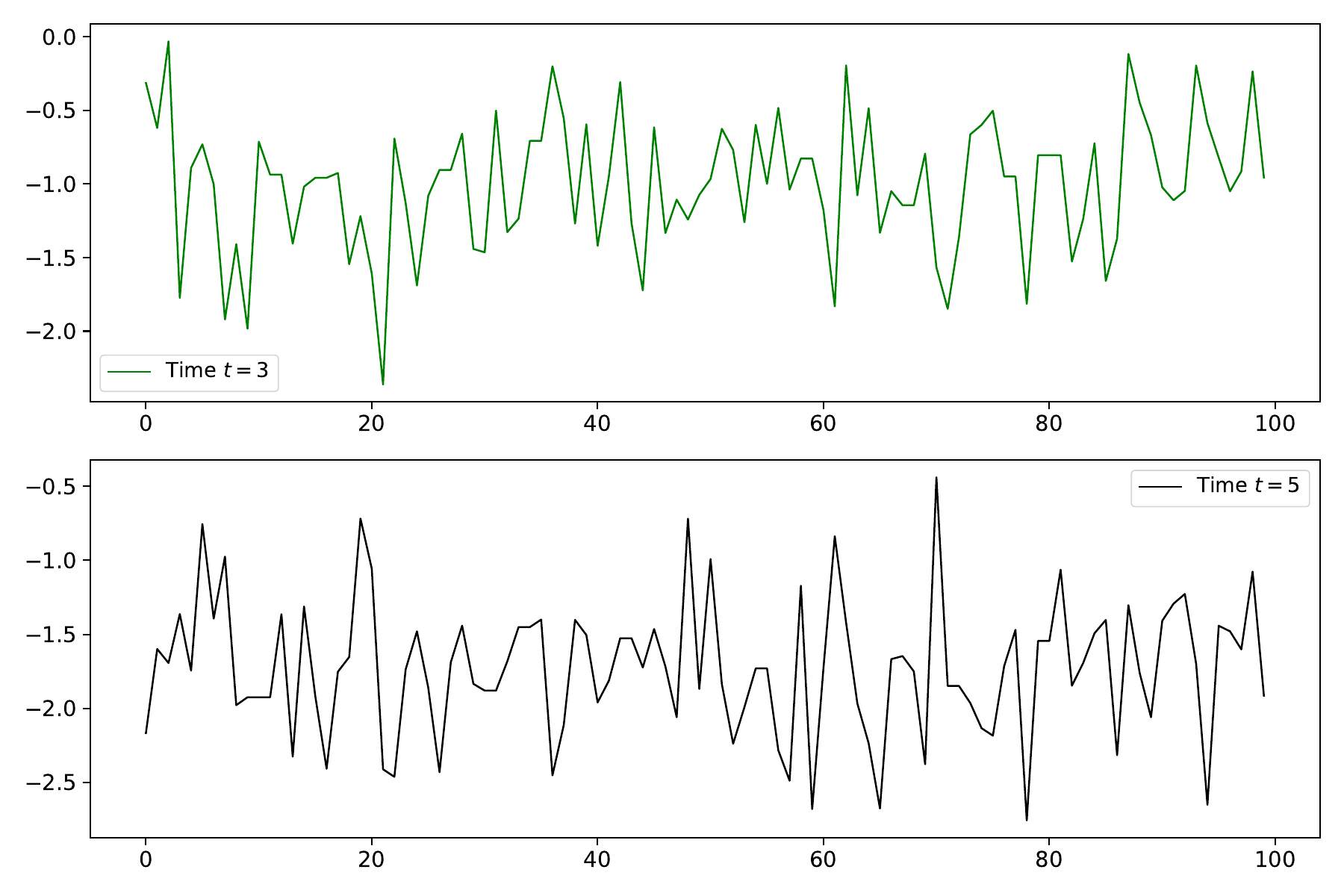}
        \includegraphics[width=0.7\linewidth,height=5cm]{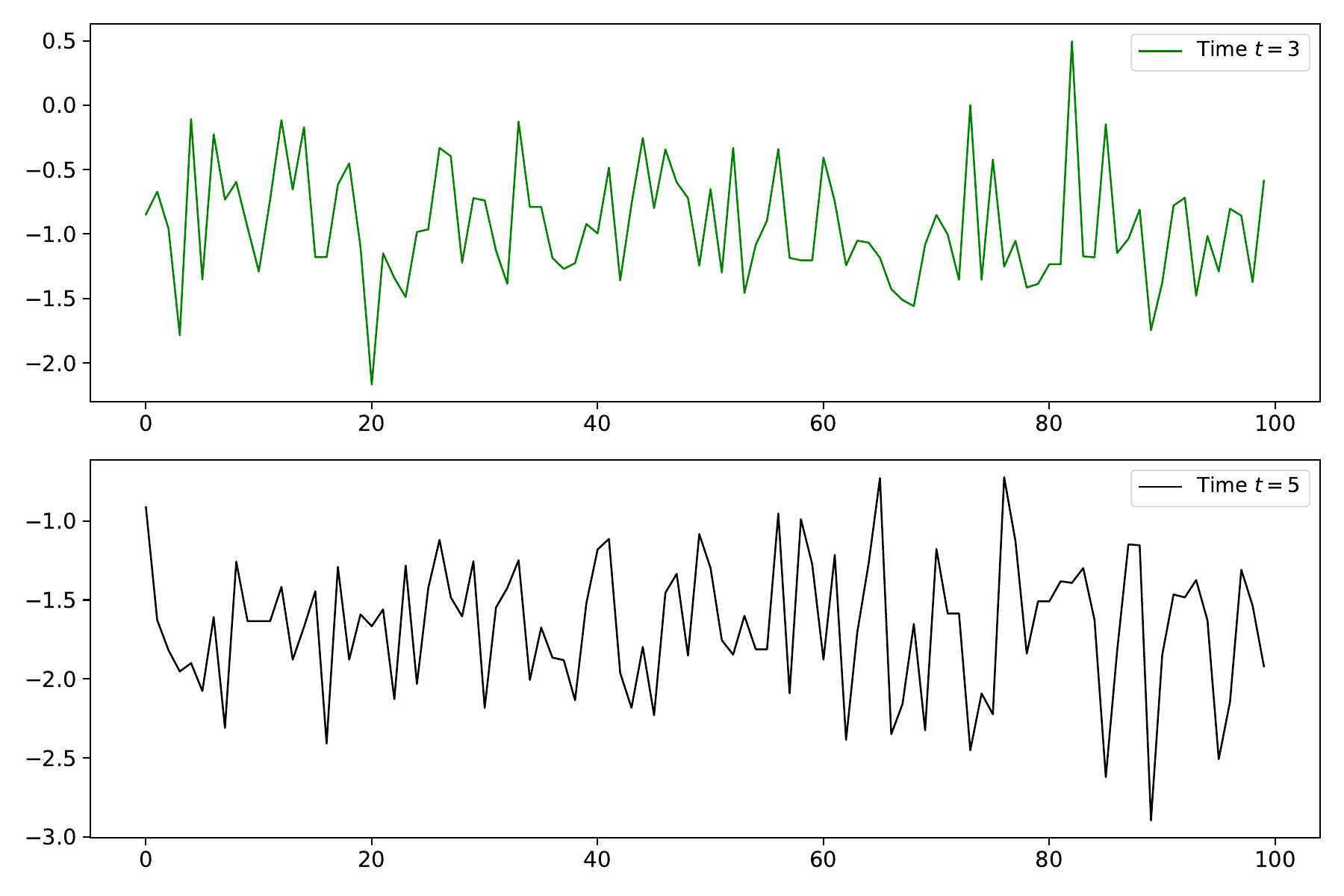}
    \caption{Plots of smoothing samples in terms of the number of MCMC steps.  The plots are in pairs corresponding to $N=10$,  then $N=30$ and finally $N=50$ from top to bottom, each element of the pair corresponds to times $t=3$ and $t=5$.}
        \label{fig:sub1}
    \label{fig:mcmc_steps}
\end{figure}

\begin{figure}[htbp]
  \centering
\includegraphics[width=0.5\linewidth,height=4cm]{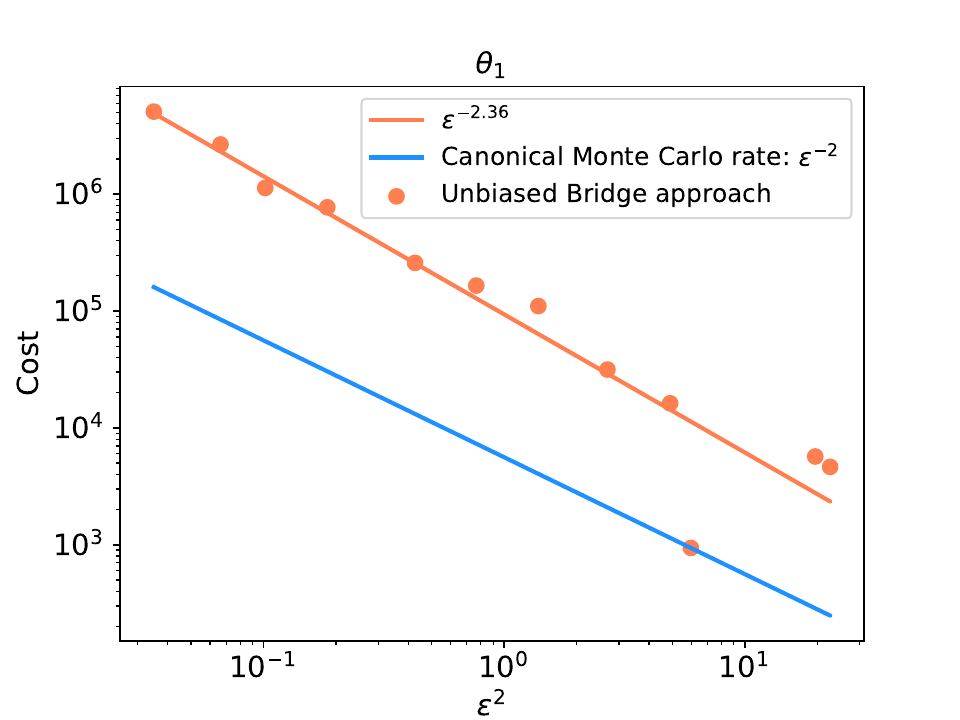}
    \includegraphics[width=0.5\linewidth,height=4cm]{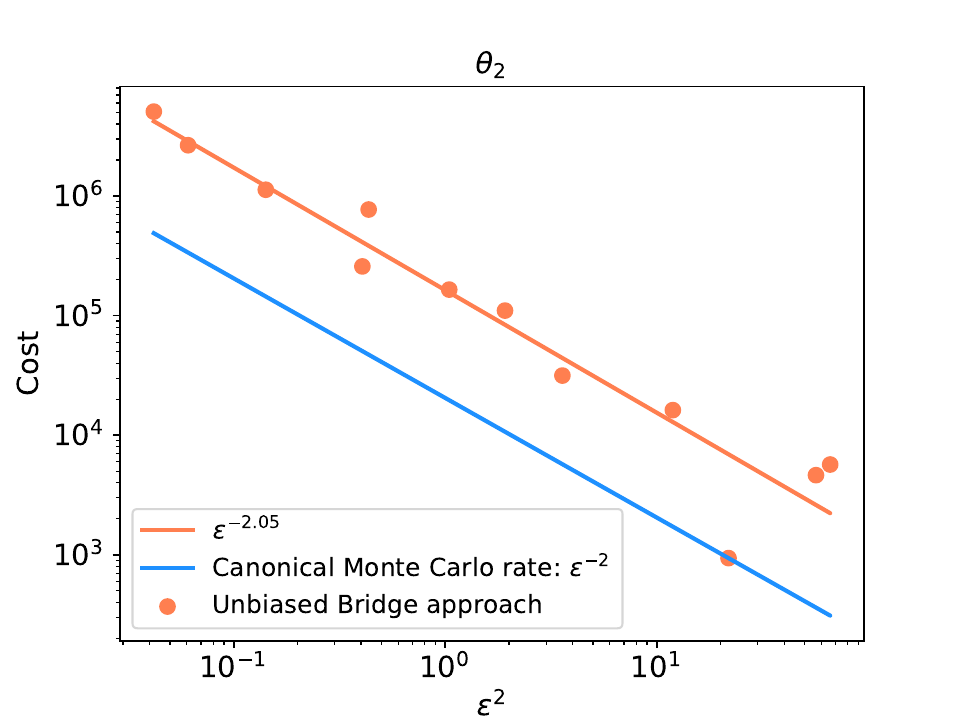}
    \includegraphics[width=0.5\linewidth,height=4cm]{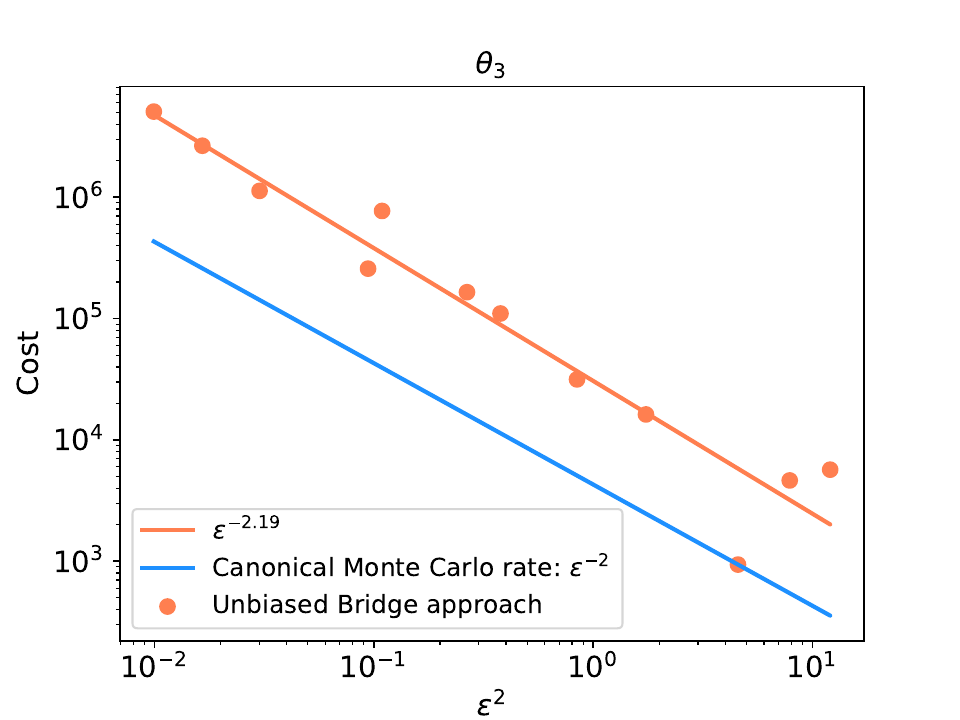}
  \caption{Computation of the estimated MSE ($\varepsilon^2$) vs the cost of computing the Unbiased estimator for the parameters $(\theta_1,\theta_2,\theta_3)$.}
  \label{fig:unb_ou}
\end{figure}

\subsection{Logistic Diffusion}

We consider a case from population ecology, focusing on forecasting the population trends of red kangaroos (Macropus rufus) in New South Wales, Australia \cite{cs87}. The unobserved population size  $X = \{X_t\}_{t \ge t_1}$ (with $t_1 > 0$) is modeled using a logistic diffusion framework \cite{dc88} that incorporates environmental variability. 

\begin{align}
dX_t=\left(\frac{\theta_3^2}{2}+\theta_1-\theta_2 X_t\right)X_tdt +\theta_3 X_t dW_t, \quad X_{t_1}\sim \Gamma (\alpha,\lambda)
\label{eq:kangSDE}
\end{align}
for $\alpha = \tfrac{\theta_1}{2\theta_3^2}$ and $\lambda = \tfrac{\theta_2}{2\theta_3^2}$ where $\Gamma(\alpha,\lambda)$ represents the Gamma distribution with shape $\alpha$ and rate $\lambda$. We model the initial distribution as a gamma density, which is the invariant distribution of the logistic diffusion. This choice ensures that the process begins in its stationary regime, thereby avoiding transient effects and aligning with the long-run equilibrium that the model is designed to capture \cite{dc88}.  

We consider two sequences of observations  which, conditional on the underlying population, are independent of each other. The observations $(Y_{t_1},Y_{t_2},\cdots,Y_{t_k})$, $t_1<t_2<\cdots<{t_k}$, $Y_{t_i}\in(\mathbb{R}^{+})^2$, are modeled using negative binomial distributions to account for overdispersion \cite{bf53}
\begin{align*}
    Y_{t_i}^s| X_{t_i} \stackrel{\text{ind}}{\sim} \mathcal{NB}\left(\theta_4,\theta_4(x_{t_i}+\theta_4)^{-1}\right), \quad s\in\{1,2\},  i\in \{1,\cdots,k\}, \theta_4\in \mathbb{R}^+.
\end{align*}
$Y_{t_i}=(Y_{t_i}^1,Y_{t_i}^2)$,  $\text{ind}$ means independently,  and the negative binomial density  
\begin{align*}
 \frac{\Gamma(y+\theta_4)}{\Gamma(\theta_4)\,\Gamma(y+1)}
\left(\frac{\theta_4}{X_{t_i}+\theta_4}\right)^{\!\theta_4}
\left(\frac{X_{t_i}}{X_{t_i}+\theta_4}\right)^{\!y},
\qquad y\in \mathbb{Z}^+\cup \{0\}
\end{align*}

In the following we discuss the choices made in terms of auxiliary process and proposal transition.
Following \cite{schauer} we approximate \eqref{eq:kangSDE} as a linear SDE in terms of the population, the most general auxiliary process we can propose is
\begin{align*}
    d\tilde{X}_t=\left[v(t)\tilde{X}_t+c(t)\right]dt+\left[f(t)\tilde{X}_t+r(t)\right] dW_t, 
\end{align*}
the choice $f(t)=\theta_3,$ $r(t)=0$ is straightforward since \eqref{eq:kangSDE} has a linear diffusion coefficient already, this choice satisfies $\Sigma(t_i)=\tilde{\Sigma}(t_i)$, for $i\in \{2,\cdots,k\}$.

For the approximation of the drift coefficient, we decide to choose a family of functions parametrized by the endpoints $(x_{s_1},x_{s_2})$ of the time interval $s_1\leq t\leq s_2$. The auxiliary drift $\tilde{b}_{s_1,x_{s_1};s_2,x_{s_2}}(t,x)=v(t)x+c(t) $ interpolates linearly (in terms of time) the drift $b(x)$ at the endpoints. i.e., 
\begin{align*}
\tilde{b}_{s_1,x_{s_1};s_2,x_{s_2}}(s_1,x_{s_1})&={b}(x_{s_1}),\\
\tilde{b}_{s_1,x_{s_1};s_2,x_{s_2}}(s_2,x_{s_2})&={b}(x_{s_2}).
\end{align*}
Let $v(t)=\zeta t+\eta$, then
\begin{align*}
    \zeta&=\frac{b(x_{s_1})}{x_{s_1}(s_1-s_2)}-\frac{b(x_{s_2})}{x_{s_2}(s_1-s_2)},\\
    \eta&=\frac{b(x_{s_1})s_2}{x_{s_1}(s_2-s_1)}-\frac{b(x_{s_2})s_1}{x_{s_2}(s_1-s_2)},\\
\end{align*}
where the dependence of $\zeta$ and $\eta$ on the endpoints is implicit. Furthermore, $c(t)=0$.

The auxiliary transition is a lognormal density
\begin{align*}
    f_{\theta; s_{1},x_{s_1};s_{2},x_{s_2}}(t,x;s_2,x_{s_2})=\frac{\text{exp}\left(\frac{-(\mu_{s_1,x_{s_1};s_2,x_{s_2}}(t,x)-\log(x_{s_2}))^2}{2\theta_3^2(s_2-t)}\right)}{\sqrt{2\pi\theta_3^2(s_2-t})x_{s_2}},
\end{align*}
where $\mu_{s_1,x_{s_1};s_2,x_{s_2}}(t,x)=\log(x)+\frac{\zeta s_2^2}{2}+(\eta-\frac{\theta_3^2}{2})s_2-\left[\frac{\zeta t^2}{2}+(\eta-\frac{\theta_3^2}{2})t \right]$,
with log gradient and hessian matrix
\begin{align*}
   \frac{\partial}{\partial x} \log f_{\theta; s_{1},x_{s_1};s_{2},x_{s_2}}(t,x;s_2,x_{s_2})=&\frac{-(\mu_{s_1,x_{s_1};s_2,x_{s_2}}(t,x)-\log(x_{s_2}))}{x\theta_3^2(s_2-t)},\\
   \frac{\partial^2}{\partial x^2} \log f_{\theta; s_{1},x_{s_1};s_{2},x_{s_2}}(t,x;s_2,x_{s_2})=&\frac{1-\mu_{s_1,x_{s_1};s_2,x_{s_2}}(t,x)+\log(x_{s_2})}{x^2\theta_3^2(s_2-t)}.
\end{align*}
The notation for the transition probability changes slightly with respect to the previous section to emphasize its dependence on the endpoints. The choice of parameterizing the auxiliary processes on the endpoints is intended to improve the variance of the intrinsic importance sampling method induced by the change of measure of the bridges. 

Improving the importance sampling is used a a lead to choose the proposal transition distribution. The most direct choice for the proposal transition is the auxiliar transition (as in the previous example, OU); unfortunately, it cannot be used given the dependence of the auxiliar transition. Unlike the vanilla particle filter of diffusions, the structure of the particle bridge allows implementation of the auxiliary particle filter \cite{ps99,ps01}, since this is out of the scope of the current project, we leave it for future work.

As the proposal transition, we choose to modify the logistic process by replacing one instance of the variables with the expected value of the invariant distribution, i.e.,
\begin{align*}
    dX_t &= \left(\frac{\theta_3^2}{2} + \theta_1 - \theta_2 \, \mathbb{E}(X_\infty)\right) X_t \, dt 
           + \theta_3 X_t \, dW_t, 
           \quad X_\infty \sim \Gamma(\alpha,\lambda), 
           \quad X_{s_1} = x_{s_1}, \\
    dX_t &= \frac{\theta_3^2}{2} X_t \, dt + \theta_3 X_t \, dW_t,
\end{align*}
which corresponds to a geometric Brownian motion.  
This choice could be improved by replacing the expectation with respect to the invariant distribution by the mean of the particle filter at time $s_1$. We leave this refinement for possible future work.

In the following, we present two classes of results. The first concerns the unbiased estimator, while the second provides a comparison between our approach and that in \cite{ub_par} highlighting the difference between implementing the bridge approach with backward sampling and applying Girasnov's theorem (used in \cite{ub_par}). 
Similar to the previous section, we approximate the MSE by drawing from a pool of $117\times 10^3$ independent samples of single unbiased estimators and then using \eqref{eq:mse_app}.
We choose the parameters  
\[
(\theta_1,\theta_2,\theta_3,\theta_4) \;=\; \left(2,\tfrac{2}{522.8},0.8,10\right),
\]  
the value of \(\theta_2\) is set such that the mean of the invariant density, given by \(\theta_1/\theta_2\), coincides with the empirical mean of the observations, namely \(522.8\).  

The number of particles is fixed at \(N=50\), and the stochastic approximation step size is chosen as $\gamma_0 = 5 \times 10^{-3}\,(2,3,0.6,6)$. The probability of the levels for the SA steps follows~\eqref{eq:P_p}, with parameters \(n_0=1\) and \(P_{\text{Max}}=11\). The probability distribution for the time discretization is slightly modified w.r.t. to the OU example to account for the effectiveness of the coupling:  
\begin{align*}
      \mathbb{P}_L(l) \;\propto\; (q+l-l_0)\,\log^2(q+l-l_0)\,\frac{1}{2^{l/2}}, 
      \qquad l \in \{l_0+1,\ldots,L_{\text{Max}}\},
\end{align*}  
where \(l_0=3\) and \(L_{\text{Max}}=8\).  

The true value of the parameters is approximated  $\theta^\star_{\text{app}}$ by running single estimators with a large number of SA steps and a fine time discretization level ($l=13$). 
In addition, we average the final points of the SA and then rerun the algorithm using the previous result as the initial guess; this procedure is repeated three times.  
From these results (Figure \ref{fig:unb_k}), we draw conclusions similar to those in the previous example: the error-to-cost rates are slightly smaller than the canonical Monte Carlo rates.  

\begin{figure}[htbp]
  \centering
    \includegraphics[width=0.45\linewidth,height=4cm]{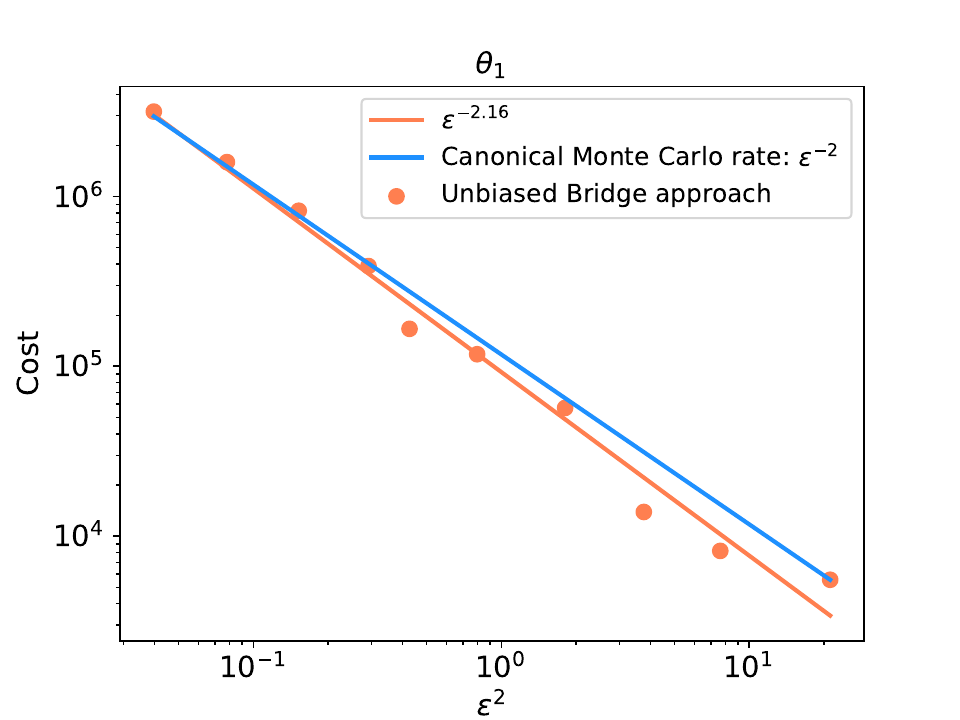}
    \includegraphics[width=0.45\linewidth,height=4cm]{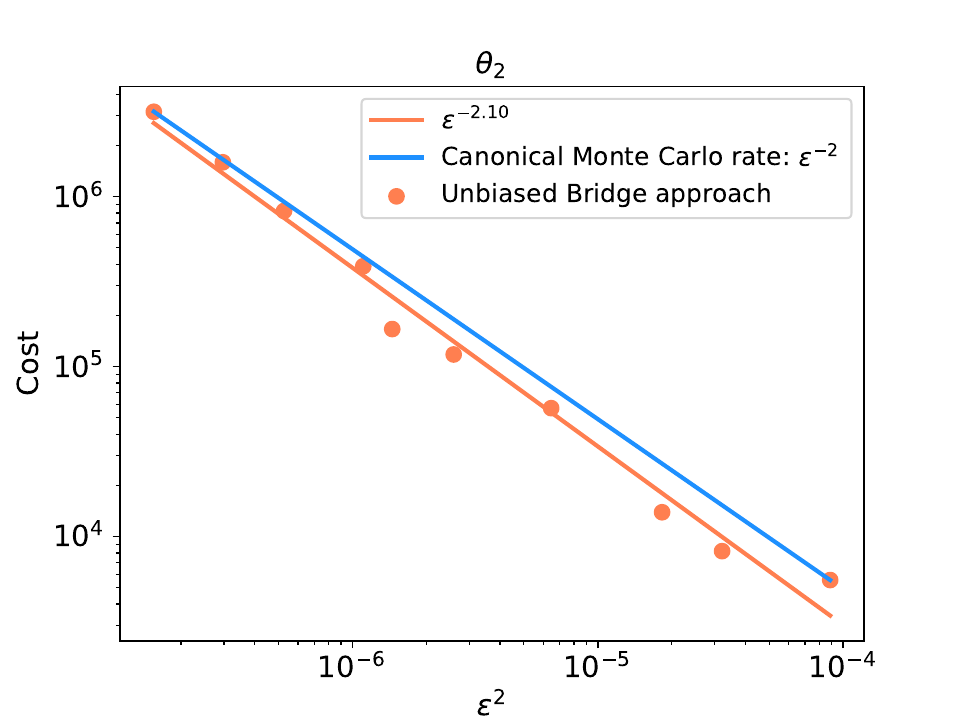}
    \includegraphics[width=0.45\linewidth,height=4cm]{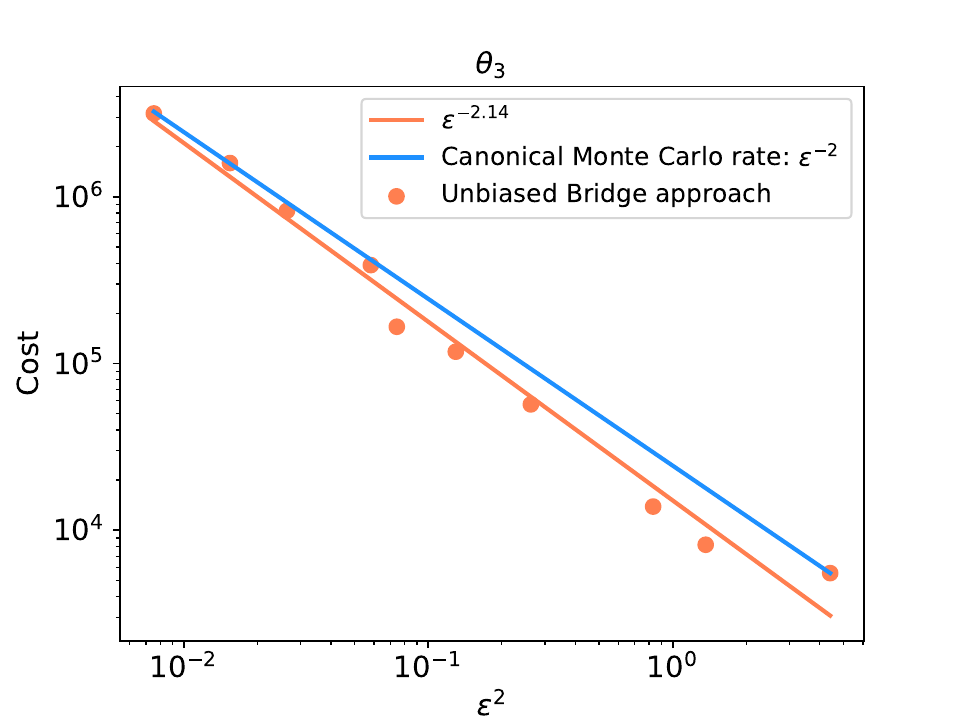}
\includegraphics[width=0.45\linewidth,height=4cm]{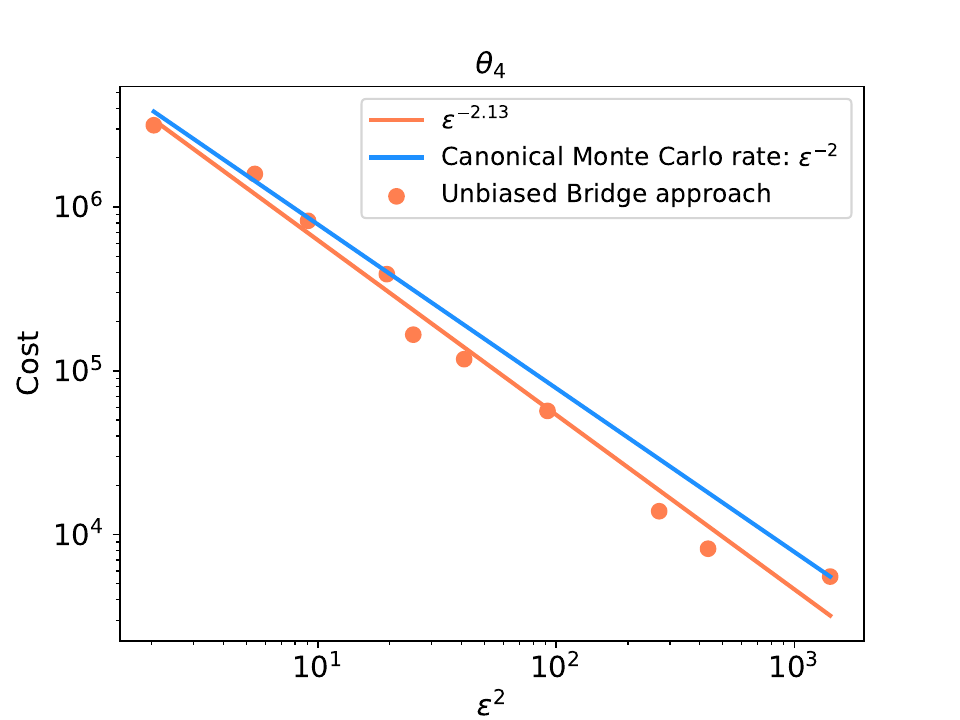}
\caption{Computation of the estimated MSE ($\varepsilon^2$) vs the cost of computing the unbiased estimator for the parameters $(\theta_1,\theta_2,\theta_3,\theta_4)$.}
  \label{fig:unb_k}
\end{figure}

Another way of computing estimators of the score function can be found in \cite{ub_par}. Here, the authors exploit Girsanov's theorem to make a change of variables that allow the computation of the gradient of the likelihood, the change of variable requires the diffusion coefficient not to depend on the static parameters. For comparison of this method to our method, we fix the parameters that depend on the diffusion.

We choose to make this comparison for the following reasons:
	First, the flexibility to allow dependence in the diffusion term comes at the cost of a potential increase in variance, induced by the change of measure associated with the bridges. This increase directly affects the variance of the SA. However, this drawback can be mitigated by carefully selecting the auxiliary process (as in this example, where we use endpoint-dependent auxiliary processes), and it can potentially be further improved by incorporating ideas from the auxiliary particle filter.    
	Second,	our method employs backward sampling within the conditional particle filter, a technique that is naturally supported by diffusion bridges. In the non-bridge setting, backward sampling is still possible but leads to divergences as the time discretization step approaches zero \cite{beskos}. The use of backward sampling mitigates the variance increase induced by the bridge change of measure, improves the mixing rates of the MCMC, and accelerates the convergence of the chain.
    
For the comparison we choose the parameters $(\theta_1,\theta_2,\theta_3,\theta_4)=(2.397, 4.429\times 10^{-3}, 0.84, 17.631)$, with a relatively low number of particles $N=10$, the time step constant of the SA is $\gamma_0=5\times 10^{-4}(2,3,10)$ corresponding to the variables $(\theta_1,\theta_2,\theta_4)$.

The time discretization used in the Girsanov-based method differs in three main ways. First, in our approach the discretization is aligned with the observation times, which results in slightly varying step lengths. In contrast, the approach in \cite{ub_par} employs a uniform step length, and the observation times are approximated to fit this discretization. This discrepancy diminishes as the levels increase. Second, the interpretation of the base level $l_0$ differs. There is an approximate offset of 3.5 levels: let $l_{0,\text{Back}}$ denote the base level of the backward approach and $l_{0,\text{Gir}}$ that of the Girsanov approximation. Then,
$2^{-l_{0,\text{Back}}} \approx 2^{-(l_{0,\text{Gir}}-3.5)}$.
 In practice, we set $l_{0,\text{Back}} = l_{0,\text{Gir}} - 4$. Third, there is a difference arising from the use of diffusion bridges versus the diffusion process itself. Strictly speaking, this is not a difference in the discretization of the time grid (as in the two previous cases), but it affects the intrinsic discretization within the particle filters, thereby making the two methods distinct.

The quantities we compare are the approximation of the score function and the unbiased estimator for a fixed time discretization and SA discretization. To approximate the true values of both (and thereby estimate the error), we use a large number of particles, $N=500$ to ensure that the chain converges relatively quickly, and we sample a large number of chain iterations (1000) to obtain a reliable representation of the score and the unbiased estimator. In addition, we generate a large number of independent samples, $M=500$, to further reduce the statistical error of our approximation of the true values. This procedure is applied to both our algorithm and the Girsanov-based method.

To ensure a meaningful and fair comparison, we take the final estimator of the truth to be the average of the quantities obtained with both methods. These values differ only by a small percentage, which we attribute to the different discretizations used in the two approaches, as explained earlier.
Let $H_5^i$, $i \in \{0,1,\ldots,1200\}$, denote the i.i.d. samples of the score function, and let ${H}_{5,\text{app}}^\star$ be our approximation of the true value at level $l=5$ ($l=9$ for the Girsanov approach). We define the error as $\varepsilon_i = H_5^i - {H}_{5,\text{app}}^\star$. Figure~\ref{fig:bp_score} displays the corresponding boxplots for both methods, for estimating the score function. We observe that the bridge method produces a distribution centered around zero with smaller variance and substantially fewer outliers.  Similarly, let $\Xi_5^i$, $i \in \{0,1,\ldots,1200\}$, denote the unbiased estimator at level $l=5$ with $p=1$, where $n_0=1$ (the initial number of SA approximations is one). The error for the unbiased (parameter) estimator is defined analogously to the score function error. Figure~\ref{fig:bp_unbiassed} displays the corresponding error distribution boxplots, from which conclusions similar to those drawn for the score function can be deduced.

\begin{figure}  
  \centering
    \includegraphics[width=0.45\linewidth,height=4cm]{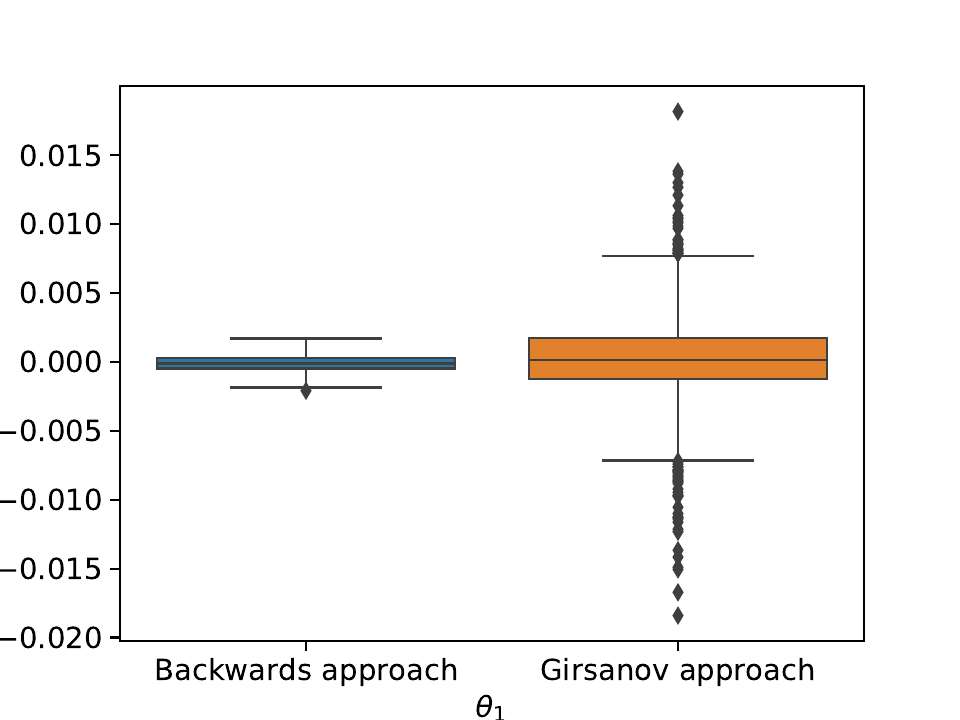}
    \includegraphics[width=0.45\linewidth,height=4cm]{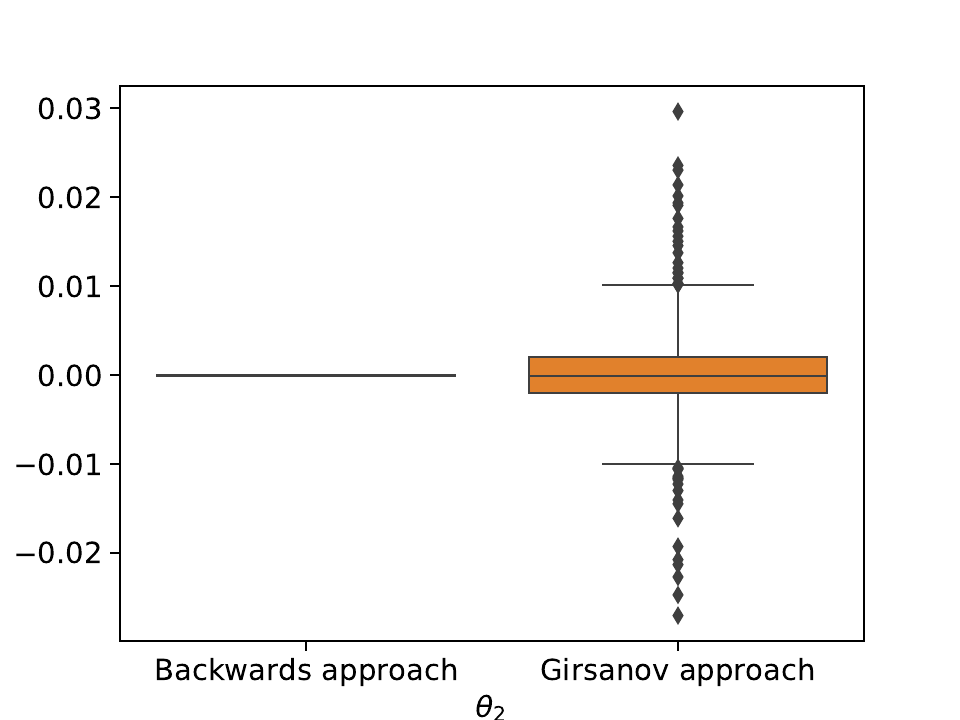}
    \includegraphics[width=0.45\linewidth,height=4cm]{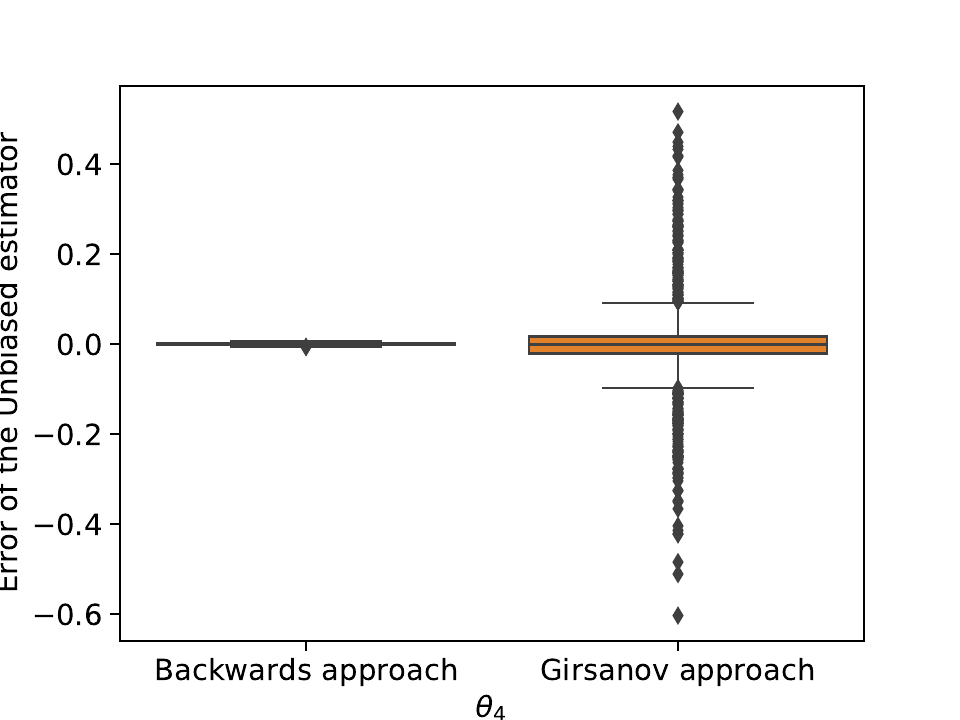 }
  \caption{Boxplots of the errors of the estimators of the score function. Each plot contains both the bridge/backward approach and the Girsanov approach for either $\theta_1,$ $\theta_2$ or $\theta_4$.}
\label{fig:bp_score}
\end{figure}

\begin{figure}
  \centering
    \includegraphics[width=0.45\linewidth,height=4cm]{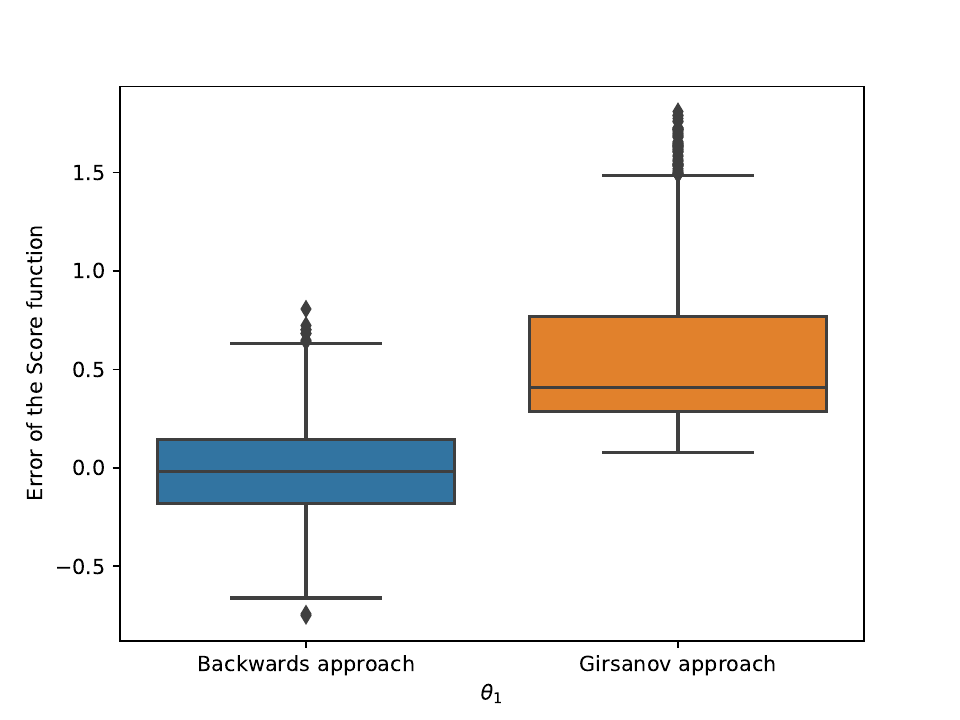}
    \includegraphics[width=0.45\linewidth,height=4cm]{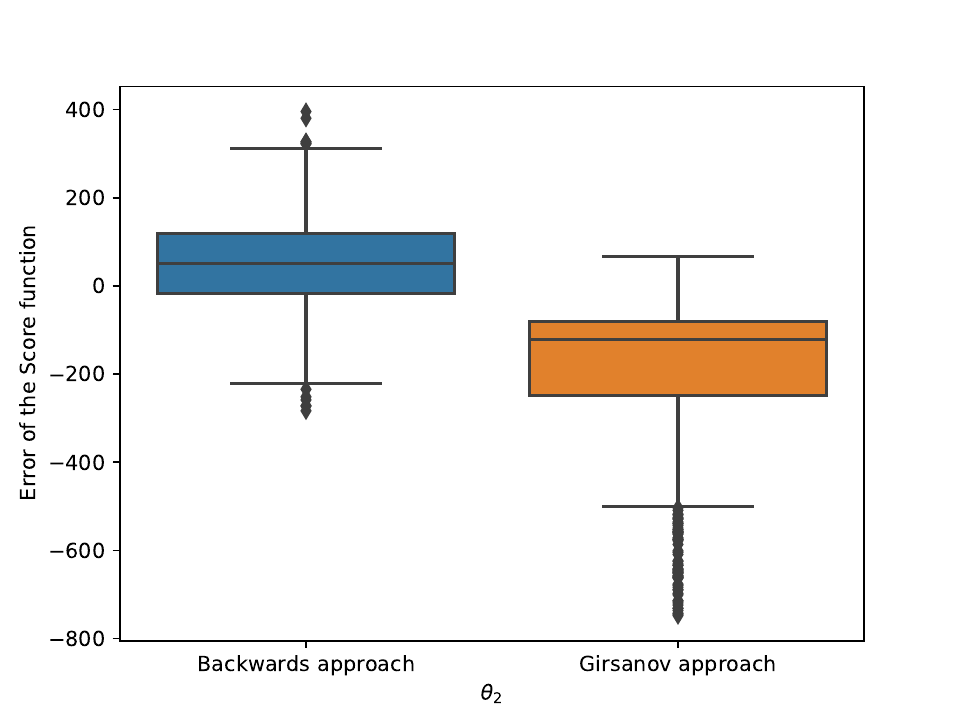}
    \includegraphics[width=0.45\linewidth,height=4cm]{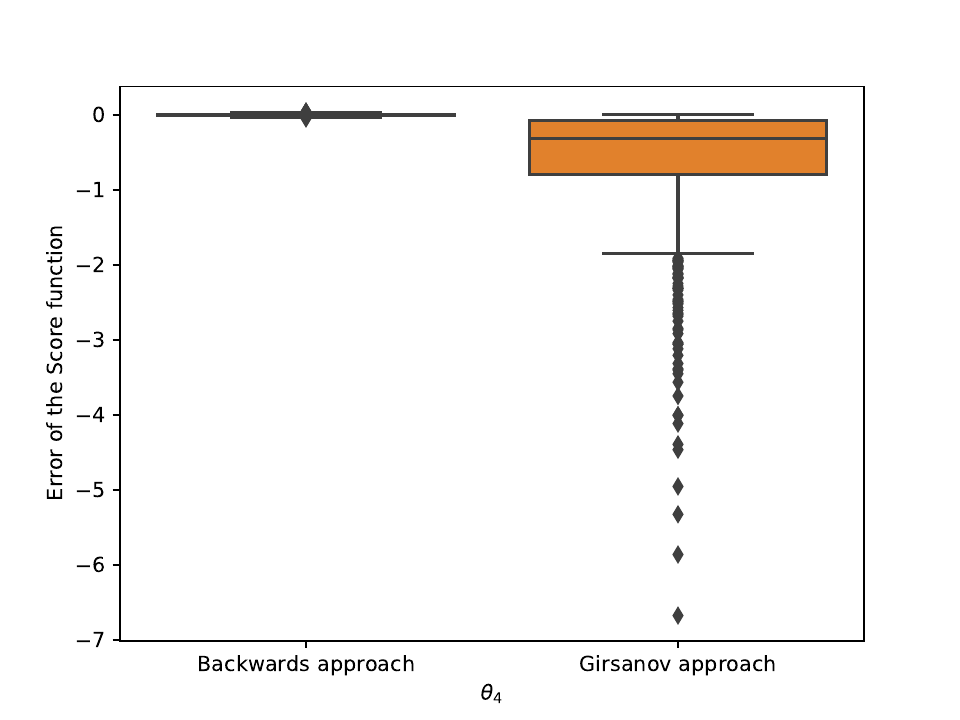 }
  \caption{Boxplots of the errors of unbiased estimators. Each plot contains both the bridge/backward approach and the Girsanov approach for either $\theta_1,$ $\theta_2$ or $\theta_4$.  }
\label{fig:bp_unbiassed}
\end{figure}

\appendix

\section{Mathematical Results}\label{app:math}

\subsection{Structure}

The purpose of this appendix to provide a proof of Theorem \ref{theo:main}.  In order to do this several technical results regarding several elements of Algorithm \ref{alg:USMA} have to be established.  The structure of the appendix,  which should be read in order to understand its contents,  is as follows.  In Appendix \ref{app:1}
we give some results for Algorithm \ref{alg:dtccpbs} when only considering Step 1.~to 3.  In Appendix 
\ref{app:2} further technical results associated to the entire of Algorithm \ref{alg:dtccpbs} (with \textbf{(FC)} replacing Step 5.) are provided.  In Appendix \ref{app:3} we analyze an initialization in Algorithm \ref{alg:USMA} and in Appendix \ref{app:4} Algorithm \ref{alg:USMA}  itself.  Finally in Appendix \ref{app:main}
the proof of Theorem \ref{theo:main} is given.

\subsection{Results for Algorithm \ref{alg:dtccpbs} from Step 1.~to 3.}\label{app:1}

For any $(i,k,s)\in\{1,\dots,N\}\times\{1,\dots,T\}\times\{l-1,l\}$, 
we will write $a_k^{s}(i) = a_k^{i,s}$ and $\bar{a}_k^{s}(i) = \bar{a}_k^{i,s}$. 
Using this notation, for any $(k,l)\in\{1,\dots,T-1\}\times\mathbb{N}$, we define
\begin{align*}
& \mathsf{S}_k^l=\{i\in\{1,\ldots, N\}: a_{k}^l(i)=\bar{a}_{k}^{l-1}(i),a_{k-1}^l\circ a_{k}^l(i)=\bar{a}_{k-1}^{l-1}\circ \bar{a}_{k}^{l-1}(i),\dots,
	a_{1}^l\circ\cdots\circ a_{k}^l(i)=\bar{a}_{1}^{l-1}\circ\cdots\circ \bar{a}_{k}^{l-1}(i)\}.	
\end{align*}
We will set $\mathsf{S}_0^l=\mathsf{S}_{-1}^l=\{1,\dots,N\}$.  In addition we use the convention that
$a_{0}^{i,l}=\bar{a}_{0}^{i,l-1}=i$.
We define for $t\in\{0,\dots,T\}$
$$
H_{l,t}(\theta,\mathbf{z}_t^l) := \sum_{k=1}^t \nabla_{\theta}\log\left\{g_{\theta}(y_{k}|x_{k})R_{\theta,k-1,k}^l\left(
C_{\theta,k-1,k}^l(x_{k-1},\mathbf{w}_{[k-1,k]}^l,x_k)\right)\right\}+\nabla_{\theta}\log(\nu_{\theta}(x_0))
$$
where we use the short-hand $\mathbf{z}_t^l = (z_1^l,\dots,z_t^l)$.  We set $\mathsf{Z}^l:=\mathbb{R}^{d(T\Delta^{-1}_l+1)}$.

For $(l,\vartheta,C)\in\mathbb{N}\times\mathbb{R}^+\times\mathbb{R}^+$ and $(\theta,\bar{\theta})\in\Theta^2$ we introduce the following sets,  where we use the notation $x_{0:T}=(x_0,\dots,x_T)$
\begin{align*}
\mathsf{B}^l_{\vartheta,C}(\theta,\bar{\theta}) & = \{
(\mathbf{z}^l,\bar{\mathbf{z}}^{l-1})\in\mathsf{Z}^l\times\mathsf{Z}^{l-1}:
|g_{\theta}(y_k|x_k)R_{\theta,k-1,k}^l(C_{\theta,k-1,k}^l(x_{k-1},\mathbf{w}_{[k-1,k]}^l,x_k))-\\
&
g_{\bar{\theta}}(y_k|\bar{x}_k)R_{\bar{\theta},k-1,k}^{l-1}(C_{\bar{\theta},k-1,k}^{l-1}(\bar{x}_{k-1},\mathbf{w}_{[k-1,k]}^{l-1},\bar{x}_k))| \leq C(\Delta_l^{\vartheta}+\|\theta-\bar{\theta}\|),  k\in\{1,\dots,T\}
, \\ &
\|x_k-\bar{x}_k\| \leq C\Delta_l^{\vartheta}, k\in\{0,\dots,T\}\},\\
\mathsf{H}^l_{\gamma,C}(\theta,\vartheta) & = 
\{(\mathbf{z}^l,\bar{\mathbf{z}}^{l-1})\in\mathsf{Z}^l\times\mathsf{Z}^{l-1}:
\|H_{l,k}^l(\theta,\mathbf{z}_k^l) - H_{l-1,k}^l(\bar{\theta},\bar{\mathbf{z}}_k^{l-1})\| \leq C(\Delta_l^{\vartheta}+\|\theta-\bar{\theta}\|),k\in\{0,\dots,T\}\}.
\end{align*}

The results in this section regard Algorithm \ref{alg:dtccpbs} from Step 1.~to 3.~only.   Expectations associated to the transition kernel in Algorithm \ref{alg:dtccpbs} from Step 1.~to 3. ~are denoted as $\bar{\mathbb{E}}_{\phi,l}[\cdot]$, where $\phi=(\theta,\bar{\theta})$.  We will use the notation $\mathbf{z}_1^{i,l}=z_1^{i,l}=(x_0^{i,l},\mathbf{w}_{[0,1]}^{i,l},x_1^{i,l})$,  $\bar{\mathbf{z}}_1^{i,l-1}=\bar{z}_1^{i,l-1}=(\bar{x}_0^{i,l-1},\bar{\mathbf{w}}_{[0,1]}^{i,l-1},\bar{x}_1^{i,l-1})$  
and for $k\in\{2,\dots,T\}$:
\begin{eqnarray*}
\mathbf{z}_k^{i,l} & =&  (\mathbf{z}_{k-1}^{a_{k-1}^{i,l},l},z_k^{i,l}) \\
\bar{\mathbf{z}}_k^{i,l-1} & =&  (\bar{\mathbf{z}}_{k-1}^{\bar{a}_{k-1}^{i,l-1},l-1},\bar{z}_k^{i,l-1})
\end{eqnarray*}
where $z_k^{i,l}=(\mathbf{w}_{[k-1,k]}^{i,l},x_k^{i,l})$,  $\bar{z}_k^{i,l-1}=(\bar{\mathbf{w}}_{[k-1,k]}^{i,l-1},
\bar{x}_k^{i,l-1})$.

\begin{lem}\label{lem:lem1}
Assume (A\ref{ass:coup_bridge}-\ref{ass:weight}).  Then  for any $(k,r,C')\in\{0,\dots,T\}\times[1,\infty)\times\mathbb{R}^+$ there exists a $C<\infty$ such that for any $(l,\vartheta,N,\phi)\in\mathbb{N}\times\mathbb{R}^+\times\{2,3,\dots\}\times\Theta^2$ and any $(\mathbf{z}^l,\bar{\mathbf{z}}^{l-1})\in \mathsf{B}^l_{\vartheta,C'}(\phi)$ we have
$$
\bar{\mathbb{E}}_{\phi,l}\left[
\frac{1}{N}\sum_{i\in\mathsf{S}_{k-1}^l}\|X_k^{i,l}-\bar{X}_k^{i,l-1}\|^r
\right]^{1/r} \leq C\left(\Delta_l^{\vartheta}+\|\theta-\bar{\theta}\|\right)
$$
and when $k\in\{1,\dots,T\}$
$$
\bar{\mathbb{E}}_{\phi,l}\left[
\frac{1}{N}\sum_{i\in\mathsf{S}_{k-1}^l}\|X_{k-1}^{a_{k-1}^{i,l},l}-\bar{X}_{k-1}^{\bar{a}_{k-1}^{i,l},l-1}\|^r
\right]^{1/r} \leq C\left(\Delta_l^{\vartheta}+\|\theta-\bar{\theta}\|\right).
$$
\end{lem}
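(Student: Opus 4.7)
The plan is a joint induction on $k\in\{0,1,\dots,T\}$ that establishes both displayed bounds simultaneously. The two inequalities are coupled: the first bound at time $k$ is driven by (A\ref{ass:coup_bridge})(iv) applied to the coupled transition $\check{f}_{\phi,k-1,k}$, which in turn requires controlling the $L^{r}$-distance between the resampled ancestors, which is precisely what the second bound at time $k$ provides. For the base case $k=0$ the index set $\mathsf{S}_{-1}^l=\{1,\dots,N\}$ is the full collection; the first $N-1$ pairs $(X_0^{i,l},\bar{X}_0^{i,l-1})$ are drawn from $\check{\nu}_{\phi}$, so (A\ref{ass:coup_bridge})(iii) gives an $L^{r}$ bound of $C\|\theta-\bar{\theta}\|$, while the $N$-th pair is the fixed input $(x_0^l,\bar{x}_0^{l-1})$, which by membership in $\mathsf{B}^l_{\vartheta,C'}(\phi)$ satisfies $\|x_0^l-\bar{x}_0^{l-1}\|\le C'\Delta_l^{\vartheta}$; averaging in $i$ with Minkowski's inequality then gives the bound.

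For the inductive step I would assume both estimates at time $k-1$ and first establish the second bound at time $k$. The key combinatorial observation is that if $i\in\mathsf{S}_{k-1}^l$ and one writes $a(i):=a_{k-1}^{i,l}=\bar{a}_{k-1}^{i,l-1}$, then unrolling the compositions in the definition of $\mathsf{S}_{k-1}^l$ forces $a(i)\in\mathsf{S}_{k-2}^l$ (the convention $a_0^{i,l}=i$ handles $k=1$). Writing
$$\frac{1}{N}\sum_{i\in\mathsf{S}_{k-1}^l}\|X_{k-1}^{a(i),l}-\bar{X}_{k-1}^{a(i),l-1}\|^{r}=\frac{1}{N}\sum_{j\in\mathsf{S}_{k-2}^l}N_j\,\|X_{k-1}^{j,l}-\bar{X}_{k-1}^{j,l-1}\|^{r},$$
with $N_j:=\#\{i\in\mathsf{S}_{k-1}^l:a(i)=j\}$, I would condition on the time-$(k-1)$ particle system and invoke the maximal coupling of Algorithm \ref{alg:max_coup}: for each $i\in\{1,\dots,N-1\}$ and each $j$,
$$\mathbb{P}\bigl(a_{k-1}^{i,l}=j=\bar{a}_{k-1}^{i,l-1}\bigr)=\min\Bigl\{\frac{\alpha_{k-1}^{j,l}}{\sum_s\alpha_{k-1}^{s,l}},\frac{\bar{\alpha}_{k-1}^{j,l-1}}{\sum_s\bar{\alpha}_{k-1}^{s,l-1}}\Bigr\}\le\frac{\overline{C}}{N\,\underline{C}}$$
by (A\ref{ass:weight}), so the conditional expectation of $N_j$ is bounded by a constant independent of $N$. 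Taking expectations, the right-hand side is controlled by the first bound at time $k-1$ applied to the family $\{X_{k-1}^{j,l}-\bar{X}_{k-1}^{j,l-1}\}_{j\in\mathsf{S}_{k-2}^l}$, plus an $O(N^{-1}(C'\Delta_l^{\vartheta})^{r})$ contribution from $j=N$ (using $\mathsf{B}^l_{\vartheta,C'}$ once more), which by the inductive hypothesis is of the advertised form.

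With the second bound at time $k$ in hand, the first bound at time $k$ follows almost immediately. For $i\in\mathsf{S}_{k-1}^l\setminus\{N\}$ the pair $(X_k^{i,l},\bar{X}_k^{i,l-1})$ is generated from $\check{f}_{\phi,k-1,k}$ started at $(X_{k-1}^{a(i),l},\bar{X}_{k-1}^{a(i),l-1})$, so (A\ref{ass:coup_bridge})(iv) yields the conditional $L^{r}$ bound $C(\|\theta-\bar{\theta}\|+\|X_{k-1}^{a(i),l}-\bar{X}_{k-1}^{a(i),l-1}\|)$; for $i=N$ the pair equals the fixed input $(x_k^l,\bar{x}_k^{l-1})$, controlled by $C'\Delta_l^{\vartheta}$ via $\mathsf{B}^l_{\vartheta,C'}(\phi)$. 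A $c_{r}$-inequality combined with the second bound (just proved at time $k$) then closes the induction.

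The main technical difficulty will be the bookkeeping around the random sets $\mathsf{S}_k^l$ and the maximal coupling. I would need to verify cleanly the implication $\{i\in\mathsf{S}_{k-1}^l:a(i)=j\}\ne\emptyset\Rightarrow j\in\mathsf{S}_{k-2}^l$ so that the inductive hypothesis applies to the same family of random sets at the previous time, and I would need to ensure that the a-priori estimate on $N_j$ coming from the maximal coupling is of size $O(1)$ rather than $O(N)$ --- this is precisely where the two-sided bound (A\ref{ass:weight}) enters in an essential way.
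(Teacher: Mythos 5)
Your proposal is correct and reconstructs essentially the argument the paper itself defers to (the inductive coupling proof of \cite[Lemma A.2]{ml_cont} and \cite[Lemma 13]{ub_grad}): a joint induction in $k$ in which the ancestor bound follows from the previous-step particle bound via the $O(1/N)$ weight bound from (A\ref{ass:weight}) under the maximal coupling, and the particle bound then follows from (A\ref{ass:coup_bridge})(iii)--(iv) applied conditionally, with the frozen $N$-th path handled through membership in $\mathsf{B}^l_{\vartheta,C'}(\phi)$. The combinatorial step $i\in\mathsf{S}_{k-1}^l\Rightarrow a_{k-1}^{i,l}\in\mathsf{S}_{k-2}^l$ that you flag does hold by unrolling the composed ancestor maps, so no gap remains.
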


\begin{proof}
The proof is very similar to that of \cite[Lemma A.2.]{ml_cont} and \cite[Lemma 13]{ub_grad},  except one uses
(A\ref{ass:coup_bridge}-\ref{ass:weight}) when is needed.  As such a proof is essentially repeating the calculations in the afore-mentioned papers,  it is omitted.
\end{proof}

\begin{lem}\label{lem:lem2}
Assume (A\ref{ass:coup_bridge}-\ref{ass:weight}, \ref{ass:weight_disc}-(i)).  Then  for any $(k,r,C')\in\{1,\dots,T\}\times[1,\infty)\times\mathbb{R}^+$ there exists a $C<\infty$ such that for any $(l,\vartheta,N,\phi)\in\mathbb{N}\times\mathbb{R}^+\times\{2,3,\dots\}\times\Theta^2$ and any $(\mathbf{z}^l,\bar{\mathbf{z}}^{l-1})\in \mathsf{B}^l_{\vartheta,C'}(\phi)$ we have
$$
\bar{\mathbb{E}}_{\phi,l}\Bigg[
\frac{1}{N}\sum_{i\in\mathsf{S}_{k-1}^l}
|
g_{\theta}(y_k|X_k^{i,l})R_{\theta,k-1,k}^l(C_{\theta,k-1,k}^l(X_{k-1}^{a_{k-1}^{i,l}},\mathbf{W}_{[k-1,k]}^{i,l},X_k^{i,l}))-
g_{\bar{\theta}}(y_k|\bar{X}_{k}^{i,l-1})
\times
$$
$$
R_{\bar{\theta},k-1,k}^{l-1}(C_{\bar{\theta},k-1,k}^{l-1}(\bar{X}_{k-1}^{\bar{a}_{k-1}^{i,l-1},l-1},\mathbf{W}_{[k-1,k]}^{i,l-1},\bar{X}_k^{i,l-1}))
|^r
\Bigg]^{1/r} \leq C\left(\Delta_l^{\tfrac{1}{2}\wedge\vartheta}+\|\theta-\bar{\theta}\|\right).
$$
\end{lem}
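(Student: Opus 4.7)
The plan is to condition on the filtration up to the resampling step at time $k-1$ and invoke (A\ref{ass:weight_disc})-(i) pointwise, then absorb the residual discrepancy between the coupled parents via the second bound in Lemma \ref{lem:lem1}.

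Let $\mathcal{F}_{k-1}$ denote the $\sigma$-algebra generated by all quantities produced in Steps 1--3 of Algorithm \ref{alg:dtccpbs} up to and including the resampled pairs $(\check{X}_{k-1}^{i,l},\check{\bar{X}}_{k-1}^{i,l-1})$ for $i\in\{1,\ldots,N\}$. Three observations are in order. First, the random set $\mathsf{S}_{k-1}^l$ is $\mathcal{F}_{k-1}$-measurable. Second, by construction of the forward pass, the conditional law of $(X_k^{i,l},\bar{X}_k^{i,l-1},\mathbf{W}_{[k-1,k]}^{i,l},\mathbf{W}_{[k-1,k]}^{i,l-1})$ given $\mathcal{F}_{k-1}$ is exactly $\check{f}_{\phi,k-1,k}(\cdot\,|\,\check{X}_{k-1}^{i,l},\check{\bar{X}}_{k-1}^{i,l-1})\,\check{\mathbb{W}}^l(\cdot)$, i.e., precisely the product measure appearing on the left-hand side of (A\ref{ass:weight_disc})-(i). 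Third, $\check{X}_{k-1}^{i,l}=X_{k-1}^{a_{k-1}^{i,l},l}$ and $\check{\bar{X}}_{k-1}^{i,l-1}=\bar{X}_{k-1}^{\bar{a}_{k-1}^{i,l-1},l-1}$ by the definition of resampling.

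Writing $D_i$ for the signed weight discrepancy inside the absolute value in the lemma, (A\ref{ass:weight_disc})-(i) applied conditionally yields
$$
\bar{\mathbb{E}}_{\phi,l}\bigl[|D_i|^r\,\bigm|\,\mathcal{F}_{k-1}\bigr]\leq C^{r}\Bigl(\Delta_l^{1/2}+\|\theta-\bar{\theta}\|+\bigl\|X_{k-1}^{a_{k-1}^{i,l},l}-\bar{X}_{k-1}^{\bar{a}_{k-1}^{i,l-1},l-1}\bigr\|\Bigr)^{r}.
$$
Using the inequality $(a+b+c)^r\leq 3^{r-1}(a^r+b^r+c^r)$, the tower property, and $|\mathsf{S}_{k-1}^l|/N\leq 1$, one obtains
$$
\bar{\mathbb{E}}_{\phi,l}\Bigl[\tfrac{1}{N}\sum_{i\in\mathsf{S}_{k-1}^l}|D_i|^r\Bigr]\leq C_1\Bigl(\Delta_l^{r/2}+\|\theta-\bar{\theta}\|^{r}+\bar{\mathbb{E}}_{\phi,l}\Bigl[\tfrac{1}{N}\sum_{i\in\mathsf{S}_{k-1}^l}\bigl\|X_{k-1}^{a_{k-1}^{i,l},l}-\bar{X}_{k-1}^{\bar{a}_{k-1}^{i,l-1},l-1}\bigr\|^{r}\Bigr]\Bigr).
$$
The second bound in Lemma \ref{lem:lem1} applied with the same $r$ controls the last expectation by $C_2(\Delta_l^{\vartheta}+\|\theta-\bar{\theta}\|)^{r}$. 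Taking the $r$-th root then produces a bound of the form $C_3(\Delta_l^{1/2}+\|\theta-\bar{\theta}\|+\Delta_l^{\vartheta})\leq C_4(\Delta_l^{1/2\wedge\vartheta}+\|\theta-\bar{\theta}\|)$, which is the claim.

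The hard part here is not analytic but organizational: one must confirm that the coupled pair $(X_k^{i,l},\bar{X}_k^{i,l-1})$ together with the Brownian increments is truly sampled, conditionally on $\mathcal{F}_{k-1}$, from the product measure on the left-hand side of (A\ref{ass:weight_disc})-(i) with conditioning points $(\check{X}_{k-1}^{i,l},\check{\bar{X}}_{k-1}^{i,l-1})$, and that the hypothesis $(\mathbf{z}^l,\bar{\mathbf{z}}^{l-1})\in\mathsf{B}^l_{\vartheta,C'}(\phi)$ transfers along the coupled dynamics so that Lemma \ref{lem:lem1} applies with parameters compatible with the present statement. Both facts are immediate from Steps 2--3 of Algorithm \ref{alg:dtccpbs} and from the hypotheses of Lemma \ref{lem:lem1} respectively, after which the remainder is a routine conditional $L^r$ estimate.
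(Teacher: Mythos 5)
Your argument follows essentially the same route as the paper: condition on the $\sigma$-field generated by the algorithm up to and including the resampling step at time $k-1$, apply (A\ref{ass:weight_disc})-(i) conditionally, use a $C_r$-type inequality, and control the residual term $\|X_{k-1}^{a_{k-1}^{i,l},l}-\bar{X}_{k-1}^{\bar{a}_{k-1}^{i,l-1},l-1}\|$ via the second bound of Lemma \ref{lem:lem1}. That part is fine.

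There is, however, one case you gloss over incorrectly: the index $i=N$. In Algorithm \ref{alg:dtccpbs} the $N$-th particle is the frozen reference path --- Steps 2 and 3 sample from $\check{f}_{\phi}\,\check{\mathbb{W}}^l$ only for $i\in\{1,\dots,N-1\}$ and then \emph{set} $\xi_{k+1}^{N,l}$ and $\bar{\xi}_{k+1}^{N,l-1}$ deterministically equal to the input paths, with $a_k^{N,l}=\bar{a}_k^{N,l-1}=N$. Consequently $N\in\mathsf{S}_{k-1}^l$ always, so the term $i=N$ is always present in the sum, yet your second ``observation'' --- that the conditional law of $(X_k^{i,l},\bar{X}_k^{i,l-1},\mathbf{W}_{[k-1,k]}^{i,l},\mathbf{W}_{[k-1,k]}^{i,l-1})$ given $\mathcal{F}_{k-1}$ is the product measure in (A\ref{ass:weight_disc})-(i) --- is false for that index, so (A\ref{ass:weight_disc})-(i) cannot be invoked there. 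The fix is exactly what the paper does: for $i=N$ the weight discrepancy $|D_N|$ is deterministic and is bounded by $C'(\Delta_l^{\vartheta}+\|\theta-\bar{\theta}\|)$ directly from the hypothesis $(\mathbf{z}^l,\bar{\mathbf{z}}^{l-1})\in\mathsf{B}^l_{\vartheta,C'}(\phi)$, and since this single term carries a factor $1/N$ it is absorbed into the final constant. With that one-line addition your proof is complete and matches the paper's.
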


\begin{proof}
If $i\neq N$ then,  conditioning on the $\sigma-$field associated to the algorithm up-to and including the last resampling step,  one can apply (A\ref{ass:weight_disc}-(i)) followed by the $C_r-$inequality and Lemma \ref{lem:lem1}.  Otherwise one can use the fact that $(\mathbf{z}^l,\bar{\mathbf{z}}^{l-1})\in \mathsf{B}^l_{\vartheta,C'}(\phi)$.  This concludes the proof.
\end{proof}

\begin{lem}\label{lem:lem3}
Assume (A\ref{ass:coup_bridge}-\ref{ass:weight},  \ref{ass:weight_disc}-(i)).  Then  for any $(k,r,C')\in\{1,\dots,T\}\times[1,\infty)\times\mathbb{R}^+$ there exists a $C<\infty$ such that for any $(l,\vartheta,N,\phi)\in\mathbb{N}\times\mathbb{R}^+\times\{2,3,\dots\}\times\Theta^2$ and any 
$(\mathbf{z}^l,\bar{\mathbf{z}}^{l-1})\in \mathsf{B}^l_{\vartheta,C'}(\phi)$ we have
$$
1- \bar{\mathbb{E}}_{\phi,l}\left[\frac{\textrm{\emph{Card}}(\mathsf{S}_{k-1}^l)}{N}\right] \leq C
\left(\Delta_l^{\tfrac{1}{2}\wedge\vartheta}+\|\theta-\bar{\theta}\|\right).
$$
\end{lem}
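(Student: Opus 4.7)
The plan is to prove this by induction on $k\in\{1,\ldots,T\}$, exploiting the recursive characterization
$$i\in\mathsf{S}_{k-1}^l\iff a_{k-1}^{i,l}=\bar{a}_{k-1}^{i,l-1}\;\text{and}\;a_{k-1}^{i,l}\in\mathsf{S}_{k-2}^l.$$
The base case $k=1$ is immediate from the convention $\mathsf{S}_0^l=\{1,\ldots,N\}$. For the inductive step, I will write
$$\frac{N-\mathrm{Card}(\mathsf{S}_{k-1}^l)}{N}=\frac{1}{N}\sum_{i=1}^{N}\Bigl(\mathbb{I}\{a_{k-1}^{i,l}\neq\bar{a}_{k-1}^{i,l-1}\}+\mathbb{I}\{a_{k-1}^{i,l}=\bar{a}_{k-1}^{i,l-1}\}\,\mathbb{I}\{a_{k-1}^{i,l}\notin\mathsf{S}_{k-2}^l\}\Bigr),$$
and bound the two terms separately after conditioning on the $\sigma$-field $\mathcal{F}_{k-1}$ generated by all quantities up to (but not including) the $(k-1)$-th maximal-coupling resampling step. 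The reference index $i=N$ is handled by observing $a_{k-1}^{N,l}=\bar{a}_{k-1}^{N,l-1}=N\in\mathsf{S}_{k-2}^l$ always, so it never contributes.

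\textbf{Key steps.} First, the maximal coupling gives, for $i<N$,
$$\mathbb{P}\bigl(a_{k-1}^{i,l}\neq\bar{a}_{k-1}^{i,l-1}\,\big|\,\mathcal{F}_{k-1}\bigr)=\tfrac{1}{2}\sum_{j=1}^{N}\Bigl|\tfrac{\alpha_{k-1}^{j,l}}{\sum_s\alpha_{k-1}^{s,l}}-\tfrac{\bar\alpha_{k-1}^{j,l-1}}{\sum_s\bar\alpha_{k-1}^{s,l-1}}\Bigr|.$$
Using (A\ref{ass:weight}) the normalizers are at least $N\underline{C}$, and the standard ratio bound yields
$$\tfrac{1}{2}\sum_{j}|\cdots|\leq\tfrac{1}{\underline{C}}\cdot\tfrac{1}{N}\sum_{j=1}^{N}\bigl|\alpha_{k-1}^{j,l}-\bar\alpha_{k-1}^{j,l-1}\bigr|.$$
Then I would split this sum over $\mathsf{S}_{k-2}^l$ and its complement. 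Lemma \ref{lem:lem2} (with $r=1$ via Jensen) controls the $\mathsf{S}_{k-2}^l$-part by $C(\Delta_l^{1/2\wedge\vartheta}+\|\theta-\bar\theta\|)$, while the complement part is dominated crudely by $2\overline{C}\cdot(1-\mathrm{Card}(\mathsf{S}_{k-2}^l)/N)$ thanks to the uniform upper bound in (A\ref{ass:weight}). Second, for the ancestor-outside-$\mathsf{S}_{k-2}^l$ event, the conditional probability is $\sum_{j\notin\mathsf{S}_{k-2}^l}\alpha_{k-1}^{j,l}/\sum_s\alpha_{k-1}^{s,l}\leq(\overline{C}/\underline{C})(1-\mathrm{Card}(\mathsf{S}_{k-2}^l)/N)$.

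\textbf{Closing the induction.} Combining these estimates and taking expectation gives a recursion of the form
$$1-\bar{\mathbb{E}}_{\phi,l}\!\left[\tfrac{\mathrm{Card}(\mathsf{S}_{k-1}^l)}{N}\right]\leq C_1\bigl(\Delta_l^{1/2\wedge\vartheta}+\|\theta-\bar\theta\|\bigr)+C_2\left(1-\bar{\mathbb{E}}_{\phi,l}\!\left[\tfrac{\mathrm{Card}(\mathsf{S}_{k-2}^l)}{N}\right]\right).$$
Since $T$ is fixed, iterating this $T$ times only multiplies the constant by a bounded factor, giving the stated bound.

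\textbf{Main obstacle.} The delicate point is that Lemma \ref{lem:lem2} only controls weight differences on $\mathsf{S}_{k-2}^l$, so one must carefully absorb the residual mass on its complement into the previous step of the induction rather than hoping for a sharper estimate. This couples the present lemma tightly to Lemma \ref{lem:lem2}, and forces the rate here to inherit the $\tfrac12\wedge\vartheta$ exponent of that lemma rather than, say, $\vartheta$ itself. A minor but necessary accounting issue is isolating the reference particle $i=N$, whose ancestry is deterministic and always lies in $\mathsf{S}_{k-2}^l$, so it contributes $0$ to both decomposed sums.
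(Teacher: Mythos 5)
Your proof is correct and takes essentially the route the paper intends: the paper omits the argument, deferring to \cite[Lemma A.3]{ml_cont}, and that argument is precisely your induction on $k$ using the total-variation characterization of the maximal coupling, the weight-difference control of Lemma \ref{lem:lem2} on $\mathsf{S}_{k-2}^l$, and absorption of the complement term into the previous step of the recursion. No gaps.
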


\begin{proof}
The proof is similar to that of \cite[Lemma A.3.]{ml_cont} and therefore is not given.
\end{proof}

\begin{lem}\label{lem:lem4}
Assume (A\ref{ass:coup_bridge}-\ref{ass:weight_disc}).  Then  for any $(k,r,C')\in\{1,\dots,T\}\times[1,\infty)\times\mathbb{R}^+$ there exists a $C<\infty$ such that for any $(l,\vartheta,N,\phi)\in\mathbb{N}\times\mathbb{R}^+\times\{2,3,\dots\}\times\Theta^2$ and any 
$(\mathbf{z}^l,\bar{\mathbf{z}}^{l-1})\in \mathsf{B}^l_{\vartheta,C'}(\phi)\cap\mathsf{H}^l_{\vartheta,C'}(\phi)$ 
we have
$$
\bar{\mathbb{E}}_{\phi,l}\left[
\frac{1}{N}\sum_{i\in\mathsf{S}_{k-1}^l}\|
H_{l,k}(\theta,\mathbf{Z}_k^{i,l}) - 
H_{l-1,k}(\bar{\theta},\bar{\mathbf{Z}}_k^{i,l-1})
\|^r
\right]^{1/r}
\leq C
\left(\Delta_l^{\tfrac{1}{2}\wedge\vartheta}+\|\theta-\bar{\theta}\|\right).
$$
\end{lem}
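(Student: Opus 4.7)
The plan is to mirror the inductive template of Lemma \ref{lem:lem2}, adapted to the cumulative-gradient quantity $H_{l,k}$, proceeding by induction on $k\in\{0,\dots,T\}$. For the base case $k=0$ we have $H_{l,0}(\theta,\mathbf{z}_0^l)=\nabla_\theta\log\nu_\theta(x_0)$; for $i<N$ the pair $(X_0^{i,l},\bar X_0^{i,l-1})$ is drawn from $\check\nu_\phi$, so (A\ref{ass:coup_bridge})-(iii) yields the required $C\|\theta-\bar\theta\|$ bound, while the $i=N$ contribution is controlled by $(\mathbf{z}^l,\bar{\mathbf{z}}^{l-1})\in\mathsf{H}^l_{\vartheta,C'}(\phi)$. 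For the inductive step, the recursive construction $\mathbf{Z}_k^{i,l}=(\mathbf{Z}_{k-1}^{a_{k-1}^{i,l},l},z_k^{i,l})$ yields the decomposition
\begin{equation*}
H_{l,k}(\theta,\mathbf{Z}_k^{i,l})-H_{l-1,k}(\bar\theta,\bar{\mathbf{Z}}_k^{i,l-1}) \;=\; E_k^i+D_k^i,
\end{equation*}
where $E_k^i:=H_{l,k-1}(\theta,\mathbf{Z}_{k-1}^{a_{k-1}^{i,l},l})-H_{l-1,k-1}(\bar\theta,\bar{\mathbf{Z}}_{k-1}^{\bar a_{k-1}^{i,l-1},l-1})$ and $D_k^i$ is the difference at time $k$ of the increments $\nabla_\theta\log\{g_\theta R^l_{\theta,k-1,k}(C^l_{\theta,k-1,k}(\cdot))\}$. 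By Minkowski it suffices to bound the $L^r$ averages of $E_k^i$ and $D_k^i$ over $i\in\mathsf{S}_{k-1}^l$ separately.

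For the increment term $D_k^i$ the argument of Lemma \ref{lem:lem2} carries over. For $i<N$, conditioning on the $\sigma$-field $\mathcal{G}_{k-1}$ generated by the algorithm through the $(k-1)$-th resampling, the pair $((X_k^{i,l},\mathbf{W}_{[k-1,k]}^{i,l}),(\bar X_k^{i,l-1},\bar{\mathbf{W}}_{[k-1,k]}^{i,l-1}))$ is distributed as $\check f_\phi(\cdot\,|\,X_{k-1}^{a_{k-1}^{i,l},l},\bar X_{k-1}^{\bar a_{k-1}^{i,l-1},l-1})\check{\mathbb{W}}^l$, so (A\ref{ass:weight_disc})-(ii) yields $\bar{\mathbb{E}}_{\phi,l}[\|D_k^i\|^r\,|\,\mathcal{G}_{k-1}]^{1/r}\leq C(\Delta_l^{1/2}+\|\theta-\bar\theta\|+\|X_{k-1}^{a_{k-1}^{i,l},l}-\bar X_{k-1}^{\bar a_{k-1}^{i,l-1},l-1}\|)$, and the second bound of Lemma \ref{lem:lem1} then produces the desired $C(\Delta_l^{\tfrac{1}{2}\wedge\vartheta}+\|\theta-\bar\theta\|)$ after taking expectations and summing. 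The $i=N$ contribution is handled by writing $D_k^N=[H_{l,k}(\theta,\mathbf{z}_k^l)-H_{l-1,k}(\bar\theta,\bar{\mathbf{z}}_k^{l-1})]-[H_{l,k-1}(\theta,\mathbf{z}_{k-1}^l)-H_{l-1,k-1}(\bar\theta,\bar{\mathbf{z}}_{k-1}^{l-1})]$ and invoking $\mathsf{H}^l_{\vartheta,C'}(\phi)$ at times $k$ and $k-1$.

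The main obstacle lies in reducing the leftover term $E_k^i$ to the inductive hypothesis at $k-1$. Since $E_k^i$ depends on $i$ only through the common ancestor $m:=a_{k-1}^{i,l}=\bar a_{k-1}^{i,l-1}$, and a short check of the definition of $\mathsf{S}_{k-1}^l$ shows that $m\in\mathsf{S}_{k-2}^l$ whenever $i\in\mathsf{S}_{k-1}^l$, the sum reorganises as $\tfrac{1}{N}\sum_{m\in\mathsf{S}_{k-2}^l}c_m\,A_m^r$ with $A_m:=\|H_{l,k-1}(\theta,\mathbf{Z}_{k-1}^{m,l})-H_{l-1,k-1}(\bar\theta,\bar{\mathbf{Z}}_{k-1}^{m,l-1})\|$ and multiplicities $c_m:=|\{i\in\mathsf{S}_{k-1}^l:a_{k-1}^{i,l}=m\}|$. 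Conditioning on the $\sigma$-field $\mathcal{F}_{k-1}^-$ describing the algorithm just before the $(k-1)$-th resampling, the $A_m$ are measurable while each $c_m$ is a sum of $N$ Bernoulli indicators of maximal-coupling matches; by the two-sided bounds in (A\ref{ass:weight}) the normalised weights lie in $[\underline{C}/(N\overline{C}),\,\overline{C}/(N\underline{C})]$, so the match probability at each $m$ is at most $\overline{C}/(N\underline{C})$ and hence $\bar{\mathbb{E}}_{\phi,l}[c_m\,|\,\mathcal{F}_{k-1}^-]=O(1)$ uniformly in $m$ and $N$. Consequently $\bar{\mathbb{E}}_{\phi,l}[\tfrac{1}{N}\sum_m c_m A_m^r]\leq C\,\bar{\mathbb{E}}_{\phi,l}[\tfrac{1}{N}\sum_{m\in\mathsf{S}_{k-2}^l}A_m^r]$, and the inductive hypothesis at $k-1$ bounds the right-hand side by $[C_{k-1}(\Delta_l^{\tfrac{1}{2}\wedge\vartheta}+\|\theta-\bar\theta\|)]^r$. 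Taking $r$-th roots, combining the two contributions, and iterating over the $T$ time steps closes the induction and proves the claim.
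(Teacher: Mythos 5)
Your proof is correct and follows essentially the same route as the paper, which simply defers to the inductive argument of \cite[Lemma 16]{ub_grad}: induction over the time index, a Minkowski split into the time-$k$ increment (controlled via (A\ref{ass:weight_disc})-(ii) conditionally on the resampling $\sigma$-field together with Lemma \ref{lem:lem1}, and via $\mathsf{H}^l_{\vartheta,C'}$ for the frozen particle) and the ancestral remainder, which is reindexed over ancestors in $\mathsf{S}_{k-2}^l$ with $O(1)$ conditional multiplicities obtained from the two-sided weight bounds in (A\ref{ass:weight}). You have in effect written out the details the paper omits, and the argument as given is sound.
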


\begin{proof}
Follows along the same lines as the proof of \cite[Lemma 16]{ub_grad},  except one must use (A\ref{ass:coup_bridge}-\ref{ass:weight_disc}) or Lemma \ref{lem:lem1} to complete the proof.  As the proof of this lemma is essentially the same up-to notational changes,  it is omitted.
\end{proof}

\begin{lem}\label{lem:lem5}
Assume (A\ref{ass:coup_bridge}(i)-(ii),  \ref{ass:weight}).   Then  for any $(k,r)\in\{0,\dots,T\}\times[1,\infty)$
there exists a $C<\infty$ such for any $(l,N,\phi,i)\in\mathbb{N}\times\mathbb{R}^+\times\{2,3,\dots\}\times\Theta^2\times\{1,\dots,N\}$ and any 
$(\mathbf{z}^l,\bar{\mathbf{z}}^{l-1})\in\mathsf{Z}^l\times\mathsf{Z}^{l-1}$  
we have
\begin{align*}
\bar{\mathbb{E}}_{\phi,l}\left[\|X_k^{i,l}\|^r\right] & \leq C\left(1+\sum_{t=0}^k\|x_k^{N,l}\|^r\right)\\
\bar{\mathbb{E}}_{\phi,l}\left[\|\bar{X}_k^{i,l-1}\|^r\right] & \leq C\left(1+\sum_{t=0}^k\|\bar{x}_k^{N,l-1}\|^r\right).
\end{align*}
\end{lem}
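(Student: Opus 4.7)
The plan is to prove both bounds by a single induction on $k\in\{0,1,\ldots,T\}$ along the forward sweep of Algorithm \ref{alg:dtccpbs}. The reference-particle case $i=N$ is trivial at every step, since $X_k^{N,l}=x_k^{N,l}$ (resp.\ $\bar X_k^{N,l-1}=\bar x_k^{N,l-1}$) deterministically, so the assertion holds there with constant $1$ because $\|x_k^{N,l}\|^r$ appears on the right-hand side. All of the actual work is therefore for $i\in\{1,\ldots,N-1\}$. At the base $k=0$, the first marginal of $\check\nu_\phi$ is $\nu_\theta$, whence (A\ref{ass:coup_bridge})(i) yields $\bar{\mathbb{E}}_{\phi,l}[\|X_0^{i,l}\|^r]\le C$, and symmetrically for the $\bar X$-part via the second marginal.

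For the inductive step $k-1\mapsto k$, I condition first on the resampled ancestor. The first marginal of the coupling $\check f_{\phi,k-1,k}(\cdot\mid\check X_{k-1}^{i,l},\check{\bar X}_{k-1}^{i,l-1})$ is $\bar f_{\theta,k-1,k}(\cdot\mid\check X_{k-1}^{i,l})$, so (A\ref{ass:coup_bridge})(ii) gives
\[
\bar{\mathbb{E}}_{\phi,l}\bigl[\|X_k^{i,l}\|^r \mid \check X_{k-1}^{i,l}\bigr]\le C\,\|\check X_{k-1}^{i,l}\|^r.
\]
Writing $\check X_{k-1}^{i,l}=X_{k-1}^{a_{k-1}^{i,l},l}$, and using that the marginal of the maximally coupled categorical draw equals the weight PMF itself, (A\ref{ass:weight}) upper-bounds each selection probability by $\overline C/(N\underline C)$; hence, conditional on $X_{k-1}^{1:N,l}$,
\[
\bar{\mathbb{E}}_{\phi,l}\bigl[\|\check X_{k-1}^{i,l}\|^r \mid X_{k-1}^{1:N,l}\bigr]\le \frac{\overline C}{N\underline C}\sum_{j=1}^N\|X_{k-1}^{j,l}\|^r.
\]
Taking the unconditional expectation and applying the induction hypothesis to the $N-1$ non-reference particles, together with $X_{k-1}^{N,l}=x_{k-1}^{N,l}$, closes the induction with updated constant $C_k=C(\overline C/\underline C)(C_{k-1}+1)$. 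The bound for $\bar X_k^{i,l-1}$ is obtained verbatim by using the second marginals of $\check\nu_\phi$ and $\check f_{\phi,k-1,k}$, together with the fact that the companion maximally coupled index $\bar a_{k-1}^{i,l-1}$ has marginal PMF whose entries are also bounded by $\overline C/(N\underline C)$.

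The main point I expect to check carefully is that the resulting constant is uniform in $N\ge 2$. This is precisely where (A\ref{ass:weight}) pulls its weight: the $1/N$ in the bound $\overline C/(N\underline C)$ cancels exactly the $(N-1)$ produced when summing the inductive hypothesis over the non-reference particles, and the leftover $\|x_{k-1}^{N,l}\|^r/N$ from the reference particle is harmlessly absorbed into $1+\sum_{t=0}^{k-1}\|x_t^{N,l}\|^r$ on the right-hand side of the target inequality. Without the uniform two-sided weight bound of (A\ref{ass:weight}) this uniformity in $N$ would collapse, and one would have to work with an $N$-dependent constant instead.
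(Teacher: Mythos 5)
Your proof is correct and is precisely the argument the paper intends: its own proof of Lemma \ref{lem:lem5} is a one-line remark that the result ``follows using a simple inductive proof along with (A\ref{ass:coup_bridge}(i)-(ii), \ref{ass:weight})'' (citing \cite[Corollary 18]{ub_grad}), and your induction on $k$ --- base case from the marginal of $\check\nu_\phi$ via (A\ref{ass:coup_bridge})(i), propagation via the marginal of $\check f_{\phi,k-1,k}$ and (A\ref{ass:coup_bridge})(ii), and the resampling step controlled by the two-sided weight bound of (A\ref{ass:weight}) so that the $1/N$ cancels the sum over particles --- fills in exactly those details, including the correct treatment of the frozen reference particle $i=N$.
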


\begin{proof}
Follows using a simple inductive proof along with (A\ref{ass:coup_bridge}(i)-(ii),  \ref{ass:weight}).  See for instance \cite[Corollary 18]{ub_grad}.
\end{proof}

\begin{lem}\label{lem:lem6}
Assume (A\ref{ass:coup_bridge}-\ref{ass:grad_bound}).  Then  for any $(k,r,C')\in\{1,\dots,T\}\times[1,\infty)\times\mathbb{R}^+$ there exists a $C<\infty$ such that for any $(l,\vartheta,N,\phi,i)\in\mathbb{N}\times\mathbb{R}^+\times\{2,3,\dots\}\times\Theta^2\times\{1,\dots,N\}$ and any 
$(\mathbf{z}^l,\bar{\mathbf{z}}^{l-1})\in \mathsf{B}^l_{\vartheta,C'}(\phi)\cap\mathsf{H}^l_{\vartheta,C'}(\phi)$ 
we have
$$
\bar{\mathbb{E}}_{\phi,l}\left[
\|
H_{l,k}(\theta,\mathbf{Z}_k^{i,l}) - 
H_{l-1,k}(\bar{\theta},\bar{\mathbf{Z}}_k^{i,l-1})
\|^r
\right]^{1/r}
\leq C
\left(\Delta_l^{\tfrac{1}{2}\wedge\vartheta}+\|\theta-\bar{\theta}\|\right)^{1/r}.
$$
\end{lem}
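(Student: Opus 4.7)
The strategy is to reduce Lemma~\ref{lem:lem6} to the averaged bound in Lemma~\ref{lem:lem4}, the probability bound in Lemma~\ref{lem:lem3}, and the boundedness of $H$ from (A\ref{ass:grad_bound}), by exploiting that the non-reference particles $\{1,\ldots,N-1\}$ in Algorithm~\ref{alg:dtccpbs} (Steps 1--3 only, which is what $\bar{\mathbb{E}}_{\phi,l}$ refers to) are exchangeable: they are generated from identical maximal-coupling resampling moves and identical proposal moves, conditional on the previous generation. The first step is to decompose
$$
\bar{\mathbb{E}}_{\phi,l}\bigl[\|H_{l,k}(\theta,\mathbf{Z}_k^{i,l})-H_{l-1,k}(\bar{\theta},\bar{\mathbf{Z}}_k^{i,l-1})\|^r\bigr]
$$
according to the three cases (a) $i=N$; (b) $i<N$ and $i\in\mathsf{S}_{k-1}^l$; (c) $i<N$ and $i\notin\mathsf{S}_{k-1}^l$. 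If the sum of these three contributions is shown to be $O(\Delta_l^{\tfrac12\wedge\vartheta}+\|\theta-\bar{\theta}\|)$, then the $r$-th root yields the claim.

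Case (a) is essentially free: by construction $(\mathbf{Z}_k^{N,l},\bar{\mathbf{Z}}_k^{N,l-1})=(\mathbf{z}_k^l,\bar{\mathbf{z}}_k^{l-1})$, so the hypothesis $(\mathbf{z}^l,\bar{\mathbf{z}}^{l-1})\in\mathsf{H}^l_{\vartheta,C'}(\phi)$ gives a deterministic bound of order $\Delta_l^{\vartheta}+\|\theta-\bar{\theta}\|$. For case (b), exchangeability implies that $u_i:=\bar{\mathbb{E}}_{\phi,l}[\|H_{l,k}(\theta,\mathbf{Z}_k^{i,l})-H_{l-1,k}(\bar{\theta},\bar{\mathbf{Z}}_k^{i,l-1})\|^r\mathbb{1}_{\{i\in\mathsf{S}_{k-1}^l\}}]$ does not depend on $i\in\{1,\ldots,N-1\}$. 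Using that $N\in\mathsf{S}_{k-1}^l$ deterministically (by the convention $a_0^{N,l}=\bar{a}_0^{N,l-1}=N$ on the reference particle) and Lemma~\ref{lem:lem4},
$$
u_1 \;=\; \frac{1}{N-1}\sum_{i=1}^{N-1}u_i \;\leq\; \frac{1}{N-1}\,\bar{\mathbb{E}}_{\phi,l}\!\left[\sum_{i\in\mathsf{S}_{k-1}^l}\|H_{l,k}^i-H_{l-1,k}^i\|^r\right] \;\leq\; \tfrac{N}{N-1}\,C^r(\Delta_l^{\tfrac12\wedge\vartheta}+\|\theta-\bar{\theta}\|)^r,
$$
which is $O(\Delta_l^{\tfrac12\wedge\vartheta}+\|\theta-\bar{\theta}\|)$ because this quantity is bounded on $\Theta^2$ (boundedness of the bracket is needed to pass from exponent $r$ to exponent $1$).

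For case (c), (A\ref{ass:grad_bound}) bounds $\|H_{l,k}\|$ and $\|H_{l-1,k}\|$ by a fixed constant uniformly in all arguments, so the integrand is $\le C$ and the contribution is at most $C\,\bar{\mathbb{P}}_{\phi,l}(i\notin\mathsf{S}_{k-1}^l)$; applying the same exchangeability argument to the indicators gives $\bar{\mathbb{P}}_{\phi,l}(i\notin\mathsf{S}_{k-1}^l)=\tfrac{1}{N-1}\bar{\mathbb{E}}_{\phi,l}[N-\mathrm{Card}(\mathsf{S}_{k-1}^l)]$, which by Lemma~\ref{lem:lem3} is $\le \tfrac{N}{N-1}\,C(\Delta_l^{\tfrac12\wedge\vartheta}+\|\theta-\bar{\theta}\|)$. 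Summing the three cases yields the stated $r$-th moment bound of the required order, and the $r$-th root gives the inequality. The only real obstacle is the careful justification of exchangeability of the particles $1,\ldots,N-1$ in the coupled forward sweep; beyond this notational check the argument is routine, and the $(\cdot)^{1/r}$ exponent in the conclusion is unavoidable because case (c) provides only a linear (not $r$-th power) probability bound on $\{i\notin\mathsf{S}_{k-1}^l\}$.
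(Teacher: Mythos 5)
Your proof is correct and follows essentially the same route as the paper, which simply splits on $i\in\mathsf{S}_{k-1}^l$ versus $i\in(\mathsf{S}_{k-1}^l)^c$ and invokes Lemma \ref{lem:lem4} in the former case and (A\ref{ass:grad_bound}) plus Lemma \ref{lem:lem3} in the latter. You additionally make explicit the exchangeability argument needed to pass from the averaged bounds of Lemmas \ref{lem:lem3}--\ref{lem:lem4} to a fixed particle index $i$, and correctly identify that the loss of the exponent (yielding $(\cdot)^{1/r}$ rather than $(\cdot)$) comes from the linear probability bound on $\{i\notin\mathsf{S}_{k-1}^l\}$, details the paper leaves implicit.
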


\begin{proof}
Follows by considering $i\in\mathsf{S}_{k-1}^l$ or $i\in(\mathsf{S}_{k-1}^l)^c$.  In the former case one can use
Lemma \ref{lem:lem4} and in the latter (A\ref{ass:grad_bound}) followed by Lemma \ref{lem:lem3}. 
\end{proof}

\begin{rem}\label{rem:rem1}
If we assume that (A\ref{ass:coup_bridge}-\ref{ass:weight},  \ref{ass:weight_disc}-(i)) holds,  then one can prove the following results in a similar manner to Lemma \ref{lem:lem6} (see also \cite[Corollary 21]{ub_grad}).
\begin{enumerate}
\item{For any $(k,r,C')\in\{0,\dots,T\}\times[1,\infty)\times\mathbb{R}^+$ there exists a $C<\infty$ such that for any $(l,\vartheta,N,\phi,i,\delta)\in\mathbb{N}\times\mathbb{R}^+\times\{2,3,\dots\}\times\Theta^2\times\{1,\dots,N\}\times\mathbb{R}^+$ and any $(\mathbf{z}^l,\bar{\mathbf{z}}^{l-1})\in \mathsf{B}^l_{\vartheta,C'}(\phi)$ we have
$$
\bar{\mathbb{E}}_{\phi,l}\left[\|X_k^{i,l}-\bar{X}_k^{i,l-1}\|^r
\right]^{1/r} \leq C\left(\Delta_l^{\tfrac{1}{2}\wedge\vartheta}+\|\theta-\bar{\theta}\|\right)^{1/(r\{1+\delta\})}
\left(1
+\sum_{t=0}^k\|x_k^{N,l}\|^r
+\sum_{t=0}^k\|\bar{x}_k^{N,l-1}\|^r\right).
$$
}
\item{For any $(k,r,C')\in\{1,\dots,T\}\times[1,\infty)\times\mathbb{R}^+$ there exists a $C<\infty$ such that for any $(l,\vartheta,N,\phi,i)\in\mathbb{N}\times\mathbb{R}^+\times\{2,3,\dots\}\times\Theta^2\times\{1,\dots,N\}$ and any 
$(\mathbf{z}^l,\bar{\mathbf{z}}^{l-1})\in \mathsf{B}^l_{\vartheta,C'}(\phi)$ 
we have
$$
\bar{\mathbb{E}}_{\phi,l}\Bigg[
|
g_{\theta}(y_k|X_k^{i,l})R_{\theta,k-1,k}^l(C_{\theta,k-1,k}^l(X_{k-1}^{a_{k-1}^{i,l},l},\mathbf{W}_{[k-1,k]}^{i,l},X_k^{i,l}))-
g_{\bar{\theta}}(y_k|\bar{X}_{k}^{i,l-1})
\times
$$
$$
R_{\bar{\theta},k-1,k}^{l-1}(C_{\bar{\theta},k-1,k}^{l-1}(\bar{X}_{k-1}^{\bar{a}_{k-1}^{i,l-1},l-1},\mathbf{W}_{[k-1,k]}^{i,l-1},\bar{X}_k^{i,l-1}))
|^r
\Bigg]^{1/r}
\leq C
\left(\Delta_l^{\tfrac{1}{2}\wedge\vartheta}+\|\theta-\bar{\theta}\|\right)^{1/r}.
$$
}
\end{enumerate}
\end{rem}

\subsection{Results for Algorithm \ref{alg:dtccpbs}}\label{app:2}

We now consider Algorithm \ref{alg:dtccpbs},  where Step 4.~and Step 5.~have been replaced by \textbf{(FC)}.
Expectations associated to the probability law of the kernel are written $\check{\mathbb{E}}_{\phi,l}[\cdot]$.

\begin{lem}\label{lem:lem7}
Assume (A\ref{ass:coup_bridge}-\ref{ass:grad_bound}). 
\begin{enumerate}
\item{For any $(k,r,C')\in\{1,\dots,T\}\times[1,\infty)\times\mathbb{R}^+$ there exists a $C<\infty$ such that for any $(l,\vartheta,N,\phi)\in\mathbb{N}\times\mathbb{R}^+\times\{2,3,\dots\}\times\Theta^2$ and any 
$(\mathbf{z}^l,\bar{\mathbf{z}}^{l-1})\in \mathsf{B}^l_{\vartheta,C'}(\phi)\cap\mathsf{H}^l_{\vartheta,C'}(\phi)$ 
we have
$$
\check{\mathbb{E}}_{\phi,l}\left[
\|
H_{l,k}(\theta,\mathbf{Z}_k^{j_T^l,l}) - 
H_{l-1,k}(\bar{\theta},\bar{\mathbf{Z}}_k^{\bar{j}_T^{l-1},l-1})
\|^r
\right]^{1/r}
\leq C
\left(\Delta_l^{\tfrac{1}{2}\wedge\vartheta}+\|\theta-\bar{\theta}\|\right)^{1/r}.
$$}
\item{For any $(k,r,C')\in\{0,\dots,T\}\times[1,\infty)\times\mathbb{R}^+$ there exists a $C<\infty$ such that for any $(l,\vartheta,N,\phi,\delta)\in\mathbb{N}\times\mathbb{R}^+\times\{2,3,\dots\}\times\Theta^2\times\mathbb{R}^+$ and any $(\mathbf{z}^l,\bar{\mathbf{z}}^{l-1})\in \mathsf{B}^l_{\vartheta,C'}(\phi)$ we have
$$
\check{\mathbb{E}}_{\phi,l}\left[\|X_k^{j_k^l,l}-\bar{X}_k^{\bar{j}_k^{l-1},l-1}\|^r
\right]^{1/r} \leq C\left(\Delta_l^{\tfrac{1}{2}\wedge\vartheta}+\|\theta-\bar{\theta}\|\right)^{1/(r\{1+\delta\})}
\left(1
+\sum_{t=0}^k\|x_t^{N,l}\|^r
+\sum_{t=0}^k\|\bar{x}_t^{N,l-1}\|^r\right).
$$
}
\item{For any $(k,r,C')\in\{1,\dots,T\}\times[1,\infty)\times\mathbb{R}^+$ there exists a $C<\infty$ such that for any $(l,\vartheta,N,\phi)\in\mathbb{N}\times\mathbb{R}^+\times\{2,3,\dots\}\times\Theta^2$ and any 
$(\mathbf{z}^l,\bar{\mathbf{z}}^{l-1})\in \mathsf{B}^l_{\vartheta,C'}(\phi)$ 
we have
$$
\check{\mathbb{E}}_{\phi,l}\Bigg[
|
g_{\theta}(y_k|X_k^{j_k^l,l})R_{\theta,k-1,k}^l(C_{\theta,k-1,k}^l(X_{k-1}^{j_{k-1}^l,l},\mathbf{W}_{[k-1,k]}^{j_k^l,l},X_k^{j_k^l,l}))-
g_{\bar{\theta}}(y_k|\bar{X}_{k}^{\bar{j}_{k}^{l-1},l-1})
\times
$$
$$
R_{\bar{\theta},k-1,k}^{l-1}(C_{\bar{\theta},k-1,k}^{l-1}(\bar{X}_{k-1}^{\bar{j}_{k-1}^{l-1},l-1},\mathbf{W}_{[k-1,k]}^{\bar{j}_{k}^{l-1},l-1},\bar{X}_k^{\bar{j}_k^{l-1},l-1}))
|^r
\Bigg]^{1/r}
\leq C
\left(\Delta_l^{\tfrac{1}{2}\wedge\vartheta}+\|\theta-\bar{\theta}\|\right)^{1/r}.
$$}
\end{enumerate}
\end{lem}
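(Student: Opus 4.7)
The plan is to lift the Appendix \ref{app:1} results through the backwards initialization step of Algorithm \ref{alg:dtccpbs} (with Step 5 replaced by \textbf{(FC)}) by a case analysis on the outcome of the maximal coupling used to sample $(j_T^l, \bar{j}_T^{l-1})$. Let $\mathcal{G}$ denote the $\sigma$-field generated by Steps 1--3 of Algorithm \ref{alg:dtccpbs}. Given $\mathcal{G}$, \textbf{(FC)} makes the indices $j_k^l = a_k^l \circ \cdots \circ a_{T-1}^l(j_T^l)$ and $\bar{j}_k^{l-1}$ deterministic functions of $(j_T^l, \bar{j}_T^{l-1})$ and the forward-pass ancestry, so the only extra randomness beyond Appendix \ref{app:1} is the maximal coupling of the normalized final weights $\pi^l_i = \alpha_T^{i,l}/W_T^l$ and $\bar{\pi}^{l-1}_i = \bar{\alpha}_T^{i,l-1}/\bar{W}_T^{l-1}$ (with $W_T^l = \sum_s \alpha_T^{s,l}$), characterized by the sub-probability weights $m_T^i = \min\{\pi^l_i, \bar{\pi}^{l-1}_i\}$.

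For Part 1, I would decompose the expectation over the three events $\mathcal{E}_1 = \{j_T^l = \bar{j}_T^{l-1} \in \mathsf{S}_{T-1}^l\}$, $\mathcal{E}_2 = \{j_T^l = \bar{j}_T^{l-1} \notin \mathsf{S}_{T-1}^l\}$ and $\mathcal{E}_3 = \{j_T^l \neq \bar{j}_T^{l-1}\}$. Assumption (A\ref{ass:grad_bound}) provides the uniform bound $\|H_{l,k}(\theta, \cdot) - H_{l-1,k}(\bar{\theta}, \cdot)\| \leq 2C(T+1)$, so the $\mathcal{E}_2, \mathcal{E}_3$ contributions reduce to bounding the corresponding probabilities. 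Using (A\ref{ass:weight}) and $W_T^l, \bar{W}_T^{l-1} \geq N\underline{C}$, standard maximal-coupling estimates give $\mathbb{P}(\mathcal{E}_3 \mid \mathcal{G}) \leq (C/N)\sum_i|\alpha_T^{i,l} - \bar{\alpha}_T^{i,l-1}|$, which after expectation and Remark \ref{rem:rem1}(2) with $r=1$ is $\leq C(\Delta_l^{1/2\wedge\vartheta} + \|\theta - \bar{\theta}\|)$; similarly $\mathbb{P}(\mathcal{E}_2 \mid \mathcal{G}) \leq (\overline{C}/\underline{C})(1 - \textrm{Card}(\mathsf{S}_{T-1}^l)/N)$, which is controlled by Lemma \ref{lem:lem3}. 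Taking the $r$-th root yields the claimed $(\Delta_l^{1/2\wedge\vartheta} + \|\theta - \bar{\theta}\|)^{1/r}$ rate for both bad events.

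On $\mathcal{E}_1$, the definition of $\mathsf{S}_{T-1}^l$ forces $j_k^l = \bar{j}_k^{l-1} = i_k := a_k^l \circ \cdots \circ a_{T-1}^l(i) \in \mathsf{S}_{k-1}^l$, so the contribution reduces to
$$
\check{\mathbb{E}}_{\phi,l}\left[\sum_{i \in \mathsf{S}_{T-1}^l} m_T^i \,\|H_{l,k}(\theta, \mathbf{Z}_k^{i_k,l}) - H_{l-1,k}(\bar{\theta}, \bar{\mathbf{Z}}_k^{i_k,l-1})\|^r\right].
$$
Re-indexing the inner sum by $j = i_k$ and using $m_T^i \leq \overline{C}/(N\underline{C})$ together with the fact that the total $m_T^i$-mass of the preimages of any fixed $j$ can be absorbed against the sub-probability structure $\sum_i m_T^i \leq 1$, this is reduced to a bounded-weight average of the fixed-index bound of Lemma \ref{lem:lem6} evaluated on $j \in \mathsf{S}_{k-1}^l$, giving a contribution of order $(\Delta_l^{1/2\wedge\vartheta} + \|\theta - \bar{\theta}\|)$ and hence the claimed $1/r$-power rate. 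Parts 2 and 3 follow the same three-event template, with Remark \ref{rem:rem1}(1) and Remark \ref{rem:rem1}(2) replacing Lemma \ref{lem:lem6} on $\mathcal{E}_1$; for Part 2 the uniform $\|H\|$ bound is unavailable, so on $\mathcal{E}_2 \cup \mathcal{E}_3$ I would instead apply H\"older with exponents $(1+\delta, (1+\delta)/\delta)$, using Lemma \ref{lem:lem5} for the moment prefactor and the probability estimates above to produce exactly the $1/(r(1+\delta))$ exponent together with the $(1 + \sum_t\|x_t^{N,l}\|^r + \sum_t\|\bar{x}_t^{N,l-1}\|^r)$ multiplier in the statement.

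The principal technical subtlety is the $\mathcal{E}_1$ contribution: one must pass from the fixed-index bounds of Lemma \ref{lem:lem6} and Remark \ref{rem:rem1} to a bound at the \emph{random} index $i_k$ without incurring an $N$-dependent constant. This is what forces the combined use of the maximal-coupling weight estimate $m_T^i \leq \overline{C}/(N\underline{C})$ and the descent-map re-indexing $i \mapsto i_k$ above, turning a potentially $N$-fold summation into a sub-probability weighted average against the uniform-in-$i$ fixed-index estimates. This mirrors the argument of \cite[Corollary 22]{ub_grad} on which the present lemma is patterned and is the only non-mechanical step in the otherwise straightforward extension of the Appendix \ref{app:1} estimates to the full kernel of Algorithm \ref{alg:dtccpbs}.
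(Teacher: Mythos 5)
Your overall architecture is the one the paper intends: the paper's own proof of Lemma \ref{lem:lem7} is a one-line citation to Lemma \ref{lem:lem6}, Remark \ref{rem:rem1} and the arguments of \cite[Corollaries 24, 27]{ub_grad}, and your reconstruction (condition on the forward pass, decompose over the outcome of the terminal maximal coupling, bound $\mathbb{P}(\mathcal{E}_3\mid\mathcal{G})$ via the total-variation identity together with (A\ref{ass:weight}) and Remark \ref{rem:rem1}(2), bound $\mathbb{P}(\mathcal{E}_2\mid\mathcal{G})$ via Lemma \ref{lem:lem3}, use the uniform bounds of (A\ref{ass:weight}) and (A\ref{ass:grad_bound}) on the bad events, and H\"older with exponents $(1+\delta,(1+\delta)/\delta)$ plus Lemma \ref{lem:lem5} for Part 2) is exactly that argument's skeleton. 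The bad-event analysis and the exponent bookkeeping (the $1/r$ and $1/(r\{1+\delta\})$ powers coming from the probability estimates rather than from the good-event terms) are correct.

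The gap is in the step you yourself flag as the ``principal technical subtlety'': the good-event contribution at the backward-traced index $i_k=a_k^l\circ\cdots\circ a_{T-1}^l(i)$. Writing $c_j=\sum_{i\in\mathsf{S}_{T-1}^l:\,i_k=j}m_T^i$, you claim that $m_T^i\leq\overline{C}/(N\underline{C})$ together with $\sum_i m_T^i\leq 1$ reduces $\sum_{j\in\mathsf{S}_{k-1}^l}c_jf(j)$ to a bounded-weight average controllable by the fixed-index bound of Lemma \ref{lem:lem6}. This does not follow: the preimage mass $c_j$ of a single $j$ can be $\Theta(1)$ (every terminal particle may descend from one time-$k$ ancestor), so $\sum_jc_jf(j)$ is only bounded by $\max_{j\in\mathsf{S}_{k-1}^l}f(j)$, and neither the per-index bound $\mathbb{E}[f(j)]\leq C\epsilon$ of Lemma \ref{lem:lem6}/Remark \ref{rem:rem1} nor the averaged bound $\mathbb{E}[N^{-1}\sum_{j\in\mathsf{S}_{k-1}^l}f(j)]\leq C\epsilon$ of Lemma \ref{lem:lem4} controls $\mathbb{E}[\max_jf(j)]$ without an extra factor of $N$. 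To close this you need an additional ingredient — e.g.\ a moment bound on the offspring counts $n_j=\#\{i:i_k=j\}$ (which is $O(1)$ uniformly in $N$ under (A\ref{ass:weight}) since each resampling step has conditional mean offspring $N\pi_j\leq\overline{C}/\underline{C}$), combined with Cauchy--Schwarz against the per-index bound at a higher moment. Note also that for Part 1 the paper's notation $\mathbf{Z}_k^{j_T^l,l}$ places the index $j_T^l$ at time $k$ literally, so on $\mathcal{E}_1$ the good-event term is $\sum_{i\in\mathsf{S}_{T-1}^l}m_T^if(i)\leq(\overline{C}/\underline{C})N^{-1}\sum_{i\in\mathsf{S}_{k-1}^l}f(i)$ with no re-indexing at all (using $\mathsf{S}_{T-1}^l\subseteq\mathsf{S}_{k-1}^l$), and Lemma \ref{lem:lem4} applies directly; the re-indexing problem genuinely arises only in Parts 2 and 3, where the statement involves $j_k^l$, and there it must be addressed rather than absorbed.
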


\begin{proof}
Can be established using Lemma \ref{lem:lem6},  Remark \ref{rem:rem1} and the arguments used for \cite[Corollary24, 27]{ub_grad}. 
\end{proof}

\begin{lem}\label{lem:lem8}
Assume (A\ref{ass:coup_bridge}-\ref{ass:grad_bound}).  Then for any $(T,C')\in\mathbb{N}\times\mathbb{R}^+$ there exists a $C<\infty$ such that for any $(l,\vartheta,N,\phi,\delta,\zeta)\in\mathbb{N}\times\mathbb{R}^+\times\{2,3,\dots\}\times\Theta^2\times\mathbb{R}^+\times(0,\frac{\tfrac{1}{2}\wedge\vartheta}{\vartheta(1+\delta)})$ and any 
$(\mathbf{z}^l,\bar{\mathbf{z}}^{l-1})\in \mathsf{B}^l_{\vartheta,C'}(\phi)\cap\mathsf{H}^l_{\vartheta,C'}(\phi)$ 
we have
$$
\check{\mathbb{E}}_{\phi,l}\left[\mathbb{I}_{(\mathsf{B}^l_{\vartheta,C'}(\phi)\cap\mathsf{H}^l_{\vartheta,C'}(\phi))^c}(\mathbf{Z}_T^{j_T^l,l},\bar{\mathbf{Z}}_T^{\bar{j}_T^{l-1},l-1})\right] \leq 
C\left(\Delta_l^{\tfrac{1}{2}\wedge\vartheta}+\|\theta-\bar{\theta}\|\right)^{1/(1+\delta)-\zeta\vartheta}
\left(1
+\sum_{k=0}^T\|x_k^{N,l}\|^{\vartheta}
+\sum_{k=0}^T\|\bar{x}_k^{N,l-1}\|^{\vartheta}\right).
$$
\end{lem}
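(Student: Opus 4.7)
The plan is to expand $(\mathsf{B}^l_{\vartheta,C'}(\phi)\cap\mathsf{H}^l_{\vartheta,C'}(\phi))^c$ as a finite union of elementary failure events, apply a union bound, and bound each summand by Markov's inequality combined with the one-step moment estimates from Lemma \ref{lem:lem7}. Throughout I will write $A:=\Delta_l^{1/2\wedge\vartheta}$, $B:=\|\theta-\bar{\theta}\|$, and $\mathcal{M}^{\vartheta}:=1+\sum_{k=0}^T\|x_k^{N,l}\|^{\vartheta}+\sum_{k=0}^T\|\bar{x}_k^{N,l-1}\|^{\vartheta}$, so the target reads $C(A+B)^{1/(1+\delta)-\zeta\vartheta}\mathcal{M}^{\vartheta}$.

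The complement decomposes, over $k\in\{0,\ldots,T\}$, as a union of three failure types: (a) $\|X_k^{j_k^l,l}-\bar{X}_k^{\bar{j}_k^{l-1},l-1}\|>C'\Delta_l^{\vartheta}$; (b) the forward weight at time $k$ differs by more than $C'(\Delta_l^{\vartheta}+B)$; and (c) $\|H_{l,k}(\theta,\cdot)-H_{l-1,k}(\bar{\theta},\cdot)\|>C'(\Delta_l^{\vartheta}+B)$. For each such event I apply Markov's inequality at the power $r:=(1/2\wedge\vartheta)\zeta$; the restriction $\zeta<(1/2\wedge\vartheta)/(\vartheta(1+\delta))$ gives $r\in(0,\min(1,\vartheta))$ and also $1/(1+\delta)-\zeta\vartheta>0$. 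The corresponding $r$-th moments, supplied by Lemma \ref{lem:lem7} (part 2 for (a), parts 3 and 1 for (b) and (c)), have the shape $C^r(A+B)^{1/(1+\delta)}(1+\mathrm{mom}_r)^r$, and the inequalities $(1+s)^r\leq 1+s$ for $s\geq 0$, $r\leq 1$, together with $\|x\|^r\leq 1+\|x\|^{\vartheta}$, collapse the moment factor into $C\mathcal{M}^{\vartheta}$.

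The core identity used to absorb the threshold is $\Delta_l^{r\vartheta}=A^{\zeta\vartheta}$. For events (b) and (c) the threshold carries a $B$ contribution, so $(\Delta_l^{\vartheta}+B)^{r}\geq\max(A^{\zeta\vartheta},B^{r})$; splitting $B\leq A$ and $B>A$, the bound
\[
\frac{(A+B)^{1/(1+\delta)}}{(\Delta_l^{\vartheta}+B)^{r}}\leq 2^{1/(1+\delta)}(A+B)^{1/(1+\delta)-\zeta\vartheta}
\]
follows easily, using $A^{1/(1+\delta)-\zeta\vartheta}\leq(A+B)^{1/(1+\delta)-\zeta\vartheta}$ by positivity of the exponent. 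For event (a) the regime $B\leq A$ yields the same estimate by an identical computation.

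The hard part is event (a) in the regime $B>A$: since the threshold $C'\Delta_l^{\vartheta}$ carries no $B$, a naive Markov estimate overshoots the target by a factor of $(B/A)^{\zeta\vartheta}$, which is unbounded as $\Delta_l\downarrow 0$. I would resolve this by interpolating against the trivial bound $P\leq 1$ via the elementary inequality $\min(1,X)\leq X^{c}$ for $c\in[0,1]$, with the specific choice $c:=1-\zeta\vartheta(1+\delta)\in(0,1)$ (well-defined precisely because of the constraint on $\zeta$). Exactly this trade-off is the source of the $\zeta\vartheta$ slack in the claimed exponent $1/(1+\delta)-\zeta\vartheta$ as opposed to the naive $1/(1+\delta)$. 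Summing the $O(T)$ contributions and absorbing $T$-dependent constants into $C$ completes the argument.
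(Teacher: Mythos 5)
Your overall strategy---a union bound over the elementary failure events defining $(\mathsf{B}^l_{\vartheta,C'}(\phi)\cap\mathsf{H}^l_{\vartheta,C'}(\phi))^c$, followed by Markov's inequality fed by the moment bounds of Lemma \ref{lem:lem7}---is exactly the paper's route (its proof is a two-line pointer to Markov plus Lemma \ref{lem:lem7}), and your handling of the events whose thresholds carry a $\|\theta-\bar{\theta}\|$ term, and of event (a) when $B:=\|\theta-\bar{\theta}\|\le A:=\Delta_l^{\frac{1}{2}\wedge\vartheta}$, is the intended computation. Two caveats even there: Lemma \ref{lem:lem7} is stated only for $r\in[1,\infty)$, and passing to your $r=(\tfrac{1}{2}\wedge\vartheta)\zeta<1$ by Jensen from $r=1$ gives an $r$-th moment of order $(A+B)^{r/(1+\delta)}$, not the $(A+B)^{1/(1+\delta)}$ you quote; with the honest exponent even your easy regime leaves a diverging factor $\Delta_l^{r[(\frac{1}{2}\wedge\vartheta)/(1+\delta)-\vartheta]}$, so you must justify that the Lemma \ref{lem:lem7} bounds persist for fractional moments with the exponent $1/(1+\delta)$ intact. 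Also, collapsing the moment factor into $1+\sum_{k=0}^T\|x_k^{N,l}\|^{\vartheta}+\sum_{k=0}^T\|\bar{x}_k^{N,l-1}\|^{\vartheta}$ needs $r\le\vartheta$, which your $r$ need not satisfy.

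The genuine gap is your resolution of event (a) when $B>A$. Writing $X$ for the Markov bound, the interpolation $\min(1,X)\le X^{c}$ with $c=1-\zeta\vartheta(1+\delta)$ does turn $(A+B)^{1/(1+\delta)}$ into $(A+B)^{1/(1+\delta)-\zeta\vartheta}$, but it does not remove the offending factor: $X^{c}$ still contains $\Delta_l^{-rc\vartheta}=A^{-c\zeta\vartheta}$, which is unbounded as $\Delta_l\downarrow 0$. Worse, no interpolation between the two bounds at your disposal can work: fix $B=\varepsilon\in(0,1)$ and let $\Delta_l\downarrow 0$; every Markov bound obtainable from Lemma \ref{lem:lem7} diverges (the threshold $C'\Delta_l^{\vartheta}$ vanishes while the first moment of $\|X_k^{j_k^l,l}-\bar{X}_k^{\bar{j}_k^{l-1},l-1}\|$ stays of order $\varepsilon^{1/(1+\delta)}$), so eventually $X\ge 1$ and hence $X^{c}\ge 1$ for every $c\in[0,1]$, whereas the target $C\varepsilon^{1/(1+\delta)-\zeta\vartheta}$ (times a moment factor) can be made arbitrarily small. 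The ingredients simply do not reach the claimed bound in this regime; indeed the two marginal chains run at parameters differing by $\varepsilon$ and their selected states should typically differ by order $\varepsilon\gg\Delta_l^{\vartheta}$, so the probability in question is plausibly close to one. You have in effect located a defect in the lemma as stated for general $\phi\in\Theta^2$: in the paper it is only ever invoked (through Lemmata \ref{lem:lem13}--\ref{lem:lem14}, via the sets $\overline{\mathsf{B}}^l_{\vartheta,C'}$ and $\overline{\mathsf{H}}^l_{\vartheta,C'}$) on inputs with $\|\theta-\bar{\theta}\|\le C'\Delta_l^{\vartheta}\le C'A$, i.e.\ precisely your easy regime. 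The honest repair is to add that hypothesis to the statement rather than to seek a sharper interpolation.
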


\begin{proof}
Can be established using Markov's inequality and Lemma \ref{lem:lem7}; see the proof of \cite[Lemma 28]{ub_grad}.
\end{proof}

\subsection{Results Associated to the Initialization}\label{app:3}

This section gives some results associated to the probability measure \eqref{eq:p_initial_coup},  
$\check{\mathbb{P}}^l_{\theta}$ and we write expectations associated to $\check{\mathbb{P}}^l_{\theta}$ as
$\check{\mathbb{E}}^l_{\theta}$.

\begin{lem}\label{lem:lem9}
Assume (A\ref{ass:coup_bridge},\ref{ass:weight_disc}).
\begin{enumerate}
\item{For any $(k,r)\in\{1,\dots,T\}\times[1,\infty)$ there exists a $C<\infty$ such that for any $(l,\theta)\in\mathbb{N}\times\Theta$ 
we have
$$
\check{\mathbb{E}}_{\phi}^l\left[
\|
H_{l,k}(\theta,\mathbf{Z}_k^{l}) - 
H_{l-1,k}(\theta,\bar{\mathbf{Z}}_k^{l-1})
\|^r
\right]^{1/r}
\leq C
\Delta_l^{\tfrac{1}{2}}.
$$
}
\item{For any $(k,r)\in\{1,\dots,T\}\times[1,\infty)$ there exists a $C<\infty$ such that for any $(l,\theta)\in\mathbb{N}\times\Theta$ 
we have
$$
\check{\mathbb{E}}_{\theta}^l\Bigg[
|
g_{\theta}(y_k|X_k^{l})R_{\theta,k-1,k}^l(C_{\theta,k-1,k}^l(X_{k-1}^{l},\mathbf{W}_{[k-1,k]}^{l},X_k^{j_k^l,l}))-
g_{\bar{\theta}}(y_k|\bar{X}_{k}^{l-1})
\times
$$
$$
R_{\theta,k-1,k}^{l-1}(C_{\theta,k-1,k}^{l-1}(\bar{X}_{k-1}^{l-1},\mathbf{W}_{[k-1,k]}^{l-1},\bar{X}_k^{l-1}))
|^r
\Bigg]^{1/r}
\leq C
\Delta_l^{\tfrac{1}{2}}.
$$}
\end{enumerate}
\end{lem}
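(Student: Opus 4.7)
The plan is to exploit the fact that under $\check{\mathbb{P}}_\theta^l$ the two trajectories share the same parameter $\theta$, so the state couplings are degenerate and the only residual source of discrepancy between the level-$l$ and level-$(l-1)$ paths is the Brownian coupling $\check{\mathbb{W}}^l$, which pairs a fine Wiener increment with its coarsening. This reduces both assertions to direct applications of (A\ref{ass:weight_disc}) with coincident parameters and starting points.

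The first step is to establish by induction on $k$ that $X_k^l = \bar{X}_k^{l-1}$ almost surely for every $k\in\{0,\ldots,T\}$. For $k=0$, (A\ref{ass:coup_bridge})(iii) applied with $\phi=(\theta,\theta)$ gives $\check{\mathbb{E}}_\theta^l[\|X_0^l-\bar{X}_0^{l-1}\|^r]^{1/r}\le C\|\theta-\theta\|=0$. Assuming the identity at time $k-1$, I condition on the trajectories up to that time and apply (A\ref{ass:coup_bridge})(iv) with $\phi=(\theta,\theta)$ and $x=\bar{x}$ to pass the identity to time $k$.

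For part 2, I condition on the $\sigma$-field generated by the trajectories up to time $k-1$. Conditionally, $(X_k^l,\bar{X}_k^{l-1},\mathbf{W}_{[k-1,k]}^l,\mathbf{W}_{[k-1,k]}^{l-1})$ is distributed according to precisely the integrating measure in (A\ref{ass:weight_disc})(i). Applying that assumption with $\bar\theta=\theta$ and $\bar{x}=x$ (both valid almost surely by the induction) yields the conditional $L^r$ bound $C(\Delta_l^{1/2}+0+0)=C\Delta_l^{1/2}$; taking outer expectation and an $L^r$-root delivers the claim.

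For part 1, I decompose
$$H_{l,k}(\theta,\mathbf{Z}_k^l)-H_{l-1,k}(\theta,\bar{\mathbf{Z}}_k^{l-1})=\sum_{j=1}^k\mathcal{D}_j+\bigl(\nabla_\theta\log\nu_\theta(X_0^l)-\nabla_\theta\log\nu_\theta(\bar{X}_0^{l-1})\bigr),$$
where $\mathcal{D}_j$ is the $j$-th gradient-of-log-weight increment difference. The boundary term vanishes almost surely from the $k=0$ identity. For each $\mathcal{D}_j$, conditioning on the past and applying (A\ref{ass:weight_disc})(ii) with coincident parameters and starting points yields $\|\mathcal{D}_j\|_{L^r}\le C\Delta_l^{1/2}$; Minkowski's inequality across at most $T$ summands absorbs the combinatorial factor into the constant. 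There is no serious obstacle: the argument works precisely because the $\check{\mathbb{P}}_\theta^l$ construction sets $\theta=\bar\theta$ and produces coincident states, so the $\|\theta-\bar\theta\|$ and $\|x-\bar{x}\|$ terms on the right of (A\ref{ass:weight_disc}) vanish, leaving only the genuine time-discretization contribution $\Delta_l^{1/2}$.
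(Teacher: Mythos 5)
Your proof is correct and takes essentially the same route as the paper's: the paper's own (one-line) argument is precisely a recursive application of (A\ref{ass:coup_bridge}) to get $\check{\mathbb{E}}_{\theta}^l[\|X_k^l-\bar{X}_k^{l-1}\|^r]=0$, followed by (A\ref{ass:weight_disc}) with coincident parameters and states, which is exactly the degenerate-coupling induction and conditional application of (A\ref{ass:weight_disc})(i)--(ii) that you spell out.
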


\begin{proof}
Follows easily by recursive application of the relevant assumptions and noting that 
$\check{\mathbb{E}}_{\theta}^l[\|X_k^l-\bar{X}_k^{l-1}\|^r]=0$ by using (A\ref{ass:coup_bridge}).
\end{proof}

\begin{lem}\label{lem:lem10}
Assume (A\ref{ass:coup_bridge},\ref{ass:weight_disc}). Then for any $(T,C')\in\mathbb{N}\times\mathbb{R}^+$ there exists a $C<\infty$ such that for any $(l,\vartheta,\theta)\in\mathbb{N}\times(0,\tfrac{1}{2})\times\Theta$ 
we have
$$
\check{\mathbb{E}}_{\theta}^l\left[\mathbb{I}_{(\mathsf{B}^l_{\vartheta,C'}(\theta,\theta)\cap\mathsf{H}^l_{\vartheta,C'}(\theta,\theta))^c}(\mathbf{Z}^{l},\bar{\mathbf{Z}}^{l-1})\right] \leq 
C\Delta_l^{\tfrac{1}{2}-\vartheta}.
$$
\end{lem}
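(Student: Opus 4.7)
The plan is to use a union bound followed by a straightforward Markov inequality, leveraging the $L^{r}$ bounds already established in Lemma \ref{lem:lem9}.

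First I would split the event: write
$$
\check{\mathbb{E}}_{\theta}^l\!\left[\mathbb{I}_{(\mathsf{B}^l_{\vartheta,C'}(\theta,\theta)\cap\mathsf{H}^l_{\vartheta,C'}(\theta,\theta))^c}\right] \leq
\check{\mathbb{P}}_{\theta}^l\!\left((\mathsf{B}^l_{\vartheta,C'}(\theta,\theta))^c\right) +
\check{\mathbb{P}}_{\theta}^l\!\left((\mathsf{H}^l_{\vartheta,C'}(\theta,\theta))^c\right),
$$
and further decompose each complement into the finite union (over $k\in\{0,\dots,T\}$ or $k\in\{1,\dots,T\}$) of the single-coordinate violation events defining $\mathsf{B}$ and $\mathsf{H}$. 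Since $T$ is fixed and will only contribute a multiplicative constant, it suffices to control the probability of each individual violation at a fixed $k$.

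For the path-difference condition $\|x_k-\bar{x}_k\|\leq C'\Delta_l^{\vartheta}$ in the definition of $\mathsf{B}^l_{\vartheta,C'}(\theta,\theta)$, the coupling $\check{\nu}_{(\theta,\theta)}$ and $\check{f}_{(\theta,\theta),k-1,k}$ is taken so that under $\check{\mathbb{P}}^{l}_{\theta}$ one has $X_k^l=\bar X_k^{l-1}$ almost surely (this is exactly the observation used in the proof of Lemma \ref{lem:lem9}), so this part of the event $(\mathsf{B}^l_{\vartheta,C'}(\theta,\theta))^c$ occurs with probability zero. For the weight-difference condition, I would apply Markov's inequality with exponent $r=1$ and invoke Lemma \ref{lem:lem9}(2), which gives an $L^{1}$ bound of order $\Delta_l^{1/2}$, yielding
$$
\check{\mathbb{P}}_{\theta}^l\!\left(|g_\theta R^l-g_\theta R^{l-1}|>C'\Delta_l^{\vartheta}\right) \leq \frac{1}{C'}\,C\,\Delta_l^{1/2-\vartheta}.
$$
The treatment of $(\mathsf{H}^l_{\vartheta,C'}(\theta,\theta))^c$ is identical: Markov with $r=1$ combined with Lemma \ref{lem:lem9}(1) gives $\check{\mathbb{P}}_{\theta}^l(\|H_{l,k}(\theta,\mathbf{Z}_k^l)-H_{l-1,k}(\theta,\bar{\mathbf{Z}}_k^{l-1})\|>C'\Delta_l^{\vartheta})\leq C\Delta_l^{1/2-\vartheta}$.

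Summing the at most $2T+1$ such contributions yields the stated bound $C\Delta_l^{1/2-\vartheta}$, and the restriction $\vartheta\in(0,\tfrac{1}{2})$ is precisely what is needed for the exponent $\tfrac{1}{2}-\vartheta$ to be strictly positive so that the bound is nontrivial as $l\to\infty$. There is no real obstacle here: the reason the proof is short is that, unlike in Lemmas \ref{lem:lem7}--\ref{lem:lem8} where one must control empirical averages over a coupled particle system together with the event $\{i\in\mathsf{S}_{k-1}^l\}$, here we are only dealing with a single draw from the initialization measure $\check{\mathbb{P}}_{\theta}^l$, and the potentially delicate path-difference part vanishes by construction of the coupling. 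The only mild care needed is choosing $r=1$ (rather than a larger $r$) in Markov's inequality; taking $r$ larger would multiply the exponent of $\Delta_l$ in both numerator and denominator by $r$ and recover the same final rate $\Delta_l^{1/2-\vartheta}$, so the choice is immaterial for the stated result.
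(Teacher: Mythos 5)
Your proof is correct and follows essentially the same route as the paper's (one-line) argument: Markov's inequality applied to the $L^r$ bounds of Lemma \ref{lem:lem9}, with the path-difference part of $(\mathsf{B}^l_{\vartheta,C'}(\theta,\theta))^c$ contributing nothing because the coupling forces $X_k^l=\bar{X}_k^{l-1}$ almost surely under $\check{\mathbb{P}}_{\theta}^l$. One small quibble with your closing aside: Markov with exponent $r$ yields a bound of order $\Delta_l^{r(1/2-\vartheta)}$, so a larger $r$ would actually \emph{improve} the rate rather than recover the same one --- this does not affect the validity of your $r=1$ argument, which gives exactly the stated bound.
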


\begin{proof}
Can be established by using Markov's inequality and Lemma \ref{lem:lem9}. 
\end{proof}

\subsection{Results Associated to Algorithm \ref{alg:USMA}}\label{app:4}

We now consider Algorithm \ref{alg:USMA} and write expectations associated to the law as $\mathbb{E}[\cdot]$.
In many of the results below,  $l\in\mathbb{N}$ is given and the afore-mentioned expectation is conditioning on $l$ having been sampled and $n\in\mathbb{N}$ the time index being allowed to run indefinitely.  Although in practice we stop the simulation at time $N_p$ this is ignored,  simply because it does not make any real impact in our final proof.
We introduce the sets for $(l,\vartheta,C)\in\mathbb{N}\times\mathbb{R}^+\times\mathbb{R}^+$
\begin{eqnarray*}
\overline{\mathsf{B}}_{\vartheta,C}^l & := & \{(\mathbf{z}^l,\bar{\mathbf{z}}^{l-1},\theta,\bar{\theta})\in\mathsf{Z}^l`\times
\mathsf{Z}^{l-1}\times\Theta^2: \|\theta-\bar{\theta}\|\leq C\Delta_l^{\vartheta},(\mathbf{z}^l,\bar{\mathbf{z}}^{l-1})\in
\mathsf{B}_{\vartheta,C}^l(\theta,\bar{\theta})
\}\\
\overline{\mathsf{H}}_{\vartheta,C}^l & := & \{(\mathbf{z}^l,\bar{\mathbf{z}}^{l-1},\theta,\bar{\theta})\in\mathsf{Z}^l`\times
\mathsf{Z}^{l-1}\times\Theta^2: \|\theta-\bar{\theta}\|\leq C\Delta_l^{\vartheta},(\mathbf{z}^l,\bar{\mathbf{z}}^{l-1})\in
\mathsf{H}_{\vartheta,C}^l(\theta,\bar{\theta})
\}
\end{eqnarray*}

\begin{lem}\label{lem:lem11}
Assume (A\ref{ass:coup_bridge}-\ref{ass:grad_bound}).  Then for any $(T,r,C')\in\mathbb{N}\times[1,\infty)\times\mathbb{R}^+$ there exists a $C<\infty$ such that for any $(l,\vartheta,N,n)\in\mathbb{N}\times\mathbb{R}^+\times\{2,3,\dots\}\times\mathbb{N}$ we have
$$
\mathbb{E}\left[
\|
H_{l}(\theta_{n-1}^l,\mathbf{Z}^{n,l}) - 
H_{l-1}(\theta_{n-1}^{l-1},\bar{\mathbf{Z}}^{n,l-1})
\|^r\mathbb{I}_{\overline{\mathsf{B}}_{\vartheta,C'}^l\cap \overline{\mathsf{H}}_{\vartheta,C'}^l}(\mathbf{Z}^{n-1,l},\mathbf{Z}^{n-1,l},\theta_{n-1}^l,\theta_{n-1}^{l-1})
\right]^{1/r}
\leq C\left(\Delta_l^{\tfrac{1}{2}\wedge\vartheta}\right)^{1/r}.
$$
\end{lem}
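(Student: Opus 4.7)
The plan is to condition on the $\sigma$-algebra $\mathcal{G}_{n-1}$ generated by the first $n-1$ iterations of Algorithm \ref{alg:USMA}, and then invoke Lemma \ref{lem:lem7}(1) inside the conditional expectation. The indicator appearing in the statement is $\mathcal{G}_{n-1}$-measurable by construction, and by the specification of Algorithm \ref{alg:USMA} (with \textbf{(FC)} replacing Step 5 of Algorithm \ref{alg:dtccpbs}), the conditional distribution of $(\mathbf{Z}^{n,l},\bar{\mathbf{Z}}^{n,l-1})$ given $\mathcal{G}_{n-1}$ coincides with $\check{K}_{\phi_{n-1},l}(\mathbf{v}^{n-1,l},\cdot)$, where $\phi_{n-1}=(\theta_{n-1}^l,\theta_{n-1}^{l-1})$ and $\mathbf{v}^{n-1,l}=(\mathbf{Z}^{n-1,l},\bar{\mathbf{Z}}^{n-1,l-1})$. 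Identifying this conditional law is the first step I would carry out.

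Next I would observe that on the event defined by the indicator, the definitions of $\overline{\mathsf{B}}_{\vartheta,C'}^l$ and $\overline{\mathsf{H}}_{\vartheta,C'}^l$ give simultaneously $\|\theta_{n-1}^l-\theta_{n-1}^{l-1}\|\leq C'\Delta_l^{\vartheta}$ and $(\mathbf{Z}^{n-1,l},\bar{\mathbf{Z}}^{n-1,l-1})\in\mathsf{B}_{\vartheta,C'}^l(\phi_{n-1})\cap \mathsf{H}_{\vartheta,C'}^l(\phi_{n-1})$. These are precisely the hypotheses of Lemma \ref{lem:lem7}(1) at $k=T$ and $\phi=\phi_{n-1}$. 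Since $H_l\equiv H_{l,T}$ when evaluated on the full path, Lemma \ref{lem:lem7}(1) yields, on the indicator set,
\[
\check{\mathbb{E}}_{\phi_{n-1},l}\!\left[\|H_l(\theta_{n-1}^l,\mathbf{Z}^{n,l})-H_{l-1}(\theta_{n-1}^{l-1},\bar{\mathbf{Z}}^{n,l-1})\|^r\right]^{1/r} \leq C\!\left(\Delta_l^{\tfrac{1}{2}\wedge\vartheta}+\|\theta_{n-1}^l-\theta_{n-1}^{l-1}\|\right)^{1/r}.
\]
Because $\Delta_l\leq 1$ and $\vartheta\geq \tfrac{1}{2}\wedge\vartheta$, we have $\Delta_l^{\vartheta}\leq \Delta_l^{\tfrac{1}{2}\wedge\vartheta}$, so the parenthesis is dominated by a constant times $\Delta_l^{\tfrac{1}{2}\wedge\vartheta}$. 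Raising to the $r$-th power, taking the outer expectation (the indicator is bounded by $1$), and extracting the $r$-th root produces the stated bound.

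The only mildly delicate point I anticipate is the bookkeeping involved in verifying the conditional-law identification and confirming that the $(\mathbf{Z}^{n-1,l},\mathbf{Z}^{n-1,l})$ appearing in the indicator should be read as $(\mathbf{Z}^{n-1,l},\bar{\mathbf{Z}}^{n-1,l-1})$, matching the inputs to the sets $\overline{\mathsf{B}}_{\vartheta,C'}^l$ and $\overline{\mathsf{H}}_{\vartheta,C'}^l$; this reduces to an inductive check based on how Algorithm \ref{alg:USMA} advances the state and parameter jointly via $\check{K}_{\phi_{n-1},l}$. Once this is in place, the rest of the argument is a direct combination of the tower property and Lemma \ref{lem:lem7}(1), with no additional technical machinery needed.
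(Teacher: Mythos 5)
Your proposal is correct and follows exactly the route the paper takes: the paper's proof of this lemma is the single line ``Follows from Lemma \ref{lem:lem7} 1.'', and your conditioning argument, the identification of the conditional law with $\check{K}_{\phi_{n-1},l}$, and the observation that $\|\theta_{n-1}^l-\theta_{n-1}^{l-1}\|\leq C'\Delta_l^{\vartheta}\leq C'\Delta_l^{\tfrac{1}{2}\wedge\vartheta}$ on the indicator event are precisely the details being suppressed there. You also correctly read the $(\mathbf{Z}^{n-1,l},\mathbf{Z}^{n-1,l})$ in the indicator as a typo for $(\mathbf{Z}^{n-1,l},\bar{\mathbf{Z}}^{n-1,l-1})$.
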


\begin{proof}
Follows from Lemma \ref{lem:lem7} 1..
\end{proof}

\begin{lem}\label{lem:lem12}
Assume (A\ref{ass:coup_bridge}-\ref{ass:grad_bound}).  Then for any $(T,r,C')\in\mathbb{N}\times[1,\infty)\times\mathbb{R}^+$ there exists a $C<\infty$ such that for any $(l,\vartheta,N,n)\in\mathbb{N}\times\mathbb{R}^+\times\{2,3,\dots\}\times\mathbb{N}$ we have
$$
\mathbb{E}\left[
\|\theta_{n}^l - \theta_{n}^{l-1}\|^r\mathbb{I}_{\overline{\mathsf{B}}_{\vartheta,C'}^l\cap \overline{\mathsf{H}}_{\vartheta,C'}^l}(\mathbf{Z}^{n-1,l},\mathbf{Z}^{n-1,l},\theta_{n-1}^l,\theta_{n-1}^{l-1})
\right]^{1/r}
\leq C\left(\Delta_l^{\vartheta}+\left(\Delta_l^{\tfrac{1}{2}\wedge\vartheta}\right)^{1/r}\right).
$$
\end{lem}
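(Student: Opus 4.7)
The plan is to decompose $\theta_n^l - \theta_n^{l-1}$ via the stochastic approximation recursion from Algorithm \ref{alg:USMA} and then invoke Lemma \ref{lem:lem11} to control the resulting stochastic increment, while extracting the contribution of the previous step directly from the defining constraint of the indicator set. Writing $\mathcal{G}$ for the event $\{(\mathbf{Z}^{n-1,l},\bar{\mathbf{Z}}^{n-1,l-1},\theta_{n-1}^l,\theta_{n-1}^{l-1}) \in \overline{\mathsf{B}}^l_{\vartheta,C'} \cap \overline{\mathsf{H}}^l_{\vartheta,C'}\}$, the SA update rule gives
$$
\theta_n^l - \theta_n^{l-1} = (\theta_{n-1}^l - \theta_{n-1}^{l-1}) + \gamma_n\bigl[H_l(\theta_{n-1}^l,\mathbf{Z}^{n,l}) - H_{l-1}(\theta_{n-1}^{l-1},\bar{\mathbf{Z}}^{n,l-1})\bigr].
$$
Multiplying by $\mathbb{I}_{\mathcal{G}}$ and applying Minkowski's inequality yields the one-step bound
$$
\mathbb{E}\bigl[\|\theta_n^l - \theta_n^{l-1}\|^r\mathbb{I}_{\mathcal{G}}\bigr]^{1/r} \leq \mathbb{E}\bigl[\|\theta_{n-1}^l - \theta_{n-1}^{l-1}\|^r\mathbb{I}_{\mathcal{G}}\bigr]^{1/r} + \gamma_n\,\mathbb{E}\bigl[\|H_l(\theta_{n-1}^l,\mathbf{Z}^{n,l}) - H_{l-1}(\theta_{n-1}^{l-1},\bar{\mathbf{Z}}^{n,l-1})\|^r\mathbb{I}_{\mathcal{G}}\bigr]^{1/r}.
$$

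The two terms on the right are controlled by different mechanisms. For the first, the constraint $\|\theta-\bar{\theta}\|\leq C'\Delta_l^{\vartheta}$ is baked into the definition of $\overline{\mathsf{B}}^l_{\vartheta,C'}$, so on $\mathcal{G}$ one has $\|\theta_{n-1}^l - \theta_{n-1}^{l-1}\|\leq C'\Delta_l^{\vartheta}$ pointwise, giving a bound of $C'\Delta_l^{\vartheta}$. For the second, Lemma \ref{lem:lem11} applies directly; its hypotheses, indicator set, and test statistic all match what is required, and it yields a bound of order $(\Delta_l^{\tfrac{1}{2}\wedge\vartheta})^{1/r}$. Since the stepsizes $(\gamma_n)_{n\in\mathbb{N}}$ used in Algorithm \ref{alg:USMA} (as prescribed in Section \ref{sec:theory}) are uniformly bounded in $n$, the factor $\gamma_n$ can be absorbed into the universal constant, producing the claimed bound $C(\Delta_l^{\vartheta} + (\Delta_l^{\tfrac{1}{2}\wedge\vartheta})^{1/r})$.

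The work in this lemma is essentially bookkeeping: the substantive one-step estimate is Lemma \ref{lem:lem11}, and the previous-step contribution is an immediate consequence of the definition of $\mathcal{G}$. The only subtlety worth double-checking is that $\mathcal{G}$, as used when quoting Lemma \ref{lem:lem11}, really does coincide with the indicator set featured in that lemma's statement (same threshold $C'$, same previous-time-index variables and parameters), which holds by design. No iteration is required at this stage since the statement is a pure one-step estimate; the recursion in $n$ and the combination with Lemma \ref{lem:lem8} will only enter later in the proof of Theorem \ref{theo:main}.
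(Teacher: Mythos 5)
Your proof is correct and is exactly the argument the paper compresses into its one-line proof (``follows easily from the recursive update and Lemma \ref{lem:lem11}''): decompose via the SA recursion, bound the carried-over difference $\|\theta_{n-1}^l-\theta_{n-1}^{l-1}\|\leq C'\Delta_l^{\vartheta}$ directly from the defining constraint of $\overline{\mathsf{B}}^l_{\vartheta,C'}$, and control the increment with Lemma \ref{lem:lem11}, absorbing the bounded step size $\gamma_n$ into the constant.
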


\begin{proof}
Follows easily from the recursive update of $\theta_n^l,\theta_n^{l-1}$ and Lemma \ref{lem:lem11}.
\end{proof}

\begin{lem}\label{lem:lem13}
Assume (A\ref{ass:coup_bridge}-\ref{ass:mom_cond}).  Then for any $(T,C')\in\mathbb{N}\times\mathbb{R}^+$ there exists a $C<\infty$ such that for any $(l,\vartheta,N,\delta,\zeta,n)\in\mathbb{N}\times(0,1)\times\{2,3,\dots\}\times\mathbb{R}^+\times(0,\frac{\tfrac{1}{2}\wedge\vartheta}{\vartheta(1+\delta)})\times\mathbb{N}$ 
we have
$$
\mathbb{E}\left[\mathbb{I}_{(\overline{\mathsf{B}}^l_{\vartheta,C'}\cap\overline{\mathsf{H}}^l_{\vartheta,C'})^c}(\mathbf{Z}^{n,l},\bar{\mathbf{Z}}^{n,l-1},\theta_n^l,
\theta_{n}^{l-1})
\mathbb{I}_{\overline{\mathsf{B}}^l_{\vartheta,C'}\cap\overline{\mathsf{H}}^l_{\vartheta,C'}}(\mathbf{Z}^{n-1,l},\bar{\mathbf{Z}}^{n-1,l-1},\theta_{n-1}^l,
\theta_{n-1}^{l-1})
\right] \leq 
C\left(\Delta_l^{\tfrac{1}{2}\wedge\vartheta}\right)^{\bar{\vartheta}}
$$
where $\bar{\vartheta}=\min\{1/(1+\delta)-\zeta\vartheta,\zeta(1-\vartheta)\}$.
\end{lem}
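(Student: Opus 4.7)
The idea is to split the event $A_n^c \cap A_{n-1}$, where $A_m$ denotes the ``good set'' event $\{(\mathbf{Z}^{m,l}, \bar{\mathbf{Z}}^{m,l-1}, \theta_m^l, \theta_m^{l-1}) \in \overline{\mathsf{B}}^l_{\vartheta,C'} \cap \overline{\mathsf{H}}^l_{\vartheta,C'}\}$, via a union bound over two failure modes. Let $\phi_{n-1} := (\theta_{n-1}^l, \theta_{n-1}^{l-1})$. Mode (i): the coupled CCPF step transports the new path $(\mathbf{Z}^{n,l}, \bar{\mathbf{Z}}^{n,l-1})$ out of $\mathsf{B}^l_{\vartheta,C'}(\phi_{n-1}) \cap \mathsf{H}^l_{\vartheta,C'}(\phi_{n-1})$, the good set for the \emph{pre-update} parameters. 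Mode (ii): the CCPF output stays in that good set, but the subsequent SA update drags $\|\theta_n^l - \theta_n^{l-1}\|$ past the $C' \Delta_l^\vartheta$ threshold and, by Lipschitz-in-$\theta$ considerations, destroys the path-level $\mathsf{B}$ and $\mathsf{H}$ conditions evaluated at the new parameter $\phi_n$. Each mode is then matched to one of the two terms in the min defining $\bar\vartheta$.

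For mode (i), I condition on $\mathcal{F}_{n-1}$ and apply Lemma \ref{lem:lem8} with input $\phi = \phi_{n-1}$ and starting path $(\mathbf{Z}^{n-1,l}, \bar{\mathbf{Z}}^{n-1,l-1})$. On $A_{n-1}$ one has $\|\theta_{n-1}^l - \theta_{n-1}^{l-1}\| \leq C' \Delta_l^\vartheta$, so the term $\Delta_l^{1/2 \wedge \vartheta} + \|\theta - \bar\theta\|$ in Lemma \ref{lem:lem8} is dominated by a multiple of $\Delta_l^{1/2 \wedge \vartheta}$. The random moment factor $1 + \sum_k \|X_k^{N,l}\|^\vartheta + \sum_k \|\bar{X}_k^{N,l-1}\|^\vartheta$ is uniformly $L^1$-bounded by (A\ref{ass:mom_cond}), and taking the outer expectation delivers the contribution $C(\Delta_l^{1/2 \wedge \vartheta})^{1/(1+\delta) - \zeta \vartheta}$, matching the first term in the min.

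For mode (ii), I condition additionally on $(\mathbf{Z}^{n,l}, \bar{\mathbf{Z}}^{n,l-1}) \in \mathsf{B}^l_{\vartheta,C'}(\phi_{n-1}) \cap \mathsf{H}^l_{\vartheta,C'}(\phi_{n-1})$. Membership in $\mathsf{H}^l_{\vartheta,C'}(\phi_{n-1})$ forces the deterministic bound $\|H_l(\theta_{n-1}^l, \mathbf{Z}^{n,l}) - H_{l-1}(\theta_{n-1}^{l-1}, \bar{\mathbf{Z}}^{n,l-1})\| \leq C'(\Delta_l^\vartheta + \|\theta_{n-1}^l - \theta_{n-1}^{l-1}\|)$, i.e.~of order $\Delta_l^\vartheta$ on $A_{n-1}$. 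Combined with the SA recursion and the $\theta$-Lipschitz content implicit in (A\ref{ass:weight_disc}) and (A\ref{ass:grad_bound}), the residual bad event reduces to a parameter-gap overshoot, which is bounded by a Markov inequality applied to Lemma \ref{lem:lem12} at moment order $r$ calibrated via the admissibility constraint $\zeta \in \bigl(0, \tfrac{1/2 \wedge \vartheta}{\vartheta(1+\delta)}\bigr)$, together with the sharper $H$-increment control of Lemma \ref{lem:lem11}. This produces the second contribution $(\Delta_l^{1/2 \wedge \vartheta})^{\zeta(1-\vartheta)}$, matching the second term in the min.

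The delicate part will be mode (ii): one must carefully verify that the path-level constraints in $\mathsf{B}^l_{\vartheta,C'}(\cdot)$ and $\mathsf{H}^l_{\vartheta,C'}(\cdot)$, both of which depend explicitly on the pair $(\theta, \bar\theta)$ through $g_\theta R_\theta^l$ and $\nabla_\theta \log\nu_\theta$, are preserved under a single SA step up to an $O(\gamma_n \Delta_l^\vartheta)$ drift that can be absorbed into the universal constant $C$ of the final bound. The tight balancing of the a~priori size $\Delta_l^\vartheta$ of the parameter gap against the fluctuating $(\Delta_l^{1/2 \wedge \vartheta})^{1/r}$ tail of the score difference, to land on \emph{exactly} the exponent $\zeta(1-\vartheta)$ rather than something weaker, is the delicate calibration and is where the open range for $\zeta$ is used in a non-trivial way.
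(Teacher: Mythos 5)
Your proposal is correct in outline and takes essentially the same route as the paper: a union bound splitting the failure at time $n$ into a parameter-gap overshoot, handled by Markov's inequality against Lemma \ref{lem:lem12} to obtain the exponent $\zeta(1-\vartheta)$, and a failure of the path-level conditions, handled by the Lemma \ref{lem:lem7}/\ref{lem:lem8} moment machinery together with (A\ref{ass:mom_cond}) to obtain $1/(1+\delta)-\zeta\vartheta$, exactly matching the paper's assignment of the two terms in $\bar{\vartheta}$. The only structural difference is that the paper decomposes the bad event directly by the defining conditions of $\overline{\mathsf{B}}^l_{\vartheta,C'}\cap\overline{\mathsf{H}}^l_{\vartheta,C'}$ evaluated at the \emph{new} pair $(\theta_n^l,\theta_n^{l-1})$ (introducing the sets $\mathsf{C}^l_{\vartheta,C'}$ and $\mathsf{D}^l_{\vartheta,C',k}$ and noting the latter sits inside an event with deterministic threshold $C'\Delta_l^{\vartheta}$), so the ``transfer of good-set membership across the SA step'' that you flag as the delicate point never arises as a separate obligation there --- Markov's inequality is applied to the new-parameter events directly, with the same moment estimates doing the work.
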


\begin{proof}
Consider
$$
\mathbb{E}[
\mathbb{I}_{ \mathsf{C}^l_{\vartheta,C'}}(\theta_n^l,\theta_{n}^{l-1})
\mathbb{I}_{\overline{\mathsf{B}}^l_{\vartheta,C'}\cap\overline{\mathsf{H}}^l_{\vartheta,C'}}(\mathbf{Z}^{n-1,l},\bar{\mathbf{Z}}^{n-1,l-1},\theta_{n-1}^l,
\theta_{n-1}^{l-1})
]
$$
where $\mathsf{C}^l_{\vartheta,C'}=\{(\theta,\bar{\theta})\in\Theta^2:\|\theta-\bar{\theta}\|>C'\Delta_l^{\vartheta}\}$.  Then using Markov's inequality along with Lemma \ref{lem:lem12} we have that
$$\mathbb{E}[
\mathbb{I}_{ \mathsf{C}^l_{\vartheta,C'}}(\theta_n^l,\theta_{n}^{l-1})
\mathbb{I}_{\overline{\mathsf{B}}^l_{\vartheta,C'}\cap\overline{\mathsf{H}}^l_{\vartheta,C'}}(\mathbf{Z}^{n-1,l},\bar{\mathbf{Z}}^{n-1,l-1},\theta_{n-1}^l,
\theta_{n-1}^{l-1})
]
\leq C\left(\Delta_l^{\tfrac{1}{2}\wedge\vartheta}\right)^{\zeta(1-\vartheta)}.
$$
If one considers for instance
$$
\mathbb{E}[
\mathbb{I}_{ \mathsf{D}^l_{\vartheta,C',k}(\theta_n^l,\theta_{n}^{l-1})}(\mathbf{Z}^{n,l},\bar{\mathbf{Z}}^{n,l-1})
\mathbb{I}_{\overline{\mathsf{B}}^l_{\vartheta,C'}\cap\overline{\mathsf{H}}^l_{\vartheta,C'}}
(\mathbf{Z}^{n-1,l},\bar{\mathbf{Z}}^{n-1,l-1},\theta_{n-1}^l,
\theta_{n-1}^{l-1})
]
$$
where 
$$
\mathsf{D}^l_{\vartheta,C',k}(\theta_n^l,\theta_{n}^{l-1})
=\{\mathbf{z}\times\bar{\mathbf{z}}\in\mathsf{Z}^l\times\mathsf{Z}^{l-1}:\|H_{l,k}^l(\theta_n^l,\mathbf{z}_k) - H_{l-1,k}^l(\theta_{n}^{l-1},\bar{\mathbf{z}}_k)\| > C'(\Delta_l^{\vartheta}+
\|\theta_n^l-\theta_{n}^{l-1}\|
)
\}.
$$
Then clearly
$$
\mathsf{D}^l_{\vartheta,C',k}(\theta_n^l,\theta_{n}^{l-1})
\subseteq
\{\mathbf{z}\times\bar{\mathbf{z}}\in\mathsf{Z}^l\times\mathsf{Z}^{l-1}:\|H_{l,k}^l(\theta_n^l,\mathbf{z}_k) - H_{l-1,k}^l(\theta_{n}^{l-1},\bar{\mathbf{z}}_k)\| > C'\Delta_l^{\vartheta}
\}.
$$
Then, just as for the proof of Lemma \ref{lem:lem8} one can use a Markov inequality argument,  combining with (A\ref{ass:mom_cond}) when needed.
\end{proof}

\begin{lem}\label{lem:lem14}
Assume (A\ref{ass:coup_bridge}-\ref{ass:mom_cond}).  Then for any $(T,C')\in\mathbb{N}\times\mathbb{R}^+$ there exists a $C<\infty$ such that for any $(l,\vartheta,N,\delta,\zeta,n)\in\mathbb{N}\times(0,\tfrac{1}{2})\times\{2,3,\dots\}\times\mathbb{R}^+\times(0,\frac{1}{(1+\delta)})\times\mathbb{N}$ 
we have
$$
\mathbb{E}\left[\mathbb{I}_{(\overline{\mathsf{B}}^l_{\vartheta,C'}\cap\overline{\mathsf{H}}^l_{\vartheta,C'})^c}(\mathbf{Z}^{n,l},\bar{\mathbf{Z}}^{n,l-1},\theta_n^l,
\theta_{n}^{l-1})\right] \leq 
C(n+1)\Delta_l^{\tilde{\vartheta}}
$$
where 
\begin{equation}\label{eq:exp_def}
\tilde{\vartheta} = \min\{
\min\{1/(1+\delta)-\zeta\vartheta,\zeta(1-\vartheta)\}
\vartheta,
\tfrac{1}{2}-\vartheta
\}.
\end{equation}
\end{lem}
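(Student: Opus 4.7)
The plan is to proceed by induction on $n$, using Lemma \ref{lem:lem13} to handle the one-step transition from the good set to its complement and Lemma \ref{lem:lem10} to control the initialization. The basic decomposition is
\begin{align*}
\mathbb{E}\!\left[\mathbb{I}_{(\overline{\mathsf{B}}^l_{\vartheta,C'}\cap\overline{\mathsf{H}}^l_{\vartheta,C'})^c}(\mathbf{Z}^{n,l},\bar{\mathbf{Z}}^{n,l-1},\theta_n^l,\theta_n^{l-1})\right]
&= \mathbb{E}\!\left[\mathbb{I}_{(\overline{\mathsf{B}}^l_{\vartheta,C'}\cap\overline{\mathsf{H}}^l_{\vartheta,C'})^c}(n)\,\mathbb{I}_{(\overline{\mathsf{B}}^l_{\vartheta,C'}\cap\overline{\mathsf{H}}^l_{\vartheta,C'})^c}(n-1)\right] \\
&\quad + \mathbb{E}\!\left[\mathbb{I}_{(\overline{\mathsf{B}}^l_{\vartheta,C'}\cap\overline{\mathsf{H}}^l_{\vartheta,C'})^c}(n)\,\mathbb{I}_{\overline{\mathsf{B}}^l_{\vartheta,C'}\cap\overline{\mathsf{H}}^l_{\vartheta,C'}}(n-1)\right],
\end{align*}
where I have abbreviated the indicator at step $m$ by writing $(m)$ for the evaluation at $(\mathbf{Z}^{m,l},\bar{\mathbf{Z}}^{m,l-1},\theta_m^l,\theta_m^{l-1})$. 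The first term is bounded trivially by $\mathbb{E}[\mathbb{I}_{(\overline{\mathsf{B}}\cap\overline{\mathsf{H}})^c}(n-1)]$, which will be controlled by the inductive hypothesis.

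For the base case $n=0$, the parameters satisfy $\theta_0^l=\theta_0^{l-1}$ by construction in Algorithm \ref{alg:USMA}, so $\|\theta_0^l-\theta_0^{l-1}\|=0\leq C'\Delta_l^\vartheta$, and the joint state $(\mathbf{Z}^{0,l},\bar{\mathbf{Z}}^{0,l-1})$ is sampled from $\check{\mathbb{P}}_{\theta_0^l}^l$. Since $\vartheta\in(0,\tfrac12)$, Lemma \ref{lem:lem10} delivers a bound of the form $C\Delta_l^{\tfrac12-\vartheta}$, which is at most $C\Delta_l^{\tilde\vartheta}$ by the definition of $\tilde\vartheta$ in \eqref{eq:exp_def}.

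For the inductive step, I would apply Lemma \ref{lem:lem13} to the second term: since $\vartheta<\tfrac12$ we have $\tfrac12\wedge\vartheta=\vartheta$, so the exponent $\bar\vartheta$ there yields a bound of
$$
C\,\bigl(\Delta_l^{\vartheta}\bigr)^{\min\{1/(1+\delta)-\zeta\vartheta,\ \zeta(1-\vartheta)\}} \;\leq\; C\,\Delta_l^{\tilde\vartheta},
$$
again by definition of $\tilde\vartheta$. Combining with the inductive hypothesis at $n-1$, one gets $\mathbb{E}[\mathbb{I}_{(\overline{\mathsf{B}}\cap\overline{\mathsf{H}})^c}(n)]\leq C n\Delta_l^{\tilde\vartheta}+C\Delta_l^{\tilde\vartheta}=C(n+1)\Delta_l^{\tilde\vartheta}$, as claimed. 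The constants $C$ in the base case, the inductive step, and the final statement can be absorbed into a single sufficiently large $C$ independent of $l$ and $n$.

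The main technical hurdle I would anticipate is verifying the hypotheses of Lemma \ref{lem:lem13} inside the expectation; concretely, on the event $\overline{\mathsf{B}}^l_{\vartheta,C'}\cap\overline{\mathsf{H}}^l_{\vartheta,C'}$ at step $n-1$ one has $(\mathbf{Z}^{n-1,l},\bar{\mathbf{Z}}^{n-1,l-1})\in \mathsf{B}^l_{\vartheta,C'}(\theta_{n-1}^l,\theta_{n-1}^{l-1})\cap\mathsf{H}^l_{\vartheta,C'}(\theta_{n-1}^l,\theta_{n-1}^{l-1})$ together with $\|\theta_{n-1}^l-\theta_{n-1}^{l-1}\|\leq C'\Delta_l^\vartheta$. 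Conditioning on $\mathcal{F}_{n-1}$ (the sigma-field of everything produced through iteration $n-1$), the conditional law of $(\mathbf{Z}^{n,l},\bar{\mathbf{Z}}^{n,l-1})$ given the past is exactly $\check K_{\phi_{n-1},l}(\mathbf{v}^{n-1,l},\cdot)$ with $\phi_{n-1}=(\theta_{n-1}^l,\theta_{n-1}^{l-1})$, which is precisely the setting Lemma \ref{lem:lem13} addresses. One must also observe that the SA update $\theta_n^l-\theta_n^{l-1}=\theta_{n-1}^l-\theta_{n-1}^{l-1}+\gamma_n(H_l-H_{l-1})$ keeps $\|\theta_n^l-\theta_n^{l-1}\|\lesssim \Delta_l^\vartheta$ on the good event, which is the content of Lemma \ref{lem:lem12} and is needed to check $\|\theta_n^l-\theta_n^{l-1}\|\leq C'\Delta_l^\vartheta$ after the update (possibly with a slightly larger $C'$ that can be fixed up-front). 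Everything else is routine once the recursive bookkeeping is set up.
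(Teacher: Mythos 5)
Your proof is correct and follows essentially the same route as the paper: an induction on $n$ with the initialization handled by Lemma \ref{lem:lem10} and the one-step "good-to-bad" transition handled by Lemma \ref{lem:lem13}, the only difference being that you spell out the indicator decomposition and the exponent bookkeeping ($\vartheta\bar{\vartheta}\geq\tilde{\vartheta}$ and $\tfrac{1}{2}-\vartheta\geq\tilde{\vartheta}$) that the paper leaves implicit. Note only that Lemma \ref{lem:lem13} is already stated as an unconditional expectation of the product of indicators, so the conditioning argument you flag as the "main technical hurdle" is part of that lemma's proof rather than something you need to re-verify here.
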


\begin{proof}
The proof is by induction,  where the initialization follows by Lemma \ref{lem:lem10}. The induction follows
by using Lemma \ref{lem:lem13} along with the induction hypothesis.  See for instance the proof of \cite[Lemma 35]{ub_grad}.
\end{proof}

\begin{lem}\label{lem:lem15}
Assume (A\ref{ass:coup_bridge}-\ref{ass:mom_cond}).  Then for any $(T,r,C')\in\mathbb{N}\times[1,\infty)\times\mathbb{R}^+$ there exists a $C<\infty$ such that for any $(l,\vartheta,N,\delta,\zeta,n)\in\mathbb{N}\times(0,\tfrac{1}{2})\times\{2,3,\dots\}\times\mathbb{R}^+\times(0,\frac{1}{(1+\delta)})\times\mathbb{N}$ 
we have
$$
\mathbb{E}\left[
\|
H_{l}(\theta_{n-1}^l,\mathbf{Z}^{n,l}) - 
H_{l-1}(\theta_{n-1}^{l-1},\bar{\mathbf{Z}}^{n,l-1})
\|^r
\right]^{1/r}
\leq C(n+1)\Delta_l^{\tilde{\vartheta}_r}
$$
where $\tilde{\vartheta}_r=\min\{\vartheta/r,\tilde{\vartheta}\}$ and $\tilde{\vartheta}$ is as \eqref{eq:exp_def}.
\end{lem}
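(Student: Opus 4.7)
The plan is to decompose the $r$-th moment according to whether the pair $(\mathbf{Z}^{n-1,l},\bar{\mathbf{Z}}^{n-1,l-1},\theta_{n-1}^l,\theta_{n-1}^{l-1})$ lies in the ``good'' set $\mathsf{G}^l:=\overline{\mathsf{B}}^l_{\vartheta,C'}\cap\overline{\mathsf{H}}^l_{\vartheta,C'}$ or in its complement. Writing
$$
\mathbb{E}\big[\|H_l(\theta_{n-1}^l,\mathbf{Z}^{n,l})-H_{l-1}(\theta_{n-1}^{l-1},\bar{\mathbf{Z}}^{n,l-1})\|^r\big]
=\mathbb{E}[\cdots\,\mathbb{I}_{\mathsf{G}^l}]+\mathbb{E}[\cdots\,\mathbb{I}_{(\mathsf{G}^l)^c}],
$$
I would treat each piece with a different tool: Lemma \ref{lem:lem11} on the good set, and Lemma \ref{lem:lem14} combined with a crude deterministic bound on the complement.

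First, for the good-set piece, Lemma \ref{lem:lem11} applies verbatim (indicators evaluated at the $(n-1)$-step state), delivering $\mathbb{E}[\cdots\,\mathbb{I}_{\mathsf{G}^l}]^{1/r}\leq C\,\Delta_l^{(1/2\wedge\vartheta)/r}=C\,\Delta_l^{\vartheta/r}$, where I used the hypothesis $\vartheta<1/2$. Second, on the complement, (A\ref{ass:grad_bound})~(i)–(ii) together with $T<\infty$ give a uniform pointwise bound $\|H_l(\theta,\mathbf{z})-H_{l-1}(\bar\theta,\bar{\mathbf{z}})\|\leq C$. Consequently
$$
\mathbb{E}\big[\|H_l-H_{l-1}\|^r\,\mathbb{I}_{(\mathsf{G}^l)^c}\big]\leq C\,\mathbb{P}\big[(\mathsf{G}^l)^c\big]\leq C\,(n+1)\,\Delta_l^{\tilde{\vartheta}},
$$
by Lemma \ref{lem:lem14}, which covers exactly this event under the stated constraints on $(\vartheta,\delta,\zeta)$.

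Combining both bounds via the elementary $C_r$-type inequality $(a+b)^{1/r}\leq 2^{1-1/r}(a^{1/r}+b^{1/r})$ yields
$$
\mathbb{E}\big[\|H_l-H_{l-1}\|^r\big]^{1/r}\leq C\,\Delta_l^{\vartheta/r}+C\,(n+1)^{1/r}\,\Delta_l^{\tilde{\vartheta}/r}\leq C\,(n+1)\,\Delta_l^{\tilde{\vartheta}_r},
$$
with $\tilde{\vartheta}_r=\min\{\vartheta/r,\tilde{\vartheta}/r\}$ (absorbing $(n+1)^{1/r}\leq n+1$ for $n\geq 0$). This matches the lemma statement up to the placement of the $1/r$ on the exponent $\tilde{\vartheta}$, which I read as a minor notational issue consistent with the definition of $\tilde{\vartheta}_r$.

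The only subtle step is confirming that the indicator one uses in Lemma \ref{lem:lem11} is indeed $\mathbb{I}_{\mathsf{G}^l}$ evaluated at $(\mathbf{Z}^{n-1,l},\bar{\mathbf{Z}}^{n-1,l-1},\theta_{n-1}^l,\theta_{n-1}^{l-1})$; this is by construction, since $\check{K}_{\phi_{n-1}^l,l}$ is applied conditionally on this $\sigma$-field. The rest is bookkeeping: the ``good'' half is where the true rate is harvested, and the ``bad'' half contributes only through its small probability because the integrand is uniformly bounded. I expect the only place where care is required is matching the exponent $\tilde{\vartheta}_r$ to the notation of \eqref{eq:exp_def}, and ensuring that the constraints $(\vartheta,\delta,\zeta)\in(0,1/2)\times\mathbb{R}^+\times(0,1/(1+\delta))$ coincide with the hypotheses of Lemma \ref{lem:lem14}.
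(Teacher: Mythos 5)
Your proof follows exactly the paper's (one-line) argument: split on the good set $\overline{\mathsf{B}}^l_{\vartheta,C'}\cap\overline{\mathsf{H}}^l_{\vartheta,C'}$ evaluated at the $(n-1)$-step state, apply Lemma \ref{lem:lem11} there, and on the complement combine the uniform bound on $H_l-H_{l-1}$ coming from (A\ref{ass:grad_bound}) with the probability bound of Lemma \ref{lem:lem14}. The only friction is the point you already flag: taking the $r$-th root of the bad-set term honestly yields the exponent $\tilde{\vartheta}/r$ rather than $\tilde{\vartheta}$, so the argument delivers $\tilde{\vartheta}_r=\min\{\vartheta,\tilde{\vartheta}\}/r=\tilde{\vartheta}/r$ rather than the stated $\min\{\vartheta/r,\tilde{\vartheta}\}$; this is a defect of the lemma's stated exponent rather than of your reasoning, and it is harmless downstream since the proof of Theorem \ref{theo:main} only requires some positive rate in $\Delta_l$.
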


\begin{proof}
Follows by using Lemmata \ref{lem:lem11},  \ref{lem:lem14}.
\end{proof}

\subsection{Proof of Theorem \ref{theo:main}}\label{app:main}

\begin{proof}
Set $\overline{\theta}_{\star}(p,l)=\mathbb{P}_{\mathtt{P}}(p)\mathbb{P}_{\mathtt{L}}(l)\widehat{\theta}_{\star}$.  Then to prove the result it is enough to show that
$$
\sum_{(l,p)\in\mathbb{N}_0^2}\frac{1}{\mathbb{P}_{\mathtt{P}}(p)\mathbb{P}_{\mathtt{L}}(l)}
\mathbb{E}[\|\overline{\theta}_{\star}(p,l)\|^2] <\infty
$$
see for instance \cite[Theorem 3]{matti}. We first consider the case $(p,l)\in\mathbb{N}$ in the summation,  in which case one
has that 
$$
\overline{\theta}_{\star}(p,l) = \sum_{n=N_{p-1}-1}^{N_p-1}\gamma_n\left\{
H_{l}(\theta_{n-1}^l,\mathbf{z}^{n,l}) - 
H_{l-1}(\theta_{n-1}^{l-1},\bar{\mathbf{z}}^{n,l-1})
\right\}.
$$
Therefore on applying Lemma \ref{lem:lem15} along with the Minkowski inequality we have that
$$
\sum_{(l,p)\in\mathbb{N}^2}\frac{1}{\mathbb{P}_{\mathtt{P}}(p)\mathbb{P}_{\mathtt{L}}(l)}
\mathbb{E}[\|\overline{\theta}_{\star}(p,l)\|^2]
\leq
C\sum_{(l,p)\in\mathbb{N}^2}\frac{\Delta_l^{2\tilde{\vartheta}_2}}{\mathbb{P}_{\mathtt{P}}(p)\mathbb{P}_{\mathtt{L}}(l)} 
\left\{\sum_{n=N_{p-1}-1}^{N_p-1}\gamma_n(n+1)\right\}^2.
$$
Next when $l\in\mathbb{N}$ and $p=0$ we have
$$
\overline{\theta}_{\star}(p,l) = \sum_{n=1}^{N_0-1}\gamma_n\left\{
H_{l}(\theta_{n-1}^l,\mathbf{z}^{n,l}) - 
H_{l-1}(\theta_{n-1}^{l-1},\bar{\mathbf{z}}^{n,l-1}).
\right\}
$$
Again,  using Lemma \ref{lem:lem15} along with Minkowski,  it follows that
$$
\sum_{l\in\mathbb{N}}\frac{1}{\mathbb{P}_{\mathtt{P}}(0)\mathbb{P}_{\mathtt{L}}(l)}
\mathbb{E}[\|\overline{\theta}_{\star}(0,l)\|^2]
\leq
C\sum_{l\in\mathbb{N}}\frac{\Delta_l^{2\tilde{\vartheta}_2}}{\mathbb{P}_{\mathtt{P}}(0)\mathbb{P}_{\mathtt{L}}(l)} 
\left\{\sum_{n=1}^{N_0-1}\gamma_n(n+1)\right\}^2.
$$
Therefore we have proved that
$$
\sum_{(l,p)\in\mathbb{N}_0^2}\frac{1}{\mathbb{P}_{\mathtt{P}}(p)\mathbb{P}_{\mathtt{L}}(l)}
\mathbb{E}[\|\overline{\theta}_{\star}(p,l)\|^2] \leq
$$
$$
C
\left\{
\frac{1}{\mathbb{P}_{\mathtt{P}}(0)\mathbb{P}_{\mathtt{L}}(0)}
\mathbb{E}[\|\overline{\theta}_{\star}(0,0)\|^2] +
\sum_{l\in\mathbb{N}}\frac{\Delta_l^{2\tilde{\vartheta}_2}}{\mathbb{P}_{\mathtt{P}}(0)\mathbb{P}_{\mathtt{L}}(l)} 
\left\{\sum_{n=1}^{N_0-1}\gamma_n(n+1)\right\}^2 +
\sum_{(l,p)\in\mathbb{N}^2}\frac{\Delta_l^{2\tilde{\vartheta}_2}}{\mathbb{P}_{\mathtt{P}}(p)\mathbb{P}_{\mathtt{L}}(l)} 
\left\{\sum_{n=N_{p-1}-1}^{N_p-1}\gamma_n(n+1)\right\}^2
\right\}.
$$
If one sets $N_p=p+1$,  $\gamma_n=(n+1)^{-(2+\kappa)}$,  $\kappa>0$, $\mathbb{P}_{\mathtt{P}}(p)\propto (p+1)^{-(1+\kappa)}$ and $\mathbb{P}_{\mathtt{L}}(l)\propto\Delta_l^{\tilde{\vartheta}_2}$ then one has the desired result.
\end{proof}

\end{document}